\newcommand{\operator}[1]{\ensuremath{\hat{#1}}}
\newcommand{\voperator}[1]{\ensuremath{\mathds{#1}}}
\newcommand{\manifold}[1]{\ensuremath{\mathcal{#1}}}
\newcommand{\group}[1]{\ensuremath{\mathsf{#1}}}
\newcommand{\vectorspace}[1]{\ensuremath{\mathbb{#1}}}
\DeclareMathOperator*{\tr}{tr}
\DeclareMathOperator*{\varmin}{min}
\DeclareMathOperator{\order}{\mathscr{O}}
\DeclareMathOperator{\rank}{rank}
\DeclareMathOperator{\End}{\mathbb{L}}
\newcommand{\id}{\mathrm{id}}
\newcommand{\one}{\mathds{1}}
\newcommand{\rmd}{\ensuremath{\mathrm{d}}}
\newcommand{\rmi}{\ensuremath{\mathrm{i}}}
\newcommand{\rme}{\ensuremath{\mathrm{e}}}
\newcommand{\defis}{\ensuremath{\triangleq}}
\newcommand{\opartial}{\ensuremath{\overline{\partial}}}
\newcommand{\vz}{\ensuremath{\bm{z}}}
\newcommand{\ovz}{\ensuremath{\overline{\bm{z}}}}
\newcommand{\oz}{\ensuremath{\overline{z}}}
\newcommand{\vw}{\ensuremath{\bm{w}}}
\newcommand{\ovw}{\ensuremath{\overline{\bm{w}}}}
\newcommand{\vv}{\ensuremath{\bm{v}}}
\newcommand{\rket}[1]{\ensuremath{|#1)}}
\newcommand{\rbra}[1]{\ensuremath{(#1|}}
\newcommand{\rbraket}[1]{\ensuremath{(#1)}}
\newcommand{\hilbert}{\ensuremath{\vectorspace{H}}}
\newcommand{\Tplane}{\ensuremath{\vectorspace{T}}}
\newcommand{\varM}{\ensuremath{\manifold{M}}}
\newtheorem{theorem}{Theorem}
\newtheorem{corollary}[theorem]{Corollary}
\newtheorem{lemma}[theorem]{Lemma}
\theoremstyle{remark}
\theoremstyle{definition}
\newtheorem{definition}{Definition}
\begin{document}

\title[Geometry of Matrix Product States]{Geometry of Matrix Product States:\\metric, parallel transport and curvature}
\author{Jutho Haegeman}
\email{jutho.haegeman@gmail.com}
\affiliation{Vienna Center for Quantum Science and Technology, Faculty of Physics, University of Vienna, Austria}
\affiliation{Faculty of Physics and Astronomy, University of Ghent, Krijgslaan 281 S9, 9000 Gent, Belgium}
\author{Micha\"{e}l Mari\"{e}n}
\affiliation{Faculty of Physics and Astronomy, University of Ghent, Krijgslaan 281 S9, 9000 Gent, Belgium}
\author{Tobias J Osborne}
\affiliation{Leibniz Universit\"{a}t Hannover, Institute of Theoretical Physics, Appelstrasse 2, D-30167 Hannover, Germany}
\affiliation{Leibniz Universit\"{a}t Hannover, Riemann Center for Geometry and Physics, Appelstrasse 2, D-30167 Hannover, Germany}
\author{Frank Verstraete}
\affiliation{Faculty of Physics and Astronomy, University of Ghent, Krijgslaan 281 S9, 9000 Gent, Belgium}
\affiliation{Vienna Center for Quantum Science and Technology, Faculty of Physics, University of Vienna, Austria}

\begin{abstract}
We study the geometric properties of the manifold of states described as (uniform) matrix product states. Due to the parameter redundancy in the matrix product state representation, matrix product states have the mathematical structure of a (principal) fiber bundle. The total space or bundle space corresponds to the parameter space, \textit{i.e.}\ the space of tensors associated to every physical site. The base manifold is embedded in Hilbert space and can be given the structure of a K\"{a}hler manifold by inducing the Hilbert space metric. Our main interest is in the states living in the tangent space to the base manifold, which have recently been shown to be interesting in relation to time dependence and elementary excitations. By lifting these tangent vectors to the (tangent space) of the bundle space using a well-chosen prescription (a principal bundle connection), we can define and efficiently compute an inverse metric, and introduce differential geometric concepts such as parallel transport (related to the Levi-Civita connection) and the Riemann curvature tensor.

\end{abstract}

\maketitle

\tableofcontents
\clearpage

\section{Introduction}
The most powerful method for studying one-dimensional gapped quantum spin systems is without doubt the density matrix renormalization group \cite{1992PhRvL..69.2863W,1993PhRvB..4810345W}. The density matrix renormalization group can be interpreted as a variational method that selects the best approximation to the true ground state of the system within the set of states known as matrix product states \cite{1995PhRvL..75.3537O,1997PhRvB..55.2164R}. The history of matrix product states dates back to before the development of the density matrix renormalization group, when they were referred to as valence bond states \cite{1987PhRvL..59..799A,1988CMaPh.115..477A,Klumper:1991fk,Klumper:1992uq} or finitely correlated states \cite{1991JPhA...24L.185F,1992CMaPh.144..443F}. In fact, related constructions were already developed in the study of classical statistical mechanics a few decades ago \cite{Kramers:1941kx,baxter:650}.

The development of matrix product states benefited greatly from insights regarding entanglement that were being formulated within the field of quantum information theory. This resulted in new algorithms for studying systems with periodic boundary conditions\cite{2004PhRvL..93v7205V}, time evolution\cite{2004PhRvL..93d0502V,2004PhRvL..93g6401W,2004JSMTE..04..005D,2006PhRvL..97o7202O} and for systems at finite temperature or with dissipative dynamics\cite{2004PhRvL..93t7204V,2004PhRvL..93t7205Z}. In addition, the basic structure of the matrix product state was generalized to more arbitrary networks of contracted tensors, in order to cope with different settings. Specific examples include the multiscale entanglement renormalization ansatz\cite{2007PhRvL..99v0405V,2008PhRvL.101k0501V} for critical systems, or the projected entangled-pair states for higher-dimensional systems\cite{2004cond.mat..7066V,2007PhRvA..75c3605M,2008AdPhy..57..143V}.

So far, most studies focussed on the physical properties of these states. The mathematical and geometric properties of the whole set of matrix product states with a fixed bond dimension have so far received less attention. Recently, however, new algorithms for simulating time-evolution\cite{2011arXiv1103.0936H} and for studying excitation spectra\cite{2012PhRvB..85c5130P,2012PhRvB..85j0408H} were developed that inherently depend on the tangent space to the set of matrix product states. However, such an approach is only justified if we can identify this set as a smooth manifold embedded in the total Hilbert space of quantum states. The purpose of this paper is to make this identification and elaborate on many of the details behind the construction lying at the heart of the aforementioned algorithms. We do not discuss the algorithms themselves. The current paper restricts to the identification of the matrix product state construction as a principal fiber bundle, followed by a discussion of the differential geometric properties of its base manifold (the set of states in Hilbert space). A more detailed presentation of the aforementioned algorithms including the relationship between them is given elsewhere\footnote{J.~Haegeman \textit{et al.}, \textit{in preparation}}. 

The first section of this paper introduces standard concepts from differential geometry applied to the case of a variational manifold, \textit{i.e.} a set of quantum states $\ket{\Psi(\vz)}$ ---depending on some complex variational parameters $z^i$, $i=1,\ldots,m$--- that is assumed to form a smooth manifold embedded in the total Hilbert space $\hilbert$ of the problem. We also discuss the geometric properties of Hilbert space itself, and give a brief summary of the theory of complex manifolds. In the second section we study generic matrix product states with open boundary conditions. We discuss the conditions that need to be imposed in order to identify the matrix product state representation as a principal fiber bundle, from which we can derive that the resulting set of physical states is a smooth manifold, more specifically a K\"{a}hler manifold, using standard tools from (complex) differential geometry. The same construction is repeated in the third section for the case of uniform matrix product states with periodic boundary conditions, which requires some non-trivial modifications. These are the main results of this paper. Having a differential structure, we then go on in both sections by introducing tangent vectors to the manifold of physical states. The parameterization thereof requires the introduction of a principal bundle connection, which can be fine-tuned in order to simplify the induced Hilbert space metric. This was one of the main results used in the aforementioned algorithms. Finally, we complete the geometric description of the manifold of matrix product states by also deriving the Levi-Civita connection and Riemann curvature tensor for this manifold. 

Identical tensor network decompositions for higher-order tensors have been independently developed in the field of applied mathematics. The main interest there is on the matrix product structure (tensor train decomposition)\cite{Oseledets:2009fk} and on the tree-tensor structure (hierarchical Tucker format)\cite{Hackbusch:2009uq}. Only this year was it proven that the set of states defined by these formats does indeed constitute a smooth manifold\cite{Holtz:2012kx,Uschmajew:2012vn} using techniques similar to ours. The present paper deviates from previous results as quantum states live in a complex rather than a real vector space. Secondly, we allow for matrix product states with periodic boundary conditions. In general, loops in a tensor network have to be treated carefully because the resulting set of states might not be closed\cite{2011arXiv1105.4449L}. Therefore, they have not been considered in aforementioned papers. In addition, we have a natural geometry induced from the Hilbert space in which these manifolds are embedded, which allows us to define a metric, a Levi-Civita connection and a Riemann curvature tensor.

Finally, we would also like to draw attention to the work of \citet{Sidles:2009fk}, where variational classes of quantum states multilinear in the variational parameters were also recognized as K\"{a}hler manifolds. Generic matrix product states ---and in fact more general tensor networks--- do indeed fulfill this multilinearity property. However, the translation-invariant uniform matrix product states of  Section~\ref{s:umps} do not. \citet{Sidles:2009fk} does not discuss the principal fiber bundle structure and the corresponding principal bundle connection, which is considered a key result of the current manuscript. However, Ref.~\onlinecite{Sidles:2009fk} discusses in great detail the curvature properties of these manifolds and the physical relevance thereof, in particular in relation to the error resulting from approximating arbitrary quantum states within the variational set. As such, \citet{Sidles:2009fk} can be considered complementary to our manuscript, and we only provide a brief discussion of the curvature properties for the case of uniform matrix product states in Subsection~\ref{ss:umps:curvature}.

\section{Variational manifolds as K\"{a}hler manifolds}
The state of an isolated quantum system is described by a vector $\ket{\Psi}$ living in a Hilbert space $\hilbert$. If the dimension of the Hilbert space is too large to be handled numerically, one often resorts to variational classes of quantum states, \textit{i.e.} subsets of $\hilbert$ that hopefully contain a good approximation to the physically interesting states of the problem. While it is possible to construct variational classes that are subspaces, there are many interesting classes for which the vector space structure is lost in the variational subset. However, the way in which the variational class is constructed often suggests that the variational subset can still be given the structure of a smooth manifold to which we can induce the geometric properties of the underlying Hilbert space. Under general conditions which are discussed in the Subsection~\ref{ss:var:varmanifold}, the subset will be a complex manifold, or to be even more specific, a K\"{a}hler manifold, and a beautiful new structure becomes available to study the approximated quantum problem within the variational manifold, which is further explored in the third and fourth subsection. As complex differential geometry might not be part of the standard toolbox of our target audience, we provide a very concise introduction to the subject in Subsection~\ref{ss:var:complexgeometry}, by reviewing the minimal set of definitions required to understand the remainder of this paper.

\subsection{Crash course in complex geometry}
\label{ss:var:complexgeometry}
This subsection mainly serves to introduce notation, and we refer to many excellent references for a more detailed treatment of the theory of complex manifolds and proofs for the corresponding theorems and lemmas \cite{Fritzsche:2002kx,Nakahara:2003ys,Huybrechts:2004vn}. In particular, we follow the notation convention of Ref.~\onlinecite{Nakahara:2003ys}. 
\begin{definition}[Complex manifold] A topological space $\varM$ is a complex manifold of complex dimension $m$ if 
\begin{enumerate}[(i)]
\item $\varM$ is provided with an atlas $\{(\manifold{U}_i,\phi_i)\}$, \textit{i.e.}\ a family of charts $(\manifold{U}_i,\phi_i)$, where $\{\manifold{U}_i\}$ is a family of open sets that covers $\varM$ and $\phi_i$ is a homeomorphism from $\manifold{U}_i$ to an open subset of $\mathbb{C}^m$. 
\item Given $\manifold{U}_i$ and $\manifold{U}_j$ such that $\manifold{U}_i \cap \manifold{U}_j \neq \emptyset$, the transition map $\psi_{ij}=\phi_j \circ \phi_i^{-1}$ from $\phi_i(\manifold{U}_i \cap \manifold{U}_j)$ to $\phi_j(\manifold{U}_i \cap \manifold{U}_j)$ is holomorphic, \textit{i.e.} the limit
\begin{displaymath}
\lim_{\vz \to \vz_0} \frac{\psi_{ij}(\vz)-\psi_{ij}(\vz_0)}{\vz-\vz_0}
\end{displaymath}
exists for every $\vz_0\in \phi_i(\manifold{U}_i \cap \manifold{U}_j)$.
\end{enumerate}
\end{definition}
Henceforth, we always refer to the complex dimension of any manifold we encounter and denote it as $\dim \varM\defis \dim_{\mathbb{C}} \varM$. The corresponding real dimension is twice the complex dimension and is denoted as $\dim_{\mathbb{R}}\varM=2\dim_{\mathbb{C}} \varM$. Clearly, the most elementary complex manifold is $\mathbb{C}^m$ itself.

\begin{definition}[Holomorphic map] Let $f:\varM\to\manifold{N}$ be a continuous map between two complex manifolds $\varM$ and $\mathcal{N}$ with respective (complex) dimensions $m$ and $n$ and respective atlases $\{(\manifold{U}_i,\phi_i)\}$ and $\{(\mathcal{V}_j,\psi_j)\}$. Define the open sets $\manifold{U}_{i,j}=\manifold{U}_i \cap f^{-1}(\mathcal{V}_j)$ with $f^{-1}(\mathcal{V}_j)$ the preimage of $V_j$. The map $f$ is called \emph{holomorphic} if the $n$ components of $\psi_j \circ f \circ \phi_i^{-1}:\mathbb{C}^m \to \mathbb{C}^n$ are holomorphic in any of the $m$ variables in the open subset $\phi_i(\manifold{U}_{i,j})\subset \mathbb{C}^m$. 
\end{definition}
If $f$ is injective and surjective, then $f$ is also a diffeomorphism and it is called a \emph{biholomorphism} because its inverse $f^{-1}:\manifold{N}\to\varM$ is also holomorphic. Correspondingly, the complex manifolds $\manifold{M}$ and $\manifold{N}$ are said to be \emph{biholomorphic}.

For any smooth manifold $\varM$, the tangent space $T_p \varM$ at a point $p\in\varM$ is the vector space of all directional derivatives of functions $f:\varM\to \mathbb{R}$ at the point $p$. For a complex manifold $\varM$ with $\dim \varM=m$, $T_p \varM$ is a real vector space with real dimension $2m$ and a basis is given by the partial derivatives
\begin{displaymath}
\left\{ \left(\frac{\partial\ }{\partial x^1}\right)_p,\ldots,\left(\frac{\partial\ }{\partial x^m}\right)_p;\left(\frac{\partial\ }{\partial y^1}\right)_p,\ldots,\left(\frac{\partial\ }{\partial y^m}\right)_p\right\}.
\end{displaymath}
The dual space or cotangent space $T_p^\ast \varM$ is spanned by the basis vectors
\begin{displaymath}
\left\{\left(\rmd x^1\right)_p,\ldots,\left(\rmd x^m\right)_p; \left(\rmd y^1\right)_p, \ldots, \left(\rmd y^m\right)_p\right\},
\end{displaymath}
which satisfy
\begin{equation}
\begin{split}
\left\langle \left(\rmd x^i\right)_p, \left(\frac{\partial\ }{\partial x^j}\right)_p\right\rangle=\delta^i_j,\qquad&\left\langle \left(\rmd x^i\right)_p, \left(\frac{\partial\ }{\partial y^j}\right)_p\right\rangle=0,\\
\left\langle \left(\rmd y^i\right)_p, \left(\frac{\partial\ }{\partial x^j}\right)_p\right\rangle=0,\qquad&\left\langle \left(\rmd y^i\right)_p, \left(\frac{\partial\ }{\partial y^j}\right)_p\right\rangle=\delta^i_j
\end{split}
\end{equation}
An \emph{almost complex structure} $J_p$ is introduced as a smooth (real) tensor field $J_p:T_p\varM\to T_p\varM$ via the prescription
\begin{align}
J_p=\left(\rmd x^i\right)_p \otimes \left(\frac{\partial\ }{\partial y^i}\right)_p - \left(\rmd y^i\right)_p \otimes \left(\frac{\partial\ }{\partial x^i}\right)_p.
\end{align}
This definition does not depend on the chosen coordinate map $\phi$ for complex manifolds $\varM$. Note that $J_p^2=-\id_{T_p\varM}$.

By also considering complex-valued functions $f\colon\varM\to\mathbb{C}$, it becomes useful to study the \emph{complexified tangent space}.
The complexified vector space $T_p\varM^{\mathbb{C}}$ is obtained by extending the vector space to all complex linear combinations of the basis vectors (see \cite{Roman:2005zr} for a more rigorous definition). It is a complex vector space with complex dimension $2m$. A new basis $\{(\partial_j)_p,(\opartial_{\overline{\jmath}})_p\}_{i=1,\ldots,m}$ for $T_p\varM^{\mathbb{C}}$ is obtained by defining
\begin{align}
\partial_j&\defis \frac{\partial\ }{\partial z^j}=\frac{1}{2}\left(\frac{\partial\ }{\partial x^j}-\rmi \frac{\partial\ }{\partial y^j}\right),&\opartial_{\overline{\jmath}}\defis \frac{\partial\ }{\partial \oz^{\overline{\jmath}}}=\frac{1}{2}\left(\frac{\partial\ }{\partial x^j}+\rmi \frac{\partial\ }{\partial y^j}\right).
\end{align}
The use of barrred indices for the complex conjugate variables will become clear when introducing a metric at the end of this subsection. Analogously, a complexified dual space $(T_p\varM^{\mathbb{C}})^{\ast} \equiv (T_p^{\ast} \varM)^{\mathbb{C}}$ is introduced and the corresponding basis $\{ (\rmd z^j)_p; (\overline{\rmd z}^{\overline{\jmath}})_p\}$ is defined by
\begin{align}
\rmd z^j &= \rmd x^j +\rmi \rmd y^j,&\overline{\rmd z}^{\overline{\jmath}}&=\rmd x^j-\rmi \rmd y^j.
\end{align}
In these new bases, the almost complex structure $J$ is given as the tensor field
\begin{equation}
J_p=\rmi \left(\rmd z^j\right)_p \otimes \left(\partial_j\right)_p-\rmi \left(\overline{\rmd z}^{\overline{\jmath}}\right)_p \otimes \left(\opartial_{\overline{\jmath}}\right)_p,
\end{equation}
 so that its matrix representation is diagonal. The complexified tangent space $T_p\varM^{\mathbb{C}}$ is decomposed into two sectors $T_p\varM^{\pm}$ corresponding to the eigenspaces of $J_p$ with eigenvalue $\pm \rmi$. The \emph{holomorphic tangent space} $T_p\varM^+$ is a complex vector space with complex dimension $m$ and is thus isomorphic to $\mathbb{C}^m$. It is spanned by $\{(\partial_j)_p\}_{j=1,\ldots,m}$ and corresponds to those directional derivatives that annihilate anti-holomorphic functions $f:\varM\to\mathbb{C}$. Similarly, the \emph{anti-holomorphic tangent space} $T_p\varM^-$ is spanned by $\{(\opartial_{\overline{\jmath}})_p\}_{j=1,\ldots,m}$ and annihilates holomorphic functions $f:\varM\to \mathbb{C}$. By extending these definitions to the whole tangent bundle $T\varM$, the holomorphic and anti-holomorphic tangent bundles $T\varM^{\pm}$ are obtained so that $T\varM^{\mathbb{C}}=T\varM^{+}\oplus T\varM^{-}$.

The following two lemmas are used throughout this paper.
\begin{lemma} Let $\varM$ be a complex manifold with $\dim \varM=m$ and with atlas $\{(\manifold{U}_i,\phi_i)\}$. Any open subset $\manifold{N}$ of $\varM$ is a complex manifold with $\dim \manifold{N}=\dim \varM$ and atlas $\{(\manifold{U}_i\cap \manifold{N}, \phi_i)\}$. In addition, the tangent space $T_p \manifold{N}$, complexified tangent space $T_p\manifold{N}^{\mathbb{C}}$ and (anti-)holomorphic tangent space $T_p\manifold{N}^{\pm}$ equal the corresponding tangent spaces $T_p\varM$, $T_p\varM^{\mathbb{C}}$ and $T_p\varM^{\pm}$ for any $p\in\manifold{N}$. 
\label{lemma:var:opensubsetmanifold}
\end{lemma}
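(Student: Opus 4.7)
The proof plan is to verify directly that the proposed atlas on $\manifold{N}$ satisfies the two conditions of the definition of a complex manifold, and then observe that tangent spaces are purely local objects, so they must coincide with those of $\varM$ at any $p\in\manifold{N}$.

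First I would check that $\{(\manifold{U}_i\cap\manifold{N},\phi_i|_{\manifold{U}_i\cap\manifold{N}})\}$ is a valid atlas. Since $\manifold{N}$ is open in $\varM$ and each $\manifold{U}_i$ is open, their intersection is open in $\varM$, hence open in $\manifold{N}$ with the subspace topology, and the family covers $\manifold{N}$ because the $\{\manifold{U}_i\}$ cover $\varM\supseteq\manifold{N}$. Each restricted chart $\phi_i|_{\manifold{U}_i\cap\manifold{N}}$ is a homeomorphism onto its image, which is the open subset $\phi_i(\manifold{U}_i\cap\manifold{N})\subset\mathbb{C}^m$ (open because $\phi_i$ is a homeomorphism and $\manifold{U}_i\cap\manifold{N}$ is open in $\manifold{U}_i$). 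The transition maps between restricted charts are simply the restrictions of the original transition maps $\psi_{ij}=\phi_j\circ\phi_i^{-1}$ to the appropriate open sets, and a restriction of a holomorphic map to an open subset of its domain is holomorphic. This establishes that $\manifold{N}$ is a complex manifold with $\dim\manifold{N}=m=\dim\varM$.

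Next, for the tangent-space claim, I would exploit locality of differentiation. The tangent space $T_p\varM$ consists of directional derivatives of functions $f:\varM\to\mathbb{R}$ at $p$; a directional derivative depends only on the germ of $f$ at $p$, i.e.\ only on the behaviour of $f$ in an arbitrarily small open neighbourhood of $p$. Since $\manifold{N}$ is open in $\varM$ and contains $p$, it contains such a neighbourhood, and any smooth $g:\manifold{N}\to\mathbb{R}$ can be viewed (on that neighbourhood) as the restriction of a locally-defined smooth function on $\varM$, and vice versa. Concretely, choose a chart $(\manifold{U}_i,\phi_i)$ with $p\in\manifold{U}_i$; then $(\manifold{U}_i\cap\manifold{N},\phi_i|_{\manifold{U}_i\cap\manifold{N}})$ is the corresponding chart of $\manifold{N}$ at $p$ and produces the same coordinate basis $\{(\partial/\partial x^j)_p,(\partial/\partial y^j)_p\}$ for $T_p\manifold{N}$ as for $T_p\varM$. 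Hence the two tangent spaces coincide.

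Finally, the complexified tangent space $T_p\manifold{N}^{\mathbb{C}}$ is obtained from $T_p\manifold{N}$ by complexification, which is the same operation applied to the same space, so $T_p\manifold{N}^{\mathbb{C}}=T_p\varM^{\mathbb{C}}$. The splitting $T_p\varM^{\mathbb{C}}=T_p\varM^+\oplus T_p\varM^-$ is defined as the eigenspace decomposition of the almost complex structure $J_p$, and since $J_p$ has a coordinate expression depending only on the chart $\phi_i$ at $p$ (which is shared by $\varM$ and $\manifold{N}$), the operator and hence its $\pm\rmi$ eigenspaces agree. Therefore $T_p\manifold{N}^{\pm}=T_p\varM^{\pm}$ as claimed.

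The whole argument is essentially a bookkeeping exercise; the only conceptual point to keep in mind is that tangent vectors, the almost complex structure, and the holomorphic/anti-holomorphic decomposition are all defined through germs at $p$ and through a single chart containing $p$, both of which are intrinsically local and thus insensitive to whether one works in $\varM$ or in the open subset $\manifold{N}$.
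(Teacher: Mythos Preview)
Your argument is correct and is the standard textbook verification. The paper does not actually supply a proof of this lemma: it is stated in the introductory ``crash course'' subsection as a background fact, with an explicit remark that proofs of such results are deferred to the cited references on complex geometry. So there is nothing to compare against beyond noting that your proof fills in exactly the routine details the paper omits.
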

\begin{lemma} The Cartesian product $\varM \times \manifold{N}$ of two complex manifolds $\varM$ and $\manifold{N}$ with $\dim \varM = m$ and $\dim \manifold{N}=n$, is a complex manifold with $\dim \varM\times\manifold{N}=m+n$.
\label{lemma:var:productmanifold}
\end{lemma}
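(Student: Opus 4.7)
The plan is to construct an explicit atlas on $\varM\times\manifold{N}$ from the atlases on the two factors and to verify the two defining properties of a complex manifold directly from the definition.

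First I would equip $\varM\times\manifold{N}$ with the product topology, under which the Cartesian products of open sets are open and the projections are continuous. Given the atlases $\{(\manifold{U}_i,\phi_i)\}$ of $\varM$ and $\{(\manifold{V}_j,\psi_j)\}$ of $\manifold{N}$, I would propose the atlas $\{(\manifold{U}_i\times\manifold{V}_j,\phi_i\times\psi_j)\}$, where $\phi_i\times\psi_j:\manifold{U}_i\times\manifold{V}_j\to\mathbb{C}^m\times\mathbb{C}^n\cong\mathbb{C}^{m+n}$ sends $(p,q)\mapsto(\phi_i(p),\psi_j(q))$. The product of two homeomorphisms onto open subsets of $\mathbb{C}^m$ and $\mathbb{C}^n$ is a homeomorphism onto the corresponding open subset of $\mathbb{C}^{m+n}$, and since $\{\manifold{U}_i\}$ covers $\varM$ and $\{\manifold{V}_j\}$ covers $\manifold{N}$, the collection $\{\manifold{U}_i\times\manifold{V}_j\}$ covers $\varM\times\manifold{N}$. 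This establishes condition (i).

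For condition (ii) I would take two overlapping charts $(\manifold{U}_i\times\manifold{V}_j,\phi_i\times\psi_j)$ and $(\manifold{U}_{i'}\times\manifold{V}_{j'},\phi_{i'}\times\psi_{j'})$ with nonempty intersection, and observe that the transition map splits as a Cartesian product:
\begin{displaymath}
(\phi_{i'}\times\psi_{j'})\circ(\phi_i\times\psi_j)^{-1} = (\phi_{i'}\circ\phi_i^{-1})\times(\psi_{j'}\circ\psi_j^{-1}).
\end{displaymath}
Each factor is holomorphic by the assumption that $\varM$ and $\manifold{N}$ are complex manifolds. Holomorphicity of the product in the sense required by the definition then follows from the fact that a map into $\mathbb{C}^{m+n}$ is holomorphic iff each component is holomorphic in each of the $m+n$ complex coordinates separately; the first $m$ components depend holomorphically on the first $m$ variables and are constant in the remaining $n$, and vice versa for the last $n$ components.

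The only step requiring a bit of care is the identification $\mathbb{C}^m\times\mathbb{C}^n\cong\mathbb{C}^{m+n}$ as complex vector spaces together with the observation that a separately holomorphic map built from two holomorphic blocks is jointly holomorphic; this is elementary and follows directly from the componentwise definition of holomorphy used in the paper (or from Hartogs' theorem if one prefers the separate holomorphy characterization). Once this is in place, the dimension count $\dim(\varM\times\manifold{N})=m+n$ is automatic from the target $\mathbb{C}^{m+n}$ of the product chart.
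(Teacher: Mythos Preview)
Your proof is correct and is the standard argument. The paper itself does not prove this lemma; it is stated without proof as part of the introductory review of complex geometry, with an explicit disclaimer that proofs of such standard facts are deferred to the cited textbooks (Fritzsche--Grauert, Nakahara, Huybrechts). Your construction of the product atlas and verification that the transition maps factor as products of holomorphic maps is exactly what one finds in those references.
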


Next, we can endow a complex manifold with a metric. 
\begin{definition}[Hermitian metric]
Let the complex manifold $\varM$ have a Riemannian metric $G$, \textit{i.e.}\ for every $p\in\varM$, $G_p$ is a symmetric positive-definite bilinear form on the real vector space $T_p \varM$. Due to its linearity, it can trivially be extended to $T_p\varM^{\mathbb{C}}$. If $G_p(J_p X,J_p Y)=G_p(X,Y)$ for every $X,Y\in T_p\varM^{\mathbb{C}}$ and every $p\in\varM$, then $G$ is called a \emph{Hermitian metric} and $\varM$ is a \emph{Hermitian manifold}.
\end{definition}
Having a set of complex coordinates $\vz\in\mathbb{C}^m$ for $\varM$, it can easily be checked that a Hermitian metric $G$ is of the form
\begin{equation}
G=g_{i\overline{\jmath}}(\ovz,\vz) \rmd z^i \otimes \overline{\rmd z}^{\overline{\jmath}}+g_{\overline{\imath}j}(\ovz,\vz) \overline{\rmd z}^{\overline{\imath}} \otimes \rmd z^i
\end{equation}
with $g_{i\overline{\jmath}}=g_{\overline{\jmath}i}$ due to the symmetry property of the Riemannian metric and $g_{ij}=0$, $g_{\overline{\imath}\overline{\jmath}}=0$. In addition, we have that $g_{i\overline{\jmath}}=\overline{g_{j\overline{\imath}}}$. A Hermitian metric $G_p$ at $p\in\varM$ defines an inner product ---a positive-definite Hermitian form--- $g_p$ on $T_p\varM^+$ by
\begin{equation}
g_p\colon T_p\varM^+\times T_p\varM^+\to \mathbb{C}\colon (X,Y)\mapsto g_p(X,Y)=G_p(X,\overline{Y}).
\end{equation}
Note that $\overline{Y}\in T_p\varM^{-}$. If $(\partial_i)_p$ is chosen as basis for $T_p\varM^{+}$, the matrix notation for $g_p$ is $g_{\overline{\imath},j}$, \textit{i.e.}\ $g_p(X,Y)=\overline{Y}^{\overline{\imath}} g_{\overline{\imath}j} X^j$. Vice versa, any inner product $g_p$ on $T_p\varM^{+}$ defines a Hermitian metric $G$. Note that standard notation is $g$ for the full Riemannian metric of $T_p\varM^{\mathbb{C}}$ and $h$ for the Hermitian inner product on $T_p\varM^+$. We do not adhere to this convention because $h$ typically represents a Hamiltonian (density) in the quantum literature, which would result in a source of confusion. In addition, as explained in the next subsection, we are mainly interested in the elements of $T_p\varM^{+}$ throughout the remainder of this paper, and  simply refer to $g$ as representing the metric. As noted, $g$ completely determines the full Hermitian metric $G$.

If $\varM$ is a Hermitian manifold with Hermitian metric $G$, the \emph{K\"{a}hler form} of $G$ is defined as the $2$-form $\Omega$ with prescription
\begin{equation}
\Omega_p\colon T_p\varM^{\mathbb{C}}\times T_p\varM^{\mathbb{C}}\to \mathbb{C}\colon (X,Y)\mapsto \Omega_p(X,Y)=G_p(J_p X,Y).
\end{equation}
In coordinates $\vz\in\mathbb{C}^m$, the $2$-form $\Omega$ is given by $\Omega=\rmi g_{i,\overline{\jmath}}\rmd z^i\wedge \overline{\rmd z}^{\overline{\jmath}}$ and can seen to be real ($\overline{\Omega}=\Omega$). An interesting subclass of Hermitian manifolds are the so-called \emph{K\"{a}hler manifolds}.
\begin{definition}[K\"{a}hler manifold]
Let $\varM$ be a Hermitian manifold with Hermitian metric $G$ and corresponding K\"{a}hler form $\Omega$. The manifold $\varM$ is a K\"{a}hler manifold if $\Omega$ is closed, \textit{i.e.}\ $\rmd \Omega=0$. The corresponding metric is called a K\"{a}hler metric.
\end{definition}
In coordinates $\vz\in\mathbb{C}^m$, a K\"{a}hler manifold satisfies $\partial_i g_{j\overline{k}}=0$ and $\opartial_{\overline{\imath}} g_{j\overline{k}}=0$. Locally, this implies the existence of a \emph{K\"{a}hler potential} $K(\ovz,\vz)$ such that
\begin{equation}
g_{i\overline{\jmath}}(\ovz,\vz)=\partial_i \opartial_{\overline{\jmath}} K(\ovz,\vz).
\end{equation}
By also defining the inverse metric such that
\begin{align}
g^{i\overline{\jmath}}(\ovz,\vz)g_{\overline{\jmath}k}(\ovz,\vz)&=\delta^i_k,&
g_{\overline{\imath}j}(\ovz,\vz) g^{j\overline{k}}(\ovz,\vz)&=\delta^{\overline{k}}_{\overline{\imath}},&
\end{align}
one can easily derive the Levi-Civita connection. The only non-vanishing components of the connection are given by
\begin{equation}
\begin{split}
\Gamma_{ij}^{\ \;k}(\ovz,\vz)=g^{k\overline{m}}(\ovz,\vz) \partial_{i} g_{\overline{m}j}(\ovz,\vz)=g^{k\overline{m}}(\ovz,\vz)\opartial_{\overline{m}}\partial_{i}\partial_{j} K(\ovz,\vz),\\
\Gamma_{\overline{\imath} \overline{\jmath}}^{\ \;\overline{k}}(\ovz,\vz)=g^{\overline{k}m}(\ovz,\vz) \opartial_{\overline{\imath}}g_{\overline{\jmath}m}(\ovz,\vz)=g^{m \overline{k}}(\ovz,\vz)\partial_{m}\opartial_{\overline{\imath}}\opartial_{\overline{\jmath}}K(\ovz,\vz).
\end{split}
\end{equation}
Because the Levi-Civita connection has no non-zero mixed components, the Riemannian geometry of K\"{a}hler manifolds is compatible with the complex structure, \textit{i.e.} holomorphic tangent vectors $X\in\mathbb{T}\varM^+$ are parallel transported into holomorphic tangent vectors. Finally, the only non-zero components of the Riemann tensor are
\begin{equation}
\begin{split}
R_{i\overline{\jmath} k\overline{l}}&=g_{\overline{l} m}(\ovz,\vz) \opartial_{\overline{\jmath}}\Gamma_{ik}^{\ \;m}(\ovz,\vz)\\
&=\opartial_{\overline{\jmath}}\partial_{i}\opartial_{\overline{l}}\partial_{k}K(\ovz,\vz)-(\opartial_{\overline{\jmath}}\opartial_{\overline{l}}\partial_{m} K(\ovz,\vz)) g^{m\overline{n}}(\ovz,\vz)(\partial_{i}\opartial_{\overline{n}}\partial_{k} K(\ovz,\vz))\\
&=\opartial_{\overline{\jmath}}\partial_{i}g_{\overline{l}k}(\ovz,\vz)-\Gamma_{\overline{l}\overline{\jmath}}^{\ \;\overline{m}}(\ovz,\vz)g_{\overline{m}n}(\ovz,\vz)\Gamma_{ik}^{\ \;n}(\ovz,\vz)\end{split}
\end{equation}
in combination with
\begin{equation}
R_{\overline{\jmath}i k \overline{l}}=R_{i \overline{\jmath} \overline{l} k}=-R_{i\overline{\jmath}k\overline{l}}=-R_{\overline{\jmath}i \overline{l}k}=-R_{k\overline{l}i\overline{\jmath}}.\label{eq:var:curvaturesym}
\end{equation}

As a final topic of this introduction, we introduce the concept of a complex Lie group
\begin{definition}[Complex Lie Group] A complex Lie group $\mathsf{G}$ is a Lie group $\mathsf{G}$ that has the structure of a complex manifold and for which the group operations of multiplication
\begin{equation}
\mathsf{G}\times\mathsf{G}\to \mathsf{G}\colon (G_1,G_2) \mapsto G_1 G_2
\end{equation}
and taking the inverse
\begin{equation}
\mathsf{G}\to \mathsf{G}\colon G \mapsto G^{-1}
\end{equation}
are holomorphic maps.
\end{definition}
Clearly, the complex general linear group $\mathsf{GL}(D,\mathbb{C})$ of invertible complex $D\times D$ matrices is a complex Lie group with complex dimension $\dim \mathsf{GL}(D,\mathbb{C})=D^2$. Unlike the real general linear group $\mathsf{GL}(D,\mathbb{R})$, the complex case $\mathsf{GL}(D,\mathbb{C})$ is connected. We conclude this introductory review with another lemma.
\begin{lemma} The direct product group $\mathsf{G}_1\times \mathsf{G}_2$ of two complex Lie groups $\mathsf{G}_1$ and $\mathsf{G}_2$ is a complex Lie group with $\dim \mathsf{G}_1\times \mathsf{G}_2=\dim \mathsf{G}_1 + \dim \mathsf{G}_2$.\label{lemma:var:productgroup}
\end{lemma}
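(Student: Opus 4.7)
The plan is to reduce the lemma to Lemma \ref{lemma:var:productmanifold} together with a routine verification that the componentwise group operations on the Cartesian product are holomorphic in product charts. Lemma \ref{lemma:var:productmanifold} immediately supplies the complex manifold structure on $\mathsf{G}_1\times\mathsf{G}_2$ and the dimension formula $\dim \mathsf{G}_1\times\mathsf{G}_2 = \dim \mathsf{G}_1 + \dim \mathsf{G}_2$; the relevant atlas consists of product charts $(\manifold{U}_\alpha\times\manifold{V}_\beta,\phi_\alpha\times\psi_\beta)$ drawn from atlases $\{(\manifold{U}_\alpha,\phi_\alpha)\}$ of $\mathsf{G}_1$ and $\{(\manifold{V}_\beta,\psi_\beta)\}$ of $\mathsf{G}_2$.

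Next I would equip $\mathsf{G}_1\times\mathsf{G}_2$ with the standard componentwise direct product group structure, so that multiplication reads $(G_1,G_2)(G_1',G_2') = (G_1 G_1', G_2 G_2')$ and inversion reads $(G_1,G_2)^{-1} = (G_1^{-1},G_2^{-1})$, and then verify that both operations are holomorphic in the sense of the definition of holomorphic map given above. For the multiplication $\mu\colon (\mathsf{G}_1\times\mathsf{G}_2)\times(\mathsf{G}_1\times\mathsf{G}_2)\to \mathsf{G}_1\times\mathsf{G}_2$, choosing product charts on both the domain and the codomain reduces its coordinate expression to $\mu(\vz_1,\vz_2,\vz_1',\vz_2') = \bigl(\mu_1(\vz_1,\vz_1'),\mu_2(\vz_2,\vz_2')\bigr)$, where $\mu_i$ denotes the coordinate representation of the multiplication on $\mathsf{G}_i$. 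Each $\mu_i$ is holomorphic by hypothesis, and it is holomorphic also in the extra variables on which it does not depend (trivially so). Hence each of the two components of $\mu$ is holomorphic in all $2(\dim\mathsf{G}_1+\dim\mathsf{G}_2)$ variables, so $\mu$ is a holomorphic map into the product manifold. The argument for inversion is formally identical but simpler, reducing in product charts to $\iota(\vz_1,\vz_2) = \bigl(\iota_1(\vz_1),\iota_2(\vz_2)\bigr)$ with each $\iota_i$ holomorphic by hypothesis.

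The main (and essentially only) subtlety is a harmless bookkeeping step in the multiplication argument: one must identify the natural product chart on the fourfold Cartesian product $(\mathsf{G}_1\times\mathsf{G}_2)\times(\mathsf{G}_1\times\mathsf{G}_2)$ with the product chart on $(\mathsf{G}_1\times\mathsf{G}_1)\times(\mathsf{G}_2\times\mathsf{G}_2)$ by a linear permutation of coordinate blocks, so that the decomposition of $\mu$ into $\mu_1$ and $\mu_2$ becomes manifest. Since such a block permutation is tautologically biholomorphic, this reshuffling does not affect the conclusion, and $\mathsf{G}_1\times\mathsf{G}_2$ is indeed a complex Lie group of the asserted complex dimension.
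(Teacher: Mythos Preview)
Your argument is correct and is precisely the routine verification one would supply; the paper itself states this lemma without proof as part of its introductory review, so there is nothing to compare against. Your reduction to Lemma~\ref{lemma:var:productmanifold} for the manifold structure and dimension, followed by checking holomorphy of the componentwise multiplication and inversion in product charts, is the standard and expected justification.
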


\subsection{Complex geometry of Hilbert space}
\label{ss:var:hilbert}
This paper is concerned with the study of quantum systems, the state of which is described by a vector in some Hilbert space $\hilbert$. By introducing a basis $\{\ket{e_i}\}_{i=1,\ldots,\dim \hilbert}$ in a finite-dimensional Hilbert space $\hilbert$ and identifying states $\ket{\Psi}=z^i\ket{e_i}\in \hilbert$ with $\vz \in\mathbb{C}^{\dim \hilbert}$, we obtain a biholomorphism between $\hilbert$ and the complex Euclidean space $\mathbb{C}^{\dim \hilbert}$. Alternatively, we can interpret the relation between $\ket{\Psi}$ and $\vz$ as a globally defined coordinate chart, by which $\hilbert$ satisfies the conditions for being a complex manifold. 

Physically interesting functions on $\hilbert$ are of the form
\begin{equation}
f_O\colon \hilbert \to \mathbb{C}\colon \ket{\Psi}\mapsto f_O(\ket{\Psi})=\braket{\Psi|\operator{O}|\Psi},\label{eq:var:deff}
\end{equation}
or the normalized version
\begin{equation}
\tilde{f}_O\colon \hilbert \to \mathbb{C}\colon \ket{\Psi}\mapsto \tilde{f}_O(\ket{\Psi})=\frac{\braket{\Psi|\operator{O}|\Psi}}{\braket{\Psi|\Psi}}.\label{eq:var:defnormalizedf}
\end{equation}
If $\ket{\Psi}=z^i\ket{e_i}$ and $\ket{\Phi}=w^i\ket{e_i}$ with $\vz,\vw\in\mathbb{C}^{\dim \hilbert}$, then the action of the holomorphic tangent vector $w^i \left(\partial_i\right)_{\ket{\Psi}}$ on $f_O$ is given by
\begin{displaymath}
w^i \left(\partial_i\right)_{\ket{\Psi}} f_O = \braket{\Psi|\operator{O}|\Phi}.
\end{displaymath}
Consequently, we can identify $w^i(\partial_i)_{\ket{\Psi}}\in T_{\ket{\Psi}}\hilbert^+$ with $\ket{\Phi}\in\hilbert$ for any base point $\ket{\Psi}\in\hilbert$. We thus conclude that $T_{\ket{\Psi}}\hilbert^+ \cong \hilbert$ for any $\ket{\Psi}\in\hilbert$. A similar argument shows that $T_{\ket{\Psi}}\hilbert^- \cong \hilbert^\ast$. From this identification, $T_{\ket{\Psi}}\hilbert^+$ can be endowed with a natural inner product
\begin{equation}
g^{(\hilbert)}_{\ket{\Psi}}\colon \hilbert\times \hilbert\to\mathbb{C}\colon (\ket{\Phi_1},\ket{\Phi_2})\mapsto g^{(\hilbert)}_{\ket{\Psi}}(\ket{\Phi_1},\ket{\Phi_2})\defis \braket{\Phi_2|\Phi_1},
\end{equation}
the matrix elements of which are given by
\begin{equation}
g^{(\hilbert)}_{\overline{\imath}j}=g^{(\hilbert)}_{j\overline{\imath}}=\braket{e_{\overline{\imath}}|e_j},
\end{equation}
with $\bra{e_{\overline{\imath}}}$ the linear functional associated to the basis vector $\ket{e_i}$ according to the Riesz representation theorem. In case of an orthonormal basis, the full Riemannian metric $G^{(\hilbert)}$ for the tangent space $T_{\ket{\Psi}}\hilbert^{\mathbb{C}}$ reduces to
\begin{equation}
G^{(\hilbert)}_{\ket{\Psi}}=\sum_{i=1}^{\dim \hilbert}\left(\rmd z^i\right)_{\ket{\Psi}}\otimes \left(\overline{\rmd z}^{\overline{\imath}}\right)_{\ket{\Psi}}\qquad \Leftrightarrow \qquad \left[G^{(\hilbert)}_{\ket{\Psi}}\right] = \begin{bmatrix} 0 & \operator{\one}\\ \operator{\one} & 0 \end{bmatrix},
\end{equation}
with $\operator{\one}$ the identity operator on $\hilbert$. Clearly, $\hilbert$ is globally flat and the Levi-Civita connection and Riemann curvature tensor vanish everywhere. In particular, it is K\"{a}hler manifold with $K(\ovz,\vz)=\braket{\Psi|\Psi}=\oz^{\overline{\imath}} g_{\overline{\imath} j} z^j$. 

While it is common practice to study quantum mechanics using state vectors in affine Hilbert space $\hilbert$, physical states correspond to rays of such vectors and should be identified with the elements of the projective Hilbert space $P(\hilbert)\defis \hilbert/\mathsf{GL}(1,\mathbb{C})\cong \mathbb{C}P^{\dim \hilbert-1}$. Here, $\mathsf{GL}(1,\mathbb{C})$ is the multiplicative abelian group of norm and phase changes which acts on $\hilbert$ as
\begin{equation}
\Gamma\colon\hilbert\times \mathsf{GL}(1,\mathbb{C})\to \hilbert\colon (\ket{\Psi},\lambda) \mapsto \Gamma(\ket{\Psi},\lambda)=\lambda \ket{\Psi}.
\end{equation}
Indeed, physical results are related to normalized expectation values $\tilde{f}_O$ as in Eq.~(\ref{eq:var:defnormalizedf}), which is invariant under the action of $\mathsf{GL}(1,\mathbb{C})$ and can thus be restricted to $P(\hilbert)$. While it is often easier to work in the affine space $\hilbert$, it turns out that using the projective structure of state space is required when studying uniform matrix product states in the thermodynamic limit, in order to avoid a number of unpleasant divergences.

The ray of states containing a vector $\ket{\Psi}\in\hilbert$ is denoted as $[\ket{\Psi}]\in P(\hilbert)$. However, in all calculations, we denote the elements $[\ket{\Psi}]\in P(\hilbert)$ using any representative $\ket{\Psi'}=\lambda\ket{\Psi}$ of the ray $[\ket{\Psi}]$ and consider these as a set of homogeneous coordinates for the elements of $P(\hilbert)$, rather than trying to define a new set of stereographic or orthographic coordinates which are not globally well defined. Often, formulas greatly simplify when choosing a normalized representative, \textit{i.e.}\ $\braket{\Psi'|\Psi'}=1$.

 The tangent space $T_{[\ket{\Psi}]}P(\hilbert)$ is the quotient vector space $T_{\ket{\Psi}}\hilbert/\!\sim\ \cong\hilbert/\!\sim$, where two vectors $\ket{\Phi_1},\ket{\Phi_2}\in\hilbert$ are equivalent  ($\ket{\Phi_1}\sim\ket{\Phi_2}$) if $\ket{\Phi_1}-\ket{\Phi_2}=\alpha \ket{\Psi}$ for some $\alpha\in\mathbb{C}$. We can find a unique representative $\ket{\Phi}\in\hilbert$ for every tangent vector in $T_{[\ket{\Psi}]}P(\hilbert)$ by imposing a condition such as
\begin{equation}
\braket{\Psi|\Phi}=0,\label{eq:var:condtangentvectorsprojectivehilbert}
\end{equation}
and we obtain $T_{[\ket{\Psi}]}P(\hilbert)\cong \hilbert^{\perp}_{\ket{\Psi}}$, where $\hilbert^{\perp}_{\ket{\Psi}}$ is the orthogonal complement of the one-dimensional subspace spanned by $\ket{\Psi}$. A representation $\ket{\Phi}$ that does not satisfy this condition can be transformed into one that does by acting with
\begin{equation}
\operator{P}_{\ket{\Psi}}^{\perp}=\operator{\one}-\operator{P}_{\ket{\Psi}}=\one-\frac{\ket{\Psi}\bra{\Psi}}{\braket{\Psi|\Psi}},
\end{equation}
the orthogonal projector onto $\hilbert^{\perp}_{\ket{\Psi}}$. Projective Hilbert space $P(\hilbert)$ is still a K\"{a}hler manifold if one endows it with the \emph{Fubini-Study metric} $\tilde{g}^{(\hilbert)}_{\ket{\Psi}}$ for $T_{[\ket{\Psi}]}P(\hilbert)^{+}$, which is defined as \cite{Nakahara:2003ys}
\begin{equation}
\tilde{g}^{(\hilbert)}_{\ket{\Psi}}(\ket{\Phi_1},\ket{\Phi_2})=\frac{\braket{\Psi|\Psi}\braket{\Phi_2|\Phi_1}-\braket{\Phi_2|\Psi}\braket{\Psi|\Phi_1}}{\braket{\Psi|\Psi}^2}=\frac{\braket{\Phi_2|\operator{P}_{\ket{\Psi}}^{\perp}|\Phi_1}}{\braket{\Psi|\Psi}}.
\end{equation}
The Fubini-Study metric corresponds to the infinitesimal version of the normalized overlap $\braket{\Psi|\Psi'}/(\braket{\Psi|\Psi}\braket{\Psi'|\Psi'})^{1/2}$ of two quantum states. Note that, using representations $\ket{\Phi_{1,2}}$ that satisfy Eq.~(\ref{eq:var:condtangentvectorsprojectivehilbert}) and a base point representation $\ket{\Psi}$ satisfying $\braket{\Psi|\Psi}=1$, the Fubini-Study metric reduces to the ordinary metric $g_{\ket{\Psi}}^{(\hilbert)}$. In terms of the homogeneous coordinates $\ket{\Psi}$, the corresponding K\"{a}hler potential is given by $\tilde{K}=\log N=\log \braket{\Psi|\Psi}$.

Henceforth, we are most interested in the physical states living in the holomorphic tangent space. Often, we fail to mention the restriction to the holomorphic part and just refer to this as the tangent space containing tangent vectors. For submanifolds $\varM\subset\hilbert$, the (holomorphic) tangent space at some point $\ket{\Psi}\in\varM$ satisfies $T_{\ket{\Psi}}\varM^{+}\subset \hilbert$, and we always denote (holomorphic) tangent vectors of complex submanifolds $\varM\subset \hilbert$ as vectors $\ket{\ }\in \hilbert$, rather than as a directional derivatives. While the latter is the standard approach in modern differential geometry and allows one to study the manifold intrinsically, the former facilitates a geometric interpretation of our results, making them (hopefully) more accessible to people with less background in differential geometry.

\subsection{Complex variational manifolds}
\label{ss:var:varmanifold}
For many interesting systems, the dimension of the Hilbert space is too large to allow for an exact solution, neither analytically nor numerically. One powerful approach to obtain approximate results is by restricting to a set of variational ansatz states $\ket{\Psi(\vz)}$ depending on a number of parameters $z^i$ with $i=1,\ldots,m$ where typically $m\ll \dim \hilbert$ so that this set is easier to handle. Throughout this paper, we restrict to ansatzes $\Psi$ for which the parameters $\vz$ can take complex values in some open subset $\manifold{U}$ of $\mathbb{C}^m$, either directly or via analytic continuation.  Furthermore, we impose the additional restriction that the map $\Psi:\manifold{U}\to \hilbert$ is holomorphic.
\begin{definition}[Variational subset] The variational subset $\varM$ corresponding to a variational ansatz $\Psi:\manifold{U}\subset \mathbb{C}^m\to \hilbert$ is defined as the image of $\Psi$, that is
\begin{equation}
\varM\defis \Psi(\manifold{U})= \{\ket{\Psi(\vz)} | \vz \in \manifold{U}\}.
\end{equation}
\end{definition}
The restriction to holomorphic maps $\Psi$ is not sufficient to give any differentiable structure to $\varM$. Without further conditions on $\Psi$, we can not conclude that the variational subset is a manifold, \textit{e.g.}\ we cannot exclude the possibility that $\varM$ intersects itself. Throughout the remainder of this section, we assume that $\Psi$ is injective, so that $\varM$ is biholomorphic to the complex manifold $\manifold{U}$  (see Lemma~\ref{lemma:var:opensubsetmanifold}) and is therefore an embedded complex submanifold of $\hilbert$. 

Under the injectivity assumption, a holomorphic inverse map $\Psi^{-1}\colon \varM\to \manifold{U}\subset\mathbb{C}^m$ can be defined, so that $(\varM,\Psi^{-1})$ can be interpreted as a global coordinate chart for $\varM$. However, we refrain from doing so, since the injectivity restriction on $\Psi$ will be lifted in the next section, in which case we no longer have a coordinate chart $\Psi^{-1}$.

With a slight abuse of notation, we use the same symbol $\Psi$ to denote the associated antiholomorphic map $\Psi\colon\manifold{U}\to\hilbert^\ast\colon\vz\mapsto\bra{\Psi(\ovz)}$, with $\hilbert^\ast$ the dual space of linear functionals on $\hilbert$. We explicitly denote the antiholomorphic dependence of the bras on the variational parameters. For the holomorphic map $\Psi\colon\manifold{U}\to \varM$, we also define the tangent map $\rmd \Psi_{\vz}\colon T_{\vz}\manifold{U}^+\to T_{\ket{\Psi(\vz)}}\varM^{+}$ with $T_{\vz}\manifold{U}^+\equiv (T_{\vz}\mathbb{C}^{m})^{+}\cong \mathbb{C}^m$ (see Lemma~\ref{lemma:var:opensubsetmanifold}) and $T_{\ket{\Psi(\vz)}}\varM^{+}\subset T_{\ket{\Psi(\vz)}}\hilbert^{+}\cong \hilbert$. We define a pushforward $\rmd\Psi_{\vz}(w^i\left.\partial_i\right|_{\vz})$ of tangent vectors $w^i\left.\partial_i\right|_{\vz}\in(T_{\vz}\mathbb{C}^m)^{+}$ to the holomorphic tangent space $(T_{\ket{\Psi(\vz)}}\varM)^{+}\subset \hilbert$. For the partial derivatives of $\Psi$ at a point $\vz\in\manifold{U}$, we introduce the notation
\begin{equation}
\partial_i \Psi\colon \manifold{U} \to \hilbert\colon\vz \mapsto \ket{\partial_i\Psi(\vz)}\defis \partial_i \left.\ket{\Psi(\vz)}\right|_{\vz}.
\end{equation}
The tangent map $\rmd \Psi_{\vz}$ is then given by the prescription
\begin{equation}
\rmd \Psi_{\vz}\colon \mathbb{C}^m\to T_{\ket{\Psi(\vz)}}\varM^{+}\subset\hilbert\colon w^i(\partial_i)_{\vz}\mapsto w^i\ket{\partial_i \Psi(\vz)}\defis \ket{\Phi(\vw;\vz)},
\end{equation}
which defines a new map $\Phi\colon\mathbb{C}^m\times\mathbb{C}^m \to\hilbert$. Finally, we also define the holomorphic tangent bundle $T\varM^{+}\subset T\hilbert^{+}\cong \hilbert\times \hilbert$. The pushforward of $\Psi$ induces a bundle map, \textit{i.e.}\ a map between between the tangent bundles $T\manifold{U}^{+}\cong \mathbb{C}^m\times\manifold{U}$ and $T\varM^{+}\subset \hilbert \times \hilbert$, which acts as
\begin{equation}
\rmd\Psi\colon \mathbb{C}^m\times\manifold{U}\to T\varM^{+}\colon (\vw;\vz)\mapsto (\ket{\Phi(\vw;\vz)}; \ket{\Psi(\vz)}).
\end{equation}

\subsection{Metric, connection and curvature in affine Hilbert space}
We can induce the standard Hilbert metric onto $\varM$ and then define a \emph{pullback metric} $g=\Psi^{\ast}g^{(\hilbert)}$ that is given by
\begin{equation}
g_{\vz}\colon \mathbb{C}^m\times\mathbb{C}^m\to\mathbb{C}\colon
(\vw_{1},\vw_{2})\mapsto g_{\vz}(\vw_{1},\vw_{2})=g_{\ket{\Psi(\vz)}}^{(\hilbert)}(\rmd\Psi_{\vz}(\vw_1),\rmd\Psi_{\vz}(\vw_2)).\label{eq:var:pullbackmetric}
\end{equation}
We can further simplify $g_{\vz}(\vw_{1},\vw_{2})$ as
\begin{displaymath}
g_{\vz}(\vw_{1},\vw_{2})=\braket{\Phi(\ovw_2;\ovz)|\Phi(\vw_1;\vz)}=\ovw_{2}^{\overline{\imath}} \braket{\opartial_{\overline{\imath}}\Psi(\ovz)|\partial_j\Psi(\vz)} \vw_1^j.
\end{displaymath}
We now switch to a coordinate-based notation, and define the entries of the Hermitian metric as
\begin{equation}
g_{\overline{\imath}j}(\ovz,\vz)=\braket{\opartial_{\overline{\imath}} \Psi(\ovz)|\partial_j\Psi(\vz)},
\end{equation}
which is indeed Hermitian ($g_{\overline{\imath}j}=\overline{g_{\overline{\jmath}i}}$) and positive definite, due to the injectivity of $\Psi$. Note that we still use the term pullback metric when the injectivity of $\Psi$ is abandoned and $g$ can become degenerate. At that point, the pullback metric $g$ is no longer a proper metric according to the strict definition. We discuss the consequences when encountering this issue for the first time in the next subsection. It can be checked that in combination with the metric $g$ defined in Eq.~(\ref{eq:var:pullbackmetric}), $\varM$ is a K\"{a}hler manifold with K\"{a}hler potential $K(\ovz,\vz)=N(\ovz,\vz)$, with $N(\ovz,\vz)\defis \braket{\Psi(\ovz)|\Psi(\vz)}$ the norm function. Indeed, it was shown that any complex submanifold of a K\"{a}hler manifold is also a K\"{a}hler manifold \cite{Nakahara:2003ys}. We define an inverse metric with nonzero entries $g^{i\overline{\jmath}}(\ovz,\vz)=g^{\overline{\jmath}i}(\ovz,\vz)$ such that $g^{i\overline{\jmath}}(\ovz,\vz) g_{\overline{\jmath}k}(\ovz,\vz)=\delta^i_{\ k}$ and can then easily derive the Levi-Civita connection using the results from Subsection~\ref{ss:var:complexgeometry}. The only non-vanishing components of the connection are given by
\begin{align}
\Gamma_{ij}^{\ \;k}&=g^{k\overline{m}} \braket{\opartial_{\overline{m}}\Psi|\partial_{i}\partial_{j} \Psi},&
\Gamma_{\overline{\imath} \overline{\jmath}}^{\ \;\overline{k}}&=g^{m \overline{k}} \braket{\opartial_{\overline{\imath}}\opartial_{\overline{\jmath}} \Psi | \partial_{m} \Psi},
\end{align}
where we have omitted the arguments $\vz$ and $\ovz$ for the sake of brevity. Similarly, the only non-zero components of the Riemann tensor are 
\begin{equation}
R_{i\overline{\jmath} k\overline{l}}=\braket{\opartial_{\overline{\jmath}}\opartial_{\overline{l}}\Psi|\partial_{i}\partial_{k}\Psi}-\braket{\opartial_{\overline{\jmath}}\opartial_{\overline{l}}\Psi|\partial_{m}\Psi}g^{m\overline{n}} \braket{\opartial_{\overline{n}}\Psi|\partial_{i}\partial_{k}\Psi}
\end{equation}
in combination with the symmetries in Eq.~(\ref{eq:var:curvaturesym}).

\subsection{Metric, connection and curvature in projective Hilbert space}
\label{ss:var:projective}
When trying to associate a manifold $\tilde{\varM}\subset P(\hilbert)$ to the original manifold $\varM\subset\hilbert$, there are a number of possibilities depending on the nature of $\varM$, and thus on the nature of the map $\Psi$. Even if the variational subset $\varM$ is guaranteed to be a manifold, this does not automatically imply that there are no singularities or self-intersections in the set
\begin{equation}
\tilde{\varM}=\{[\ket{\Psi(\vz)}], \forall \vz\in\manifold{U}\}.
\end{equation}
We now assume that the map $\Psi$ is sufficiently regular in order to be able to define $\tilde{\varM}$ as a complex manifold for some open domain $\manifold{U}$ of the parameter space. 

For example, if the map $\Psi$ satisfies
\begin{displaymath}
\forall \vz\in\manifold{U}, \forall \lambda \in \mathbb{C}\colon \lambda \ket{\Psi(\vz)}\in \varM \Leftrightarrow\lambda=1,
\end{displaymath}
so that $\varM$ contains at most a single representative on every ray of vectors in $\hilbert$, then we can immediately define a map $\tilde{\Psi}\colon\manifold{U}\to P(\hilbert)$ by setting
\begin{equation}
\tilde{\Psi}\colon\manifold{U}\to P(\hilbert)\colon\vz \mapsto \left[ \ket{\Psi(\vz)} \right]
\end{equation}
with $\left[ \ket{\Psi(\vz)} \right]\in P(\hilbert)$ the ray to which $\ket{\Psi(\vz)}$ belongs. The map $\tilde{\Psi}$ is still injective and we can induce the Fubini-Study metric onto $\varM$ and define a positive definite pullback metric $\tilde{g}$ on $\manifold{U}$.

Alternatively, it could be the case that $\varM$ contains parts of rays of vectors, such that
\begin{equation}
\forall \vz\in\manifold{U}, \exists \epsilon_{\vz} >0: \alpha \in B_{\epsilon_{\vz}}(0)\Rightarrow
\rme^{\alpha} \ket{\Psi(\vz)}\in \varM\label{eq:var:projectivecase2}
\end{equation}
with $B_\epsilon(0)=\{\alpha \in\mathbb{C}|\vert \alpha\vert<\epsilon\}$. A particular subcase of this type is when the map $\Psi$ satisfies $\ket{\Psi(\lambda \vz)} = f(\lambda) \ket{\Psi(\vz)}$ for any $\lambda\in\mathsf{GL}(1,\mathbb{C})$ for which $\lambda \vz \in \manifold{U}$, where $f$ is necessarily holomorphic. In that case we could define an injective map from $P(\manifold{U})\subset \mathbb{C}P^{m-1} \to P(\hilbert)$ that equals $\Psi$ when expressed in terms of homogeneous coordinates for both projective spaces. However, from a practical point of view it is more convenient to have a parameter space with an affine structure, and to work with the map $\tilde{\Psi}\colon\manifold{U}\to P(\hilbert)\colon\vz \mapsto \left[ \ket{\Psi(\vz)} \right]$, which is no longer injective in the case where $\Psi$ satisfies Eq.~\eqref{eq:var:projectivecase2}.

The pullback $\tilde{g}_{\vz}$ of the induced Fubini-Study metric $\tilde{g}^{(\hilbert)}_{\ket{\Psi(\vz)}}$  to $T_{\vz} \manifold{U}^{+}=\mathbb{C}^m$ is given by
\begin{multline}
\tilde{g}_{\vz}\colon\mathbb{C}^m\times\mathbb{C}^m\to \mathbb{C}\colon \\
(\vw_1,\vw_2)\mapsto \tilde{g}_{\vz}(\vw_1,\vw_2)=\frac{\braket{\Phi(\ovw_2;\ovz)|\Phi(\vw_1;\vz)}}{\braket{\Psi(\ovz)|\Psi(\vz)}}-\frac{\braket{\Phi(\ovw_2;\ovz)|\Psi(\vz)}\braket{\Psi(\ovz)|\Phi(\vw_1;\vz)}}{\braket{\Psi(\ovz)|\Psi(\vz)}^2}
\end{multline}
Correspondingly, the entries $\tilde{g}_{\overline{\imath}j}$ are given by
\begin{equation}
\tilde{g}_{\overline{\imath}j}(\ovz,\vz)=\frac{\braket{\opartial_{\overline{\imath}}\Psi(\ovz)|\partial_j\Psi(\ovz)}}{\braket{\Psi(\ovz)|\Psi(\vz)}}-\frac{\braket{\opartial_{\overline{\imath}}\Psi(\ovz)|\Psi(\vz)}\braket{\Psi(\ovz)|\partial_j\Psi(\ovz)}}{\braket{\Psi(\ovz)|\Psi(\vz)}^2},
\end{equation}
and originate from a K\"{a}hler potential
\begin{equation}
\tilde{K}(\ovz,\vz)=\log\left(N(\ovz,\vz)\right)=\log\left(\braket{\Psi(\ovz)|\Psi(\vz)}\right)
\end{equation}
In the first case, where $\varM$ contains at most a single representative on every ray of vectors in $\hilbert$, $\tilde{g}_{\overline{\imath}j}$ is still a positive definite matrix and can readily be inverted. In the second case, where $\varM$ contains parts of rays of vectors, the pullback metric is degenerate and no longer constitutes a proper Riemannian metric, due to the non-injectivity of $\tilde{\Psi}$. Hence, there is no unique way to define an inverse metric. A more formal treatment in terms of fiber bundles and bundle connections is given in the next section, where even the map $\Psi$ itself is non-injective. For now, we continue in a more intuitive way. Given the validity of Eq.~\eqref{eq:var:projectivecase2}, there must exist some $\vv(\vz)\in T_{\vz}\manifold{U}^{+}\cong\mathbb{C}^m$ such that $\ket{\Phi(\vv(\vz);\vz)}=v^i(\vz) \ket{\partial_i \Psi(\vz)}=\ket{\Psi(\vz)}$. For the particular subcase where $\ket{\Psi(\lambda\vz)}=f(\lambda)\ket{\Psi(\vz)}$, we obtain $\vv(\vz)=\vz/f'(1)$ by differentiating this relation with respect to $\lambda$ at $\lambda=1$. It can easily be checked that $\tilde{g}_{\overline{\imath} j}(\ovz,\vz) v^j(\vz) = 0$. Hence, we can define an equivalence relation between tangent vectors $\vw\in T_{\vz}\manifold{U}^+\cong\mathbb{C}^m$ as $\vw_1\sim \vw_2$ if $\vw_1-\vw_2 = \alpha \vv(\vz)$ for some $\alpha\in\mathbb{C}$. The tangent space $T_{[\ket{\Psi(\vz)}]}\tilde{\varM}$ is isomorphic to $\mathbb{C}^m/\!\sim$ and we need to impose a condition in order to associate a unique vector $\vw\in\mathbb{C}^m$ to tangent vectors in $T_{[\ket{\Psi(\vz)}]}\tilde{\varM}$. Unlike in the total Hilbert space, working with the orthogonal complement of $\vv(\vz)$ would not be a natural choice, as there is no intrinsic notion of orthogonality and this choice depends on the parameterization of the manifold. Instead, the transition from affine Hilbert space to projective Hilbert space implies that the natural condition to impose on tangent vectors $\vw$ is that
\begin{equation}
\braket{\Psi(\ovz)|\Phi(\vw;\vz)}=\braket{\Psi(\ovz)|\partial_i \Psi(\vz)}w^i = 0.\label{eq:var:condtangentvectorsparameterspace}
\end{equation}
We can then define a pseudo-inverse of the pullback metric with entries $\tilde{g}^{i\overline{\jmath}}(\ovz,\vz)$ that satisfy
\begin{equation}
\tilde{g}^{i\overline{\jmath}}(\ovz,\vz) \tilde{g}_{\overline{\jmath}k}(\ovz,\vz) =\delta^i_k- v^i(\vz) \frac{\braket{\Psi(\ovz)|\partial_k\Psi(\vz)}}{\braket{\Psi(\ovz)|\Psi(\vz)}}.\label{eq:pseudoinversemetric}
\end{equation}
The right-hand side acts as a projector (not an orthogonal one) that transforms any vector $\vw$ into an equivalent one that satisfies the condition in Eq.~(\ref{eq:var:condtangentvectorsparameterspace}). Put differently, the pullback metric acts as a proper inner product in the subspace of $\mathbb{C}^m$ that satisfies Eq.~(\ref{eq:var:condtangentvectorsparameterspace}). 

We carefully proceed by introducing a Levi-Civita connection and a Riemann curvature tensor using the pullback metric and its pseudo-inverse rather than a proper metric and its inverse. We check the consistency of this approach at the end of this subsection. For the non-zero entries of the connection $\tilde{\Gamma}$, we obtain
\begin{align}
\tilde{\Gamma}_{ij}^{\ \;k}&=\tilde{g}^{k\overline{m}} \partial_{i} \tilde{g}_{\overline{m}j},&
\tilde{\Gamma}_{\overline{\imath} \overline{\jmath}}^{\ \;\overline{k}}&=\tilde{g}^{m \overline{k}} \opartial_{\overline{\imath}}\tilde{g}_{\overline{\jmath}m},
\label{eq:var:levicivita}
\end{align}
with
\begin{equation}
\begin{split}
\partial_{i} \tilde{g}_{\overline{m}j}=&\frac{\braket{\opartial_{\overline{m}}\Psi|\partial_{i}\partial_{j} \Psi}}{\braket{\Psi|\Psi}}-\frac{\braket{\opartial_{\overline{m}}\Psi|\Psi}\braket{\Psi|\partial_i\partial_j\Psi}}{\braket{\Psi|\Psi}^2}-\frac{\braket{\opartial_{\overline{m}}\Psi|\partial_i \Psi}\braket{\Psi|\partial_j\Psi}}{\braket{\Psi|\Psi}^2}\\
&-\frac{\braket{\opartial_{\overline{m}}\Psi|\partial_j \Psi}\braket{\Psi|\partial_i\Psi}}{\braket{\Psi|\Psi}^2}+2\frac{\braket{\opartial_{\overline{m}}\Psi|\Psi}\braket{\Psi|\partial_i\Psi}\braket{\Psi|\partial_j\Psi}}{\braket{\Psi|\Psi}^3}.
\end{split}\label{eq:var:levicivitaaux}
\end{equation}
As before, we have omitted the arguments $\vz$ and $\ovz$ for the sake of brevity. To compute the Riemann curvature tensor, we need an expression for $\opartial_{\overline{m}}\tilde{g}^{i\overline{l}}$, which can be obtained from applying $\opartial_{\overline{m}}$ to Eq.~(\ref{eq:pseudoinversemetric}) and multiplying with $\tilde{g}^{k\overline{l}}$, resulting in
\begin{displaymath}
\left(\opartial_{\overline{m}} \tilde{g}^{i\overline{\jmath}}\right)\left(\delta^{\overline{l}}_{\overline{\jmath}}-\frac{\braket{\opartial_{\overline{j}}\Psi|\Psi}}{\braket{\Psi|\Psi}}\overline{v}^{\overline{l}}\right)+\tilde{g}^{i\overline{\jmath}}(\opartial_{\overline{m}}\tilde{g}_{\overline{\jmath}k})\tilde{g}^{k\overline{l}}=-v^{i}\left(\delta^{\overline{l}}_{\overline{m}}-\frac{\braket{\opartial_{\overline{m}}\Psi|\Psi}}{\braket{\Psi|\Psi}}\overline{v}^{\overline{l}}\right),
\end{displaymath}
the unique solution to which is given by
\begin{equation}
\opartial_{\overline{m}} \tilde{g}^{i\overline{l}}=-\tilde{g}^{i\overline{\jmath}}(\opartial_{\overline{m}}\tilde{g}_{\overline{\jmath}k})\tilde{g}^{k\overline{l}}-v^i\delta^{\overline{l}}_{\overline{m}}.
\end{equation}
Here we used that
\begin{displaymath}
g_{\overline{\imath}j}v^j=0\qquad\mathrm{and}\qquad g^{i\overline{\jmath}}\braket{\opartial_{\overline{\jmath}}\Psi|\Psi}=0,
\end{displaymath}
which are the conditions that led to the precise form of the right hand side of Eq.~(\ref{eq:pseudoinversemetric}). Differentiating these identities further leads to
\begin{displaymath}
(\partial_{i} \tilde{g}_{\overline{m}j}) v^j = - \tilde{g}_{\overline{m}j} \partial_{i}v^j\qquad\mathrm{and}\qquad (\partial_{i} \tilde{g}_{\overline{m}j}) \overline{v}^{\overline{m}} = 0,
\end{displaymath}
since $\vv$ depends on $\vz$ holomorphically. With these identities at hand, we can show that the covariant derivative $\nabla_i$ annihilates the metric
\begin{equation}
\nabla_i \tilde{g}_{\overline{\jmath}k}\defis \partial_i \tilde{g}_{\overline{\jmath} k} - \tilde{\Gamma}^{\ \;l}_{ik}\ \tilde{g}_{\overline{\jmath} l}=0.
\end{equation}
Hence, the defining relation of the Levi-Civita connection is still preserved, even though we have been using a pseudo-inverse of a degenerate pullback metric. In addition, the identities above are required to prove that the entries of the Riemann curvature tensor $\tilde{R}$ are still given by
\begin{equation}
\begin{split}\tilde{R}_{i\overline{\jmath} k\overline{l}}&=\tilde{g}_{\overline{l} m} \opartial_{\overline{\jmath}}\tilde{\Gamma}_{ik}^{\ \;m}=\opartial_{\overline{\jmath}}\partial_{i}\tilde{g}_{\overline{l}k}-\left(\opartial_{\overline{\jmath}}\tilde{g}_{\overline{l}m}\right)\tilde{g}^{m\overline{n}}\left(\partial_i\tilde{g}_{\overline{n}k}\right)
\end{split}
\label{eq:var:riemmanprojective}
\end{equation}
with
\begin{equation}
\begin{split}
\opartial_{\overline{\jmath}}\partial_{i}\tilde{g}_{\overline{l}k}=&\frac{\braket{\opartial_{\overline{\jmath}}\opartial_{\overline{l}}\Psi|\partial_i\partial_k\Psi}}{\braket{\Psi|\Psi}}-\frac{\braket{\opartial_{\overline{\jmath}}\opartial_{\overline{l}}\Psi|\Psi}\braket{\Psi|\partial_i\partial_k\Psi}}{\braket{\Psi|\Psi}^2}\\
&-\frac{\braket{\opartial_{\overline{\jmath}}\Psi|\partial_{i}\Psi}\braket{\opartial_{\overline{l}}\Psi|\partial_k\Psi}}{\braket{\Psi|\Psi}^2}-\frac{\braket{\opartial_{\overline{\jmath}}\Psi|\partial_{k}\Psi}\braket{\opartial_{\overline{l}}\Psi|\partial_i\Psi}}{\braket{\Psi|\Psi}^2}\\
&-\frac{\braket{\opartial_{\overline{\jmath}}\opartial_{\overline{l}}\Psi|\partial_{k}\Psi}\braket{\Psi|\partial_i\Psi}}{\braket{\Psi|\Psi}^2}-\frac{\braket{\opartial_{\overline{\jmath}}\opartial_{\overline{l}}\Psi|\partial_{i}\Psi}\braket{\Psi|\partial_k\Psi}}{\braket{\Psi|\Psi}^2}\\
&-\frac{\braket{\opartial_{\overline{l}}\Psi|\partial_{i}\partial_{k}\Psi}\braket{\opartial_{\overline{\jmath}}\Psi|\Psi}}{\braket{\Psi|\Psi}^2}-\frac{\braket{\opartial_{\overline{\jmath}}\Psi|\partial_{i}\partial_{k}\Psi}\braket{\opartial_{\overline{l}}\Psi|\Psi}}{\braket{\Psi|\Psi}^2}\\
&+2\frac{\braket{\opartial_{\overline{\jmath}}\opartial_{\overline{l}}\Psi|\Psi}\braket{\Psi|\partial_i\Psi}\braket{\Psi|\partial_k\Psi}}{\braket{\Psi|\Psi}^3}+2\frac{\braket{\Psi|\partial_i\partial_k\Psi}\braket{\opartial_{\overline{\jmath}}\Psi|\Psi}\braket{\opartial_{\overline{l}}\Psi|\Psi}}{\braket{\Psi|\Psi}^3}\\
&+2\frac{\braket{\opartial_{\overline{l}}\Psi|\partial_{k}\Psi}\braket{\Psi|\partial_i\Psi}\braket{\opartial_{\overline{\jmath}}\Psi|\Psi}}{\braket{\Psi|\Psi}^3}+2\frac{\braket{\opartial_{\overline{\jmath}}\Psi|\partial_{k}\Psi}\braket{\Psi|\partial_i\Psi}\braket{\opartial_{\overline{l}}\Psi|\Psi}}{\braket{\Psi|\Psi}^3}\\
&+2\frac{\braket{\opartial_{\overline{l}}\Psi|\partial_{i}\Psi}\braket{\Psi|\partial_k\Psi}\braket{\opartial_{\overline{\jmath}}\Psi|\Psi}}{\braket{\Psi|\Psi}^3}+2\frac{\braket{\opartial_{\overline{\jmath}}\Psi|\partial_{i}\Psi}\braket{\Psi|\partial_k\Psi}\braket{\opartial_{\overline{l}}\Psi|\Psi}}{\braket{\Psi|\Psi}^3}\\
&-6\frac{\braket{\opartial_{\overline{\jmath}}\Psi|\Psi}\braket{\opartial_{\overline{l}}\Psi|\Psi}\braket{\Psi|\partial_i\Psi}\braket{\Psi|\partial_k\Psi}}{\braket{\Psi|\Psi}^4}.
\end{split}
\end{equation}
Note that for any tensor $T$, the contraction of a covariant index $i$ with $v^i$ or of any contravariant index $j$ with $\braket{\Psi|\partial_j\Psi}$ should be zero, if we want to be able to consistently lower and raise indices with the metric and its pseudo-inverse. This can explicitly be checked for the Riemann curvature tensor. However, from the identities above we obtain
\begin{equation}
\tilde{\Gamma}_{ij}^{\ \;k} v^i=-\tilde{g}^{k\overline{m}}\tilde{g}_{\overline{m}i}(\partial_j v^i)=-\partial_j v^k+v^k\frac{\braket{\Psi|\partial_i \Psi}}{\braket{\Psi|\Psi}}(\partial_j v^i)\neq 0
\end{equation}
which is compatible with the Levi-Civita connection not being a proper tensor.

\section{Geometry of generic matrix product states}
\label{s:gmps}
In this section we study the geometry of generic matrix product states (MPS) for finite one-dimensional lattices, which were identified \cite{1995PhRvL..75.3537O,1997PhRvB..55.2164R} as the variational states implicitly created by White's density-matrix renormalization group \cite{1992PhRvL..69.2863W,1993PhRvB..4810345W}. Subsection~\ref{ss:gmps:definition} first summarizes the definition and key properties of the variational set. Secondly, in Subsection~\ref{ss:gmps:affinefiberbundle} we discuss the properties of the representation $\Psi$ itself, as studied in great detail in Ref.~\onlinecite{2006quant.ph..8197P}. We identify matrix product states as having the structure of a principal fiber bundle and introduce the necessary concepts and definitions. In particular, we identify the parameter space as the total space or bundle space and the manifold $\varM_{\mathrm{MPS}}\subset \hilbert$ as the base space. Subsection~\ref{ss:gmps:tangent} introduces the tangent space of these manifolds and discusses the relation between the tangent bundle of the parameter space and the tangent bundle $T\varM_{\mathrm{MPS}}$. In order to associate a unique parameterization to each tangent vector of the physical space, we need to introduce what is called a bundle connection. In Subsection~\ref{ss:gmps:projectivefiberbundle}, we repeat the whole construction for the manifold $\tilde{\varM}_{\mathrm{MPS}}\subset P(\hilbert)$. Finally, in Subsection~\ref{ss:gmps:metric} we define the pullback metric and introduce an efficient parameterization for tangent vectors such that the metric becomes the identity. 

\subsection{Definition and properties}
\label{ss:gmps:definition}
Consider a one-dimensional lattice $\mathcal{L}$ with $\vert\mathcal{L}\vert=N$ sites labeled by the integer $n\in\mathcal{L}=\{1,\ldots,N\}$. Every site $n$ contains a $q_{n}$-dimensional quantum variable, so that the local Hilbert space $\hilbert_{n}\cong \mathbb{C}^{q_{n}}$ is spanned by a basis $\{\ket{s_{n}}\mid s_{n}=1,\ldots,q_{n}\}$. The total Hilbert space is given by $\hilbert_{\mathcal{L}}=\bigotimes_{n=1}^{N} \hilbert_{n}$ and is spanned by the product basis
\begin{equation}
\ket{\{s_{n}\}}\defis \ket{s_{1}s_{2}\ldots s_{N}}\defis \ket{s_{1}}_{1}\otimes\ket{s_{2}}_{2}\otimes \cdots \otimes \ket{s_{N}}_{N}.
\end{equation}
The dimension of $\hilbert_{\mathcal{L}}$ is thus given by $\dim \hilbert_{\mathcal{L}}=\prod_{n=1}^{N}q_{n}$, and the specification of an arbitrary state $\ket{\Psi}\in\hilbert_{\mathcal{L}}$ requires a value for each of the coefficients $c_{s_{1},s_{2},\ldots,s_{N}}$ corresponding to the element $\ket{s_{1}s_{2}\ldots s_{N}}$ of the basis. 

In a MPS, the coefficients $c_{s_{1},s_{2},\ldots,s_{N}}$ with respect to the chosen basis $\ket{s_1s_2\ldots s_N}$ are obtained as a product of matrices, hence the name. The variational parameters correspond to a set of $q_n$ complex $D_{n-1}\times D_n$ matrices $A^{s}(n)$ ($s=1,\ldots,q_n$) for every $n\in\mathcal{L}=\{1,\ldots,N\}$, where $D_0=D_N$. The objects $A(n)$ can also be interpreted as rank 3 tensors with entries $A^{s_{n}}_{\alpha,\beta}(n)$, where there is one physical index $s_{n}=1,\ldots,q_{n}$ and two virtual indices $\alpha=1,\ldots,D_{n-1}$ and $\beta=1,\ldots,D_{n}$. The integers $D_{n}$ are called the bond dimension or virtual dimension of the MPS. For a given lattice $\mathcal{L}$ with local Hilbert spaces $\hilbert_{n}$ and fixed bond dimensions $\{D_{n}\}_{n=1,\ldots,N}$, we can thus define the MPS parameter space $\mathbb{A}_{\mathrm{MPS}}$ as the complex Euclidean space
\begin{equation}
\mathbb{A}_{\mathrm{MPS}}=\bigoplus_{n=1}^{N}\mathbb{C}^{D_{n-1}\times q_{n}\times D_{n}},
\end{equation}
with
\begin{equation}
\dim \mathbb{A}_{\mathrm{MPS}}=\sum_{n=1}^{N} D_{n-1}q_{n} D_{n}.
\end{equation}
We can now define the MPS variational class.
\begin{definition}[Matrix product state] An MPS is defined as the holomorphic map
\begin{equation}
\Psi\colon \mathbb{A}_{\mathrm{MPS}}\to \hilbert_{\mathcal{L}}\colon A \mapsto \ket{\Psi[A]}
\end{equation}
where we misuse the functional notation $[A]$ for the dependence on a discrete set of objects $A=\{A(1),A(2),\ldots,A(N)\}=\{A(n)\}_{n=1,\ldots,N}$, and $\ket{\Psi[A]}$ is given by
\begin{equation}
\ket{\Psi[A]}\defis\sum_{s_{1}=1}^{q_{1}}\cdots \sum_{s_{N}=1}^{q_{N}} \tr\left[A^{s_{1}}(1)\cdots A^{s_{N}}(N) \right] \ket{s_{1}s_{2}\ldots s_{N}}.\label{eq:gmps:defmps}
\end{equation}
Despite the slightly confusing terminology, we refer both to the set of tensors $A\in\mathbb{A}_{\mathrm{MPS}}$ and to the corresponding physical state $\ket{\Psi[A]}\in \hilbert_{\mathcal{L}}$ as a MPS.
\end{definition}
Note that it is always useless to choose $D_{n}> q_{n} D_{n-1}$ or $D_{n-1} > q_{n} D_{n}$. For example, if $D_{n}> D_{n-1}q_{n}$ then define the $(D_{n-1}q_{n}\times D_{n})$-matrix $A_{(\alpha s),\beta}(n)$ from reordering and grouping the indices of the tensor $A(n)$. The rank of the matrix $A_{(\alpha s),\beta}(n)$ is limited by $D_{n-1}q_{n}$, and there exists a $D_{n-1}q_{n}\times D_{n-1}q_{n}$ matrix $B_{(\alpha s),\gamma}(n)$ and $D_{n-1}q_{n}\times D_{n}$ matrix $Q_{\gamma,\beta}$ such that $A_{(\alpha s),\beta}(n)=\sum_{\gamma=1}^{D_{n-1}q_{n}} B_{(\alpha s),\gamma} Q_{\gamma,\beta}$. Without loss of accuracy, we can redefine $A^{s}(n)\leftarrow B^{s}$, $A^{s}(n+1) \leftarrow Q A^{s}(n+1)$ and $D_{n} \leftarrow D_{n-1}q_{n}$. A similar proof holds for the case $D_{n-1}> q_{n} D_{n}$.

We can now define a variational set
\begin{equation}
\manifold{V}_{\mathrm{MPS}}=\left\{\ket{\Psi[A]}| \forall A\in \mathbb{A}_\mathrm{MPS}\right\}.
\end{equation}
The notation $\manifold{V}_{\mathrm{MPS}}$ will not be cluttered with explicit notation of the lattice $\mathcal{L}$ or the local Hilbert dimensions $\{q_{n}\}$. The bond dimensions $\{D_{n}\}_{n=1,\ldots,N}$ can be indicated explicitly as $\manifold{V}_{\mathrm{MPS}\{D_{n}\}}$ when confusion between different choices of $\{D_{n}\}_{n=1,\ldots,N}$ is possible. The variational set $\manifold{V}_\mathrm{MPS}$ contains contains rays of states, since it corresponds to the case where $\ket{\Psi[\lambda A]}=f(\lambda)\ket{\Psi[A]}$ for all $\lambda\in\mathbb{C}$, with $f(\lambda)=\lambda^{N}$. The set $\manifold{V}_{\mathrm{MPS}\{D_{n}\}}$ is definitely not a vector space, since for two states $\ket{\Psi[A_{1}]}, \ket{\Psi[A_{2}]}\in\manifold{V}_{\mathrm{MPS}\{D_{n}\}}$, the MPS representation of $\ket{\Psi[A_{1}]}+ \ket{\Psi[A_{2}]}$ requires in the most general case a set of bond dimensions $\{D_{n}'=2D_{n}\}_{n=1,\ldots,N}$. Put differently, in the most general case we obtain $\ket{\Psi[A_{1}]}+ \ket{\Psi[A_{2}]}=\ket{\Psi'[A']}\in\manifold{V}_{\mathrm{MPS}\{D_{n}'\}}$, where $(A')^{s}(n)$ is constructed as $(A')^{s}(n)=A_{1}^{s}(n) \oplus A_{2}^{s}(n)$, $\forall s=1,\ldots,q_{n}$, $\forall n=1,\ldots,N$. Unlike in the previous section, the map $\Psi$ is not injective and we need a more detailed study to investigate whether $\manifold{V}_{\mathrm{MPS}}$ can be given the structure of a complex manifold, which is the subject of the next subsection.

If a different set of bond dimensions $\{D_{n}'\}_{n=1,\ldots,N}$ satisfies $D_{n}'\leq D_{n}$, $\forall n=1,\ldots,N$, then $\manifold{V}_{\mathrm{MPS}\{D_{n}'\}}\subset \manifold{V}_{\mathrm{MPS}\{D_{n}\}}$. A MPS $\ket{\Psi'[A']}\in\manifold{V}_{\mathrm{MPS}\{D_{n}'\}}$ can be identified with a state $\ket{\Psi[A]}\in \manifold{V}_{\mathrm{MPS}\{D_{n}\}}$ by setting, $\forall n=1,\ldots,N$, $\forall s=1,\ldots,q_{n}$, $A^{s}_{\alpha,\beta}(n)=(A')^{s}_{\alpha,\beta}(n)$ for $\alpha=1,\ldots,D_{n-1}'$ and $\beta=1,\ldots,D_{n}'$, and $A^{s}_{\alpha,\beta}(n)=0$ for all other combinations of $\alpha$ and $\beta$. It will be shown throughout the remainder of this section that the subsets $\manifold{V}_{\mathrm{MPS}\{D_{n}'\}}$ of $\manifold{V}_{\mathrm{MPS}\{D_{n}\}}$ correspond to the singular regions where \textit{e.g.}\ the pullback metric $g$ becomes (more strongly) degenerate, which is made more precise in Subsection~\ref{ss:gmps:metric}. In order to define a variational manifold, these singular regions have to be removed by restricting to an open subset $\manifold{A}_{\mathrm{MPS}}\subset \mathbb{A}_{\mathrm{MPS}}$. The corresponding image under the map $\Psi$ defines a set $\varM_{\mathrm{MPS}}$ that can be given the structure of a complex manifold, which is the main result of the next subsection.

Before getting there, we first have to define some additional quantities, which naturally occur when evaluating expectation values of physical operators operators with respect to a MPS $\ket{\Psi[A]}$. Physical operators correspond to elements of the set $\End(\hilbert_{\mathcal{L}})$ of linear endomorphisms of $\hilbert_{\mathcal{L}}$. To the virtual bond dimensions $D_n$ of the MPS $\ket{\Psi[A]}$, we can associate ancillas or virtual systems whose pure states live in $\mathbb{C}^{D_n}$. The expectation value $\braket{\Psi[\overline{A}]|\operator{O}|\ket{\Psi[A}}$ involves both the MPS and its dual, which depends on the complex conjugate variable $\overline{A}=\{\overline{A}(n)\defis\overline{A(n)}\}_{n=1,\ldots,N}$ and involves a set of dual ancillas. Any physical operator in $\End(\hilbert_{\mathcal{L}})$ can be expressed as a linear combination of elementary product operators given by
\begin{equation}
\operator{O}=\bigotimes_{n=1}^{N} \operator{O}_{n},\label{eq:gmps:productoperator}
\end{equation}
with $\operator{O}_{n}$ a local operator acting non-trivially only on $\hilbert_{n}$. For such product operators, we now introduce the concept of a \emph{superoperator}.
\begin{definition}[Superoperator]
Let $\ket{\Psi[A]}$ be a MPS with virtual bond dimensions $\{D_n\}_{n=1,\ldots,N}$ and let the physical operator $\operator{O}$ be a product operator as defined in Eq.~(\ref{eq:gmps:productoperator}). For every local operator $\operator{O}_n$ acting non-trivially on site $n$ alone, we define a \emph{superoperator} $\voperator{E}_{O}(n)$ as a linear homomorphism from the tensor product ancilla space $\mathbb{C}^{D_{n}}\otimes \overline{\mathbb{C}^{D_{n}}}$ to $\mathbb{C}^{D_{n-1}}\otimes \overline{\mathbb{C}^{D_{n-1}}}$, according to the definition
\begin{equation}
\voperator{E}_{O_n}(n)\defis\sum_{s,s'=1}^{q_{n}} \braket{s|\operator{O}_n|s'} A^{s'}(n) \otimes \overline{A^{s}(n)}.
\end{equation}
Each superoperator $\voperator{E}_{O_n}(n)$ can thus be represented as a matrix of size $D_{n-1}^{2}\times D_{n}^{2}$.
\end{definition}
Using this definition and the property $\tr(A)\tr(B)=\tr(A\otimes B)=\tr(B\otimes A)$, we now obtain for the expectation value $\braket{\Psi[\overline{A}]|\operator{O}|\Psi[A]}$ of the product operator $\operator{O}$ [see Eq.~(\ref{eq:gmps:productoperator})]
\begin{equation}
\braket{\Psi[\overline{A}]|\operator{O}|\Psi[A]}=\tr\left[\voperator{E}_{O_{1}}(1)\voperator{E}_{O_{2}}(2)\cdots \voperator{E}_{O_{N}}(N)\right].\label{eq:gmps:expectationvalue}
\end{equation}
For the evaluation of the expectation value, the different superoperators associated to the subsequent sites have to be multiplied. Starting at the site $n$ corresponding to the smallest bond dimension $D_{\text{min}}=\min\{D_n\}$, Eq.~(\ref{eq:gmps:expectationvalue}) can be evaluated with a total computational cost that scales as $\order(D_{\text{min}}^2 D_{\text{max}}^{3})$ by exploiting the tensor product structure of $\voperator{E}_{O}(n)$, where $D_{\text{max}}=\max\{D_n\}$. For a lattice with open boundary conditions, we have $D_{\text{min}}=D_0=D_N=1$ and the total computational cost for evaluating expectation values scales as $\order(D_{\text{max}}^3)$. For a translation invariant state on a lattice with periodic boundary conditions, we expect $D_{\text{min}}=D_{\text{max}}=D_n=D$, and we obtain a more unfavorable scaling $\order(D^5)$.

To every superoperator $\voperator{E}_{O}(n)$ we can associate a map $\mathscr{E}^{(n)}_{O}\colon\End(\mathbb{C}^{D_{n}})\to \End(\mathbb{C}^{D_{n-1}})$ from virtual operators $x$ acting on the ancilla space $\mathbb{C}^{D_{n}}$ to virtual operators $\mathscr{E}^{(n)}_{O}(x)$ acting on the previous ancilla space $\mathbb{C}^{D_{n-1}}$ via the prescription
\begin{equation}
\mathscr{E}^{(n)}_{O}\colon\End(\mathbb{C}^{D_{n}})\to \End(\mathbb{C}^{D_{n-1}})\colon x\mapsto \mathscr{E}^{(n)}_{O}(x)\defis \sum_{s,s'=1}^{q_{n}}\braket{s|\operator{O}|s'} A^{s'}(n) x A^{s}(n)^{\dagger}.
\end{equation}
Analogously, a second map $\widetilde{\mathscr{E}}^{(n)}_{O}$ is defined as 
\begin{equation}
\widetilde{\mathscr{E}}^{(n)}_{O}\colon\End(\mathbb{C}^{D_{n-1}})\to \End(\mathbb{C}^{D_{n}})\colon x\mapsto \widetilde{\mathscr{E}}^{(n)}_{O}(x)\defis\sum_{s,s'=1}^{q_{n}}\braket{s|\operator{O}|s'} A^{s}(n)^{\dagger} x A^{s'}(n).
\end{equation}
Via the Choi-Jamio{\l}kowski isomorphism \cite{Sudarshan:1961aa,Jamiokowski:1972aa,Choi:1975aa,Arrighi:2004aa}, virtual operators $x$ in $\mathbb{L}(\mathbb{C}^{D_{n}})$ can be associated to vectors $\rket{x}$ in the ancilla product space $\mathbb{C}^{D_{n}}\otimes \overline{\mathbb{C}^{D_{n}}}$, for which we introduce a braket-style notation with round brackets. The relation between the maps $\mathscr{E}_{O}$, $\widetilde{\mathscr{E}}_{O}$ and the superoperator $\voperator{E}_{O}$ is given by $\voperator{E}_{O}(n)\rket{x}=\rket{\mathscr{E}^{(n)}_{O}(x)}$ and $\rbra{y}\voperator{E}_{O}(n)=\rbra{\widetilde{\mathscr{E}}^{(n)}_{O}(y)}$. Note that these maps only require multiplication of matrices in the original ancilla space, and can thus be implemented as operations with computational complexity $\order(D^{3})$. A particular role is played by the map $\mathscr{E}^{(n)}\defis \mathscr{E}^{(n)}_{\one}$, which is \emph{completely positive} and for which the matrices $A^{s}(n)$ are the Kraus operators.\cite{Nielsen:2004fk} This map appears at every site $n$ where $\operator{O}$ acts trivially (\textit{i.e.}\ $\operator{O}_{n}=\operator{\one}$). Its importance follows from the observation that many relevant operators only act non-trivally on a few sites (\textit{e.g.}\ local order parameters, correlation functions, \ldots). The corresponding superoperator $\voperator{E}(n)\defis\voperator{E}_{\one}(n)$ is reminiscent of the concept of transfer operators in statistical mechanics and is henceforth referred to as such. For the remainder of this paper, we define a generalized superoperator
\begin{equation}
\voperator{E}^{A}_{B}\defis\sum_{s=1}^{q} A^{s}\otimes\overline{B^{s}},\label{eq:gmps:defgensupop}
\end{equation}
so that $\voperator{E}(n)=\voperator{E}_{A(n)}^{A(n)}$. 

\begin{subequations}
\label{eq:gmps:defvirtualdensitymatrices}
For a lattice with open boundary conditions, the expectation value of product operators $\operator{O}$ can be computed with computational complexity $\order(D^{3})$. Most interesting operators (\textit{e.g.}\ local operators, short-range interaction terms in the Hamiltonian, correlation functions) can be written as a small sum of such product operators, so that the computation of their expectation values inherit this very favorable computational complexity. Since most operators are trivial ($\operator{O}_n=\operator{\one}$) on the majority of sites, we define the sets of virtual density matrices $l=\{l(n)\}_{n=0,\ldots,N}$ and $r=\{r(n)\}_{n=0,\ldots,N}$ for the auxiliary system via the recursive definitions ($\forall n=1,\ldots,N$)
\begin{align}
l(0)=1,&\qquad l(n)=\widetilde{\mathscr{E}}^{(n)}\left(l(n-1)\right);\\
r(N)=1,&\qquad r(n-1)=\mathscr{E}^{(n)}\left(r(n)\right);
\end{align}
all of which can be computed with computational complexity $\order(D^{3})$. The expectation value of a strictly local operator is then given by $\braket{\Psi[\overline{A}]|\operator{O}_{n}|\Psi[A]}=\rbraket{l(n-1)|\voperator{E}_{O_{n}}(n)|r(n)}$ and the normalization of the state is given by $\braket{\Psi[\overline{A}]|\Psi[A]}=l(N)=r(0)=\tr[l(n) r(n)]=\rbraket{l(n)|r(n)}$, $\forall n=0,\ldots,N$. The matrices $l(n)$ and $r(n)$ are Hermitian and positive semi-definite. Further conditions on $l(n)$ or $r(n)$ can be imposed by exploiting the freedom in the MPS representation, as discussed in the next subsection. We henceforth restrict to the case of MPS with open boundary conditions. The case of translation-invariant MPS with periodic boundary conditions will be discussed in the next section.
\end{subequations}

\subsection{The principal fiber bundle of matrix product states in affine Hilbert space}
\label{ss:gmps:affinefiberbundle}
We now study the properties of the MPS representation $\ket{\Psi[A)]}$ in Eq.~(\ref{eq:gmps:defmps}) and investigate the structure that can be given to the map $\Psi\colon\mathbb{A}_{\mathrm{MPS}}\to\hilbert_{\mathcal{L}}$. The redundancy in the representation of physical states $\ket{\Psi[A]}$ by a set of matrices or tensors $A=\{A(n)\}_{n=1,\ldots,N}\in \mathbb{A}_{\mathrm{MPS}}$, \textit{i.e.}\ the non-injectivity of the map $\Psi$, has been studied in great detail in Ref.~\onlinecite{2006quant.ph..8197P} and is here reviewed, since this is essential for the remainder of the paper.

We start by introducing a group $\mathsf{G}_{\mathrm{MPS}}$ of local gauge transformations that leave the physical state $\ket{\Psi[A]}$ encoded by the set of tensors $A$ invariant.
\begin{definition}[Gauge group of MPS]
The group $\mathsf{G}_{\mathrm{MPS}}$ of local gauge transformations is defined as the direct product
\begin{equation}
\mathsf{G}_{\mathrm{MPS}}\defis\prod_{n=1}^{N} \mathsf{GL}(D_{n};\mathbb{C})=\mathsf{GL}(D_1,\mathbb{C})\times \mathsf{GL}(D_2,\mathbb{C})\times \cdots \times \mathsf{GL}(D_n,\mathbb{C}).\label{eq:gmps:mpsgaugegroup}
\end{equation}
Note that $\mathsf{G}_{\mathrm{MPS}}$ is a complex Lie group, according to Lemma~\ref{lemma:var:productgroup}, with complex dimension given by
\begin{equation}
\dim \mathsf{G}_{\mathrm{MPS}} = \sum_{n=1}^{N} D_n^2.
\end{equation}
\end{definition}
In addition, we need to specify a group action $\Gamma:\mathbb{A}_{\mathrm{MPS}}\times \mathsf{G}_{\mathrm{MPS}}\to \mathbb{A}_{\mathrm{MPS}}$ to give meaning to the invariance of MPS under the action of the gauge group. Following the standard convention in the fiber bundle literature, we choose to work with a right action.
\begin{lemma}[Group action]
The map
\begin{multline}
\Gamma\colon\mathbb{A}_{\mathrm{MPS}}\times \mathsf{G}_{\mathrm{MPS}}\to \mathbb{A}_{\mathrm{MPS}}\colon\\
 (A,G)=(\{A(n)\}_{n=1,\ldots,N},\{G(n)\}_{n=1,\ldots,N})\mapsto \Gamma[A,G]=A^{[G]}=\{A^{[G]}(n)\}_{n=1,\ldots,N}
\end{multline}
where $A^{[G]}(n)$ is defined as
\begin{equation}
\forall s=1,\ldots,q_n:\qquad A^{[G]s}(n)= G(n-1)^{-1} A^s(n) G(n)
\end{equation}
with $G(0)=G(N)$, is a right group action of $\mathsf{G}_{\mathrm{MPS}}$ on $\mathbb{A}_{\mathrm{MPS}}$. In addition, the map $\Gamma$ is holomorphic.
\end{lemma}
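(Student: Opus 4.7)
The plan is to verify the two right-action axioms and holomorphicity directly from the componentwise definition of $\Gamma$. The identity element $e\in\mathsf{G}_{\mathrm{MPS}}$ has components $e(n)=\one_{D_n}$ for every $n=0,\ldots,N$ (with $e(0)=e(N)=\one_{D_0}$), and substituting into the defining formula immediately gives $A^{[e]s}(n)=\one_{D_{n-1}}^{-1}A^s(n)\one_{D_n}=A^s(n)$, so $\Gamma[A,e]=A$.

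Next I would verify compatibility with group multiplication in the right-action convention, that is $\Gamma[\Gamma[A,G_1],G_2]=\Gamma[A,G_1 G_2]$. Since $\mathsf{G}_{\mathrm{MPS}}$ is a direct product, its multiplication is componentwise, $(G_1 G_2)(n)=G_1(n)G_2(n)$, and hence $(G_1 G_2)(n)^{-1}=G_2(n)^{-1}G_1(n)^{-1}$. A direct substitution yields
\begin{equation*}
\bigl(\Gamma[\Gamma[A,G_1],G_2]\bigr)^s(n)=G_2(n-1)^{-1}\bigl(G_1(n-1)^{-1}A^s(n)G_1(n)\bigr)G_2(n),
\end{equation*}
which coincides with $(\Gamma[A,G_1 G_2])^s(n)$ by the inversion formula above. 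The cyclic identification $G(0)=G(N)$ is preserved under both componentwise multiplication and inversion, so it introduces no additional condition to check.

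Finally, for holomorphicity I would note that the matrix product $G(n-1)^{-1}A^s(n)G(n)$ is polynomial in the entries of $A$ and in the entries of $G(n-1)^{-1}$, $G(n)$, and that taking the matrix inverse on $\mathsf{GL}(D,\mathbb{C})$ is a rational (hence holomorphic) operation by Cramer's rule, since the determinant is nowhere vanishing on the general linear group. Composition of holomorphic maps is holomorphic, so $\Gamma\colon\mathbb{A}_{\mathrm{MPS}}\times\mathsf{G}_{\mathrm{MPS}}\to\mathbb{A}_{\mathrm{MPS}}$ is holomorphic, where the complex manifold structure on the domain is provided by Lemma~\ref{lemma:var:productmanifold}. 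There is no genuine obstacle in this lemma; it is essentially a bookkeeping exercise whose purpose is to fix the conventions (right action, cyclic boundary identification, componentwise inversion) that will be used throughout the fiber bundle construction of the remainder of this section.
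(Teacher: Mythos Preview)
Your proof is correct and follows essentially the same approach as the paper: direct verification of the identity and associativity axioms via componentwise computation, followed by holomorphicity via the holomorphic dependence of the matrix inverse. The only cosmetic difference is that you invoke Cramer's rule explicitly, whereas the paper appeals to the definition of a complex Lie group for the holomorphicity of $G\mapsto G^{-1}$.
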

\begin{proof}
The following two properties are trivially fulfilled:
\begin{itemize}
\item Identity: The identity element $\one_{\mathsf{G}_{\mathrm{MPS}}}=\{\one_{D_n}\}_{n=1,\ldots,N}$ with $\one_D$ the unit matrix of size $D\times D$, acts as $\Gamma[A,\one_{\mathsf{G}_{\mathrm{MPS}}}]=A$ for any $A\in\mathbb{A}_{\mathrm{MPS}}$
\item Associativity: $\Gamma[\Gamma[A,G_1],G_2]=\Gamma(A,G_1G_2)$ for any $A\in\mathbb{A}_{\mathrm{MPS}}$ and any $G_1,G_2 \in \mathsf{G}_{\mathrm{MPS}}$, where $G_1 G_2 = \{G_1(n)G_2(n)\}_{n=1,\ldots,N}$ is the standard group product in the product group $\mathsf{G}_{\mathrm{MPS}}$.
\end{itemize}
We can thus conclude that $\Gamma$ is a right group action. 

In addition, it is easy to prove from the definition that the map $\Gamma$ from the complex manifold $\mathbb{A}_{\mathrm{MPS}}\times \mathsf{G}_{\mathrm{MPS}}$ (see Lemma~\ref{lemma:var:productmanifold}) to the complex manifold $\mathbb{A}_{\mathrm{MPS}}$ is holomorphic. For any $G\in\mathsf{G}_{\mathrm{MPS}}$, we can define an open neighborhood containing $G$ in which we can use the matrix representation of the matrices $G(n)$ in $\mathbb{C}^{D_n\times D_n}$ as coordinates. Since the matrix entries of $G^{-1}(n)$ depend on these coordinates holomorphically according to the definition of a complex Lie group, and $\mathbb{A}_{\mathrm{MPS}}$ is endowed with the standard Euclidean coordinates, it is clear that $\Gamma$ is a holomorphic map. In particular, for any $G\in\mathsf{G}_{\mathrm{MPS}}$, the map $\mathbb{A}_{\mathrm{MPS}}\to \mathbb{A}_{\mathrm{MPS}}\colon A\mapsto A^{[G]}$ is biholomorphic.
\end{proof}
\begin{corollary}[Gauge invariance of MPS]
\begin{displaymath}
\forall G\in\mathsf{G}_{\mathrm{MPS}}, \forall A\in\mathbb{A}_{\mathrm{MPS}}: \ket{\Psi[ A^{[G]}]}=\ket{\Psi[A]}.
\end{displaymath}
\end{corollary}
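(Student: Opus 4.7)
The plan is to substitute the definition of $A^{[G]}$ directly into the MPS formula \eqref{eq:gmps:defmps} and show that all the gauge matrices cancel inside the trace. This is essentially a telescoping argument combined with the cyclicity of the trace.

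First, I would write out $\ket{\Psi[A^{[G]}]}$ explicitly by substituting $A^{[G]s}(n)=G(n-1)^{-1}A^s(n)G(n)$ into the trace for each index $s_n$. Inside the trace one then sees the product
\begin{equation*}
G(0)^{-1}A^{s_1}(1)G(1)\,G(1)^{-1}A^{s_2}(2)G(2)\cdots G(N-1)^{-1}A^{s_N}(N)G(N),
\end{equation*}
and each consecutive pair $G(n)G(n)^{-1}$ collapses to the identity matrix, leaving only the boundary factors $G(0)^{-1}$ on the left and $G(N)$ on the right of the original matrix product $A^{s_1}(1)\cdots A^{s_N}(N)$.

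Second, I would use cyclicity of the trace to move $G(0)^{-1}$ to the right end, producing $G(N)G(0)^{-1}$. Invoking the boundary condition $G(0)=G(N)$ from the definition of $\Gamma$, this factor reduces to $\one_{D_N}$, so the trace simplifies to $\tr[A^{s_1}(1)\cdots A^{s_N}(N)]$, which is precisely the coefficient appearing in $\ket{\Psi[A]}$. Since this holds term-by-term for every basis state $\ket{s_1s_2\ldots s_N}$, the claimed equality of states follows.

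There is no real obstacle here: the only subtle point is remembering that the boundary condition $G(0)=G(N)$ (which mirrors $D_0=D_N$ in the definition of $\mathbb{A}_{\mathrm{MPS}}$) is what allows the leftover boundary gauge factors to annihilate one another via the trace. Without this identification the residual factor $G(N)G(0)^{-1}$ would generally be nontrivial, and gauge invariance would fail; this is also why the gauge group in \eqref{eq:gmps:mpsgaugegroup} has exactly $N$ factors rather than $N+1$.
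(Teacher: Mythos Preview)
Your argument is correct and is exactly the standard telescoping-plus-cyclicity computation the paper has in mind; indeed, the paper states this corollary without proof, treating it as immediate from the definition of the group action and of $\ket{\Psi[A]}$. Your remark about the necessity of the boundary identification $G(0)=G(N)$ is also on point.
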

Put differently, the parameter space $\mathbb{A}_{\mathrm{MPS}}$ is intersected by gauge orbits corresponding to the action of $\mathsf{G}_{\mathrm{MPS}}$, where all points on a gauge orbit correspond to equivalent representations for the same physical state $\ket{\Psi[A]}\in\hilbert_{\mathcal{L}}$.

The previous results were valid for any MPS with open or periodic boundary conditions. We now restrict to the case of open boundary conditions, and define a subset $\manifold{A}_{\mathrm{MPS}}$ of $\mathbb{A}_{\mathrm{MPS}}$.
\begin{definition}[Full-rank MPS] Consider a class of MPS with open boundary conditions ($D_0=D_N=1$). The subset $\manifold{A}_{\mathrm{MPS}}$ of \emph{full-rank MPS} with open boundary conditions is defined as
\begin{equation}
\manifold{A}_{\mathrm{MPS}}=\left\{A\in\mathbb{A}_{\mathrm{MPS}}| \forall n=0,\ldots,N: l(n)>0\ \text{and}\ r(n)>0 \right\},
\end{equation}
where the virtual density matrices of a MPS with open boundary conditions were defined in Eq.~(\ref{eq:gmps:defvirtualdensitymatrices}). Hence, all virtual density matrices should be strictly positive definite, \textit{i.e.} they should have full rank on top of being positive semi-definite. If we define$D_{n-1} q_n \times D_n$ matrices $V_{(\alpha s);\beta}(n)=A^s_{\alpha,\beta}(n)$, the positive definiteness of the left density matrices $l(n)$ requires that all matrices $V(n)$ have maximal rank, namely $\rank(V(n))=D_n$. Similarly, the $D_{n-1} \times q_n D_n$ matrices $W_{\alpha;(s\beta)}(n)=A^s_{\alpha,\beta}(n)$ should have maximal rank $D_{n-1}$ if the positive definiteness of the right density matrices $r(n)$ is given.

\end{definition}
\begin{lemma}
The subset $\manifold{A}_{\mathrm{MPS}}$ of full-rank MPS is a complex manifold with $\dim \manifold{A}_{\mathrm{MPS}}=\dim \mathbb{A}_{\mathrm{MPS}}$.
\end{lemma}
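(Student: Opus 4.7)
The plan is to appeal directly to Lemma~\ref{lemma:var:opensubsetmanifold}: since $\mathbb{A}_{\mathrm{MPS}}$ is a complex Euclidean space (hence trivially a complex manifold of the stated dimension), it suffices to show that $\manifold{A}_{\mathrm{MPS}}$ is an open subset of $\mathbb{A}_{\mathrm{MPS}}$ in the standard topology. The conclusion about the dimension then follows immediately from the lemma.

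To establish openness, I would work with the equivalent characterization given in the definition in terms of ranks of the matrices $V(n)$ and $W(n)$. These matrices are obtained by simply reshaping the tensors $A(n)$, so their entries are coordinate functions on $\mathbb{A}_{\mathrm{MPS}}$, depending polynomially (in fact linearly) on the parameters. A matrix having maximal rank is equivalent to some minor of maximal size being nonzero; the set of tensors for which a fixed such minor vanishes is the zero set of a polynomial, hence closed, and the set where \emph{at least one} such minor is nonzero is therefore open. So for each $n$, the conditions $\rank V(n) = D_n$ and $\rank W(n) = D_{n-1}$ each define an open subset of $\mathbb{A}_{\mathrm{MPS}}$, and $\manifold{A}_{\mathrm{MPS}}$ is the intersection of finitely many such open sets.

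Alternatively (and equivalently), one can argue directly from positivity of the virtual density matrices: each $l(n)$ and $r(n)$ is Hermitian and depends polynomially on $A$ and $\overline{A}$ via the recursions of Eq.~(\ref{eq:gmps:defvirtualdensitymatrices}), so the smallest eigenvalue $\lambda_{\min}(l(n))$ is a continuous function of $A$ (even when viewed as a real manifold); the set $\{A : \lambda_{\min}(l(n)) > 0\}$ is then the preimage of an open set under a continuous map, hence open. Intersecting over $n = 0, \ldots, N$ and over both $l$ and $r$ again gives $\manifold{A}_{\mathrm{MPS}}$ as an open set.

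I do not anticipate a serious obstacle here: the only subtlety is the observation that the defining conditions are polynomial/continuous in the underlying parameters, together with the standard fact that positive definiteness and maximal rank are open conditions. One should check consistency of the two characterizations (rank of $V(n)$ being $D_n$ is equivalent to $l(n) > 0$, and similarly for $W(n)$ and $r(n)$), which follows from writing $l(n) = \sum_s V^s(n)^\dagger l(n-1) V^s(n)$ in the reshaped form and inducting on $n$, starting from $l(0) = 1 > 0$. Once openness is in hand, Lemma~\ref{lemma:var:opensubsetmanifold} supplies both the manifold structure and the equality $\dim \manifold{A}_{\mathrm{MPS}} = \dim \mathbb{A}_{\mathrm{MPS}} = \sum_n D_{n-1} q_n D_n$.
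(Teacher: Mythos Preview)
Your proposal is correct and follows essentially the same approach as the paper: establish that $\manifold{A}_{\mathrm{MPS}}$ is open in $\mathbb{A}_{\mathrm{MPS}}$ and invoke Lemma~\ref{lemma:var:opensubsetmanifold}. The paper argues openness via continuity of $l(n)$, $r(n)$ and stability of linear independence under small perturbations (citing matrix perturbation theory), which is precisely your second argument; your alternative minor-based argument is a valid elementary variant of the same idea.
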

\begin{proof}
The density matrices $l(n)$ and $r(n)$ depend on the tensors $A=\{A(m)\}_{m=1,\ldots,N}$ continuously. The linear independence of the columns and rows of $l(n)$ and $r(n)$ will not be affected by sufficiently small perturbations \cite{Bhatia:1996zr,Kato:1995ys}, so that there must exist an open neighborhood around every $A\in\manifold{A}_{\mathrm{MPS}}$ where all virtual density matrices remain positive definite. Hence, $\manifold{A}_{\mathrm{MPS}}$ is an open subset of the complex Euclidean space $\mathbb{A}_{\mathrm{MPS}}$ and Lemma~\ref{lemma:var:opensubsetmanifold} can be applied.
\end{proof}

\begin{subequations}
Under the group action $A\leftarrow A^{[G]}$, we obtain the following transformation behavior for the left and right density matrices
\begin{align}
l(n)&\leftarrow l^{[G]}(n)=\lvert G(0)\rvert^{-2} G(n)^\dagger l(n) G(n),\\
r(n)&\leftarrow r^{[G]}(n)=\lvert G(N)\rvert^{2} G(n)^{-1} r(n) (G(n)^{-1})^{\dagger}.
\end{align}
and it is clear that the group action preserves the decomposition of $\mathbb{A}_{\mathrm{MPS}}$ into $\manifold{A}_{\mathrm{MPS}}$ and its complement. We now study the restriction of the group action to $\manifold{A}_{\mathrm{MPS}}$, which we still denote using $\Gamma$. The defining conditions of $\manifold{A}_{\mathrm{MPS}}$ are required to prove additional properties of the group action and to gain better insight into the structure of the gauge orbits. 
\end{subequations}
\begin{lemma}
For any $A\in\manifold{A}_{\mathrm{MPS}}$, the stabilizer subgroup $\mathsf{G}^{[A]}\subset\mathsf{G}_{\mathrm{MPS}}$ of transformations $G$ that leave $A$ invariant (\textit{i.e.}\ $\Gamma[A,G]=A$) is given by $\mathsf{G}^{[A]}=\{ c\one_{\mathsf{G}_{\mathrm{MPS}}}=\{c\one_{D_n}\}_{n=1,\ldots,N}|c\in\mathbb{C}_0\}\cong \mathsf{GL}(1,\mathbb{C})$, with $\mathbb{C}_0$ the set of all non-zero complex numbers.
\end{lemma}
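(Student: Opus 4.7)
The plan is to exploit the full-rank structure of $\manifold{A}_{\mathrm{MPS}}$ combined with the open boundary conditions $D_0=D_N=1$ to propagate a scalar constraint from the boundary into every link. The stabilizer condition $\Gamma[A,G]=A$ reads, site by site,
\begin{equation*}
A^s(n)\,G(n)=G(n-1)\,A^s(n),\qquad \forall n=1,\ldots,N,\ \forall s=1,\ldots,q_n.
\end{equation*}
Because $D_0=1$, the ``matrix'' $G(0)=G(N)$ is just a nonzero scalar $c\in\mathbb{C}_0$, providing the base case of an induction on $n$.

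The induction step is where the full-rank hypothesis does the work. Assume $G(n-1)=c\one_{D_{n-1}}$. Then the stabilizer condition becomes $A^s(n)\,G(n)=c\,A^s(n)$ for every $s$. Stacking these $q_n$ equations into the $(D_{n-1}q_n)\times D_n$ matrix $V(n)$ defined by $V_{(\alpha s),\beta}(n)=A^s_{\alpha,\beta}(n)$, they consolidate into
\begin{equation*}
V(n)\,\bigl(G(n)-c\,\one_{D_n}\bigr)=0.
\end{equation*}
Since $A\in\manifold{A}_{\mathrm{MPS}}$, the matrix $V(n)$ has maximal column rank $D_n$, so it has trivial kernel when acting on $\mathbb{C}^{D_n}$ by right multiplication. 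Hence each column of $G(n)-c\,\one_{D_n}$ vanishes, giving $G(n)=c\,\one_{D_n}$ and closing the induction.

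Carrying the induction all the way to $n=N$ produces $G(n)=c\,\one_{D_n}$ for every $n=1,\ldots,N$. Consistency with the boundary identification $G(N)=G(0)$ is automatic, since $G(N)=c\,\one_1=c=G(0)$. Conversely, any element of the form $c\,\one_{\mathsf{G}_{\mathrm{MPS}}}$ manifestly leaves each $A^s(n)$ invariant under $A^s(n)\mapsto c^{-1}A^s(n)c=A^s(n)$, so the stabilizer is exactly $\{c\,\one_{\mathsf{G}_{\mathrm{MPS}}}\mid c\in\mathbb{C}_0\}\cong\mathsf{GL}(1,\mathbb{C})$.

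I do not anticipate a genuine obstacle: the only real subtlety is that the full-rank condition as defined includes \emph{both} $V(n)$ and $W(n)$ having maximal rank, and one should make sure only the minimum needed is invoked. The argument above uses only the $V(n)$ conditions (equivalently, positive-definiteness of the $l(n)$); a symmetric argument running the induction backwards from $n=N$ via the $W(n)$ conditions would work equally well, which is a useful sanity check but not needed for the statement.
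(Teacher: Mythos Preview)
Your proof is correct and follows essentially the same inductive strategy as the paper: start from the scalar $G(0)=c$ forced by $D_0=1$ and propagate $G(n)=c\,\one_{D_n}$ site by site. The only cosmetic difference is that the paper phrases the induction step via invertibility of the left density matrices $l(n)$ rather than the full column rank of $V(n)$, but these are equivalent here and, as you observe, only the ``left'' part of the full-rank hypothesis is actually used.
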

\begin{proof}
A necessary condition for $A^{[G]}=A$ is that, for any $n=1,\ldots,N$
\begin{displaymath}
\sum_{s=1}^{q_n} A^s(n)^\dagger l(n-1) \left(A^{[G]}\right)^s(n) = l(n). 
\end{displaymath}
Since $G(0)=G(N)\in \mathsf{GL}(1,\mathbb{C})$, we have $G(0)=G(N)=c\in \mathbb{C}_{0}$. Applying the condition above for $n=1$ results in
\begin{displaymath}
\sum_{s=1}^{q_1} A^s(1)^\dagger l(0) \left(A^{[G]}\right)^s(1) = \frac{1}{c} l(1) G(1) = l(1).
\end{displaymath}
Since, for a full-rank MPS $A\in\manifold{A}_{\mathrm{MPS}}$, all $l(n)$ are positive definite and hence invertible, the equation above can be used to solve for $G(1)$ by left multiplication with $l(1)^{-1}$, resulting in $G(1)=c\one_{D_1}$. Continuing along these lines, we obtain $G(n)=c \one_{D_n}$, for all $n=1,\ldots,N$. It can indeed be checked that this particular choice $G=c\one_{\mathsf{G}_{\mathrm{MPS}}}\defis\{c\one_{D_n}\}_{n=1,\ldots,N}$ results in $\Gamma(A,G)=A$. Hence, we obtain $\mathsf{G}^{[A]}=\{c\one|c\in \mathbb{C}_0\}\cong\mathsf{GL}(1,\mathbb{C})$.
\end{proof}

\begin{definition}[Structure group of matrix product states] The structure group $\mathsf{S}_{\mathrm{MPS}}$ of MPS is defined as $\mathsf{G}_{\mathrm{MPS}}/\mathsf{GL}(1,\mathbb{C})$. Since $\mathsf{G}_{\mathrm{MPS}}$ is a direct product group containing $\mathsf{GL}(D_N,\mathbb{C})$ with $D_N=1$, we obtain
\begin{equation}
\mathsf{S}_{\mathrm{MPS}}\defis \mathsf{G}_{\mathrm{MPS}}/\mathsf{GL}(1,\mathbb{C})\cong \prod_{n=1}^{N-1} \mathsf{GL}(D_n,\mathbb{C}).
\label{eq:gmps:mpsstructuregroup}
\end{equation}
\end{definition}
The use of the name `structure group' will become obvious further down. $\mathsf{S}_{\mathrm{MPS}}$ is also a complex Lie group with $\dim \mathsf{S}_{\mathrm{MPS}}=\dim \mathsf{G}_{\mathrm{MPS}}-1$. The group action $\Gamma$ of $\mathsf{G}_{\mathrm{MPS}}$ also defines a group action for $\mathsf{S}_{\mathrm{MPS}}$ by setting $G(0)=G(N)=1$. Obviously, this group action is also holomorphic. We use the same symbol $\Gamma$ as the difference is clear from specifying the domain. 
\begin{corollary}
The group action $\Gamma:\manifold{A}_{\mathrm{MPS}}\times \mathsf{S}_{\mathrm{MPS}}\to \manifold{A}_{\mathrm{MPS}}$ is \emph{free}, \textit{i.e.} the stabilizer subgroup $\mathsf{S}^A$ of any $A\in\manifold{A}_{\mathrm{MPS}}$ is given by the trivial group $\{\one_{\mathsf{S}_{\mathrm{MPS}}}\}$ containing only the identity $\one_{\mathsf{S}_{\mathrm{MPS}}}=\{\one_{D_n}\}_{n=1,\ldots,N-1}$.
\end{corollary}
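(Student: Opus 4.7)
The plan is to deduce this directly from the previous lemma, since the corollary is essentially a bookkeeping statement about how the stabilizer of $A$ in $\mathsf{G}_{\mathrm{MPS}}$ intersects the canonical embedding of $\mathsf{S}_{\mathrm{MPS}}$ into $\mathsf{G}_{\mathrm{MPS}}$. Recall that because $D_0=D_N=1$, the factor $\mathsf{GL}(D_N,\mathbb{C})=\mathsf{GL}(1,\mathbb{C})$ in the direct product defining $\mathsf{G}_{\mathrm{MPS}}$ is precisely what we quotient out to form $\mathsf{S}_{\mathrm{MPS}}$, and the group action $\Gamma\colon\manifold{A}_{\mathrm{MPS}}\times\mathsf{S}_{\mathrm{MPS}}\to\manifold{A}_{\mathrm{MPS}}$ is defined by lifting any $G\in\mathsf{S}_{\mathrm{MPS}}$ to its unique representative in $\mathsf{G}_{\mathrm{MPS}}$ with $G(0)=G(N)=1$.

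Given this setup, I would proceed as follows. Take an arbitrary $G\in\mathsf{S}^{A}$ and view it as an element of $\mathsf{G}_{\mathrm{MPS}}$ through the canonical lift just described, so $G=\{G(1),\ldots,G(N-1)\}$ together with $G(0)=G(N)=1$. By assumption $\Gamma[A,G]=A$, so $G$ lies in the $\mathsf{G}_{\mathrm{MPS}}$-stabilizer of $A$, which the previous lemma identifies with $\{c\one_{\mathsf{G}_{\mathrm{MPS}}}\mid c\in\mathbb{C}_{0}\}$. Thus $G=c\one_{\mathsf{G}_{\mathrm{MPS}}}$ for some $c\in\mathbb{C}_{0}$; in particular $G(0)=c=1$, which forces every component $G(n)=\one_{D_n}$. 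Hence $G=\one_{\mathsf{S}_{\mathrm{MPS}}}$, establishing $\mathsf{S}^{A}=\{\one_{\mathsf{S}_{\mathrm{MPS}}}\}$.

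There is no real obstacle here: the substantive work was already carried out in the preceding lemma, where the positive-definiteness of $l(n)$ (equivalently, the full-rank hypothesis defining $\manifold{A}_{\mathrm{MPS}}$) was used to propagate the constraint $G(0)=c$ forward and pin down each $G(n)$. The quotient by $\mathsf{GL}(1,\mathbb{C})$ simply eliminates the residual freedom in $c$ by the choice of boundary normalization, so the stabilizer collapses to the identity and the action becomes free.
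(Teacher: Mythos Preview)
Your argument is correct and is precisely the intended one: the paper states this as a corollary without separate proof, treating it as immediate from the preceding lemma together with the identification of $\mathsf{S}_{\mathrm{MPS}}$ with the subgroup of $\mathsf{G}_{\mathrm{MPS}}$ obtained by setting $G(0)=G(N)=1$. Your explicit unpacking---lifting $G$ to $\mathsf{G}_{\mathrm{MPS}}$, invoking the lemma to get $G=c\one_{\mathsf{G}_{\mathrm{MPS}}}$, and then reading off $c=1$ from the boundary normalization---is exactly the reasoning the paper leaves implicit.
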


It can also be checked that elements $G=c\one_{\mathsf{G}_{\mathrm{MPS}}}$ leave any $A\in\mathbb{A}_{\mathrm{MPS}}$ invariant. Hence, this normal subgroup also corresponds to the \emph{kernel} of the group action. For elements in the complement of $\manifold{A}_{\mathrm{MPS}}$, the stabilizer subgroup will typically be larger so that even the action of the structure group $\mathsf{S}_{\mathrm{MPS}}$ on the whole space $\mathbb{A}_{\mathrm{MPS}}$ is not free. This illustrates the importance of restricting to the subset $\manifold{A}_{\mathrm{MPS}}$ of full-rank MPS. Note, however, that the proof above only requires that all left virtual density matrices $l=\{l(n)\}_{n=0,\ldots,N}$ are positive definite. A similar proof can be constructed by only using that all right virtual density matrices are positive definite. The simultaneous positive definiteness of both sets of density matrices, as imposed in the definition of $\manifold{A}_{\mathrm{MPS}}$, is used below.

Finally, we also need to show that the group action $\Gamma:\manifold{A}_{\mathrm{MPS}}\times \mathsf{S}_{\mathrm{MPS}}\to \manifold{A}_{\mathrm{MPS}}$ is \emph{proper}. A group action is proper if the map $\manifold{A}_{\mathrm{MPS}}\times\mathsf{S}_{\mathrm{MPS}}\to \manifold{A}_{\mathrm{MPS}}\times \manifold{A}_{\mathrm{MPS}}\colon (A,G)\mapsto (A,A^{[G]}=\Gamma[A,G])$ is proper, \textit{i.e.}\ if the preimage of compact subset of $\manifold{A}_{\mathrm{MPS}}\times \manifold{A}_{\mathrm{MPS}}$ corresponds to compact subsets $\manifold{A}_{\mathrm{MPS}}\times\mathsf{S}_{\mathrm{MPS}}$. We cite the following theorem without proof\cite{Kaup:1967fk}:
\begin{theorem}
A group action $\Gamma:\manifold{M}\times \mathsf{G}\to \manifold{M}$ of a real Lie group $\mathsf{G}$ on a complex manifold $\manifold{M}$ is proper if $\mathsf{G}$ is a closed subgroup of $\mathrm{Aut}(\manifold{M})$, the group of holomorphic automorphisms on $\manifold{M}$, and preserves a continuous distance on $\manifold{M}$.
\label{th:gmps:kaup}
\end{theorem}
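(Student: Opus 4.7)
The plan is to prove properness via the sequential criterion: any sequence $(G_n)$ in $\mathsf{G}$ and $(m_n)$ in $\manifold{M}$ with $m_n \to m$ and $\Gamma(m_n, G_n) \to m'$ must admit a subsequence $G_{n_k} \to G \in \mathsf{G}$. Thinking of each $G_n$ as an element of $\mathrm{Aut}(\manifold{M})$ equipped with the compact-open topology, I would extract this convergent subsequence by an Arzel\`{a}--Ascoli argument applied to the restrictions of $\{G_n\}$ to compact subsets of $\manifold{M}$, then invoke closedness of $\mathsf{G}$ to place the limit back in $\mathsf{G}$.

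The key leverage is distance preservation. On any compact $K \subset \manifold{M}$ the family $\{G_n|_K\}$ is equicontinuous because $d(G_n(p), G_n(q)) = d(p, q)$ uniformly in $n$. Pointwise $d$-boundedness follows from the hypothesis: for any $p \in K$ the triangle inequality yields $d(G_n(p), m') \leq d(p, m_n) + d(G_n(m_n), m')$, and both terms are bounded in $n$. Assuming, as is implicit in the phrase ``continuous distance on $\manifold{M}$'', that closed $d$-bounded sets near $m'$ are relatively compact in the manifold topology, the orbits $\{G_n(p)\}$ are relatively compact for each $p$. Arzel\`{a}--Ascoli then extracts a subsequence $G_{n_k}$ converging uniformly on compact subsets to a continuous map $G_\infty \colon \manifold{M} \to \manifold{M}$.

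Applying the same argument to $(G_n^{-1})$, which also lies in $\mathsf{G}$ and preserves $d$, produces a continuous two-sided inverse, so $G_\infty$ is a homeomorphism. Holomorphicity of $G_\infty$ is automatic: uniform limits on compacts of holomorphic maps between complex manifolds are holomorphic by the Weierstrass convergence theorem applied in local charts. Hence $G_\infty \in \mathrm{Aut}(\manifold{M})$, and since $\mathsf{G}$ is closed in $\mathrm{Aut}(\manifold{M})$ with the compact-open topology, $G_\infty \in \mathsf{G}$. This furnishes the sequential characterisation of properness and hence the theorem.

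The main obstacle is converting $d$-boundedness into genuine relative compactness in the manifold topology: the abstract phrase ``continuous distance'' does not in itself rule out metrics whose bounded sets are non-compact, so the hypothesis should be interpreted as imposing a local compatibility between the $d$-topology and the manifold topology (for instance, that $(\manifold{M}, d)$ is locally proper around each point, as is the case for the Kobayashi or Carath\'{e}odory distance on a hyperbolic complex manifold). Once that compatibility is secured, the remaining ingredients---equicontinuity from isometry, Weierstrass closure under uniform limits, and the closed embedding of $\mathsf{G}$ into $\mathrm{Aut}(\manifold{M})$---combine without further difficulty.
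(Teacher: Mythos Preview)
The paper does not prove this theorem at all: it is explicitly introduced with the phrase ``We cite the following theorem without proof'' and attributed to Kaup. There is therefore no in-paper argument to compare your proposal against; the theorem functions purely as an imported black box that the authors then apply by constructing the specific distance function $D_{\mathrm{MPS}}$ in Eq.~(\ref{eq:gmps:distance}).

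As for your sketch itself, the overall strategy (sequential criterion, equicontinuity from isometry, Arzel\`{a}--Ascoli, Weierstrass, closedness of $\mathsf{G}$) is the standard route to results of this type and is essentially how Kaup's argument proceeds. You have also correctly identified the one genuine gap: the passage from $d$-boundedness to relative compactness in the manifold topology is not automatic for an arbitrary ``continuous distance''. In Kaup's original setting this is handled by additional structural assumptions on the distance (compatibility with the complex structure, or that $d$ induces the manifold topology and is locally proper), and the paper's informal statement has suppressed these. Your caveat in the final paragraph is therefore exactly right, and without resolving it the argument is incomplete; but since the paper itself supplies no proof, you have already gone further than the source you are being compared to.
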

A few remarks are in order. Any complex Lie group is also a real Lie group. Since our group action $\Gamma$ is a holomorphic map from the complex manifold $\manifold{A}_{\mathrm{MPS}}\times \mathsf{S}_{\mathrm{MPS}}$ to the complex manifold $\manifold{A}_{\mathrm{MPS}}$, we have that in particular the map $\Gamma^{[G]}:\manifold{A}_{\mathrm{MPS}}\to \manifold{A}_{\mathrm{MPS}}$ is a holomorphic automorphism for any $G\in \mathsf{S}_{\mathrm{MPS}}$. Hence, in order to be able to apply Theorem~\ref{th:gmps:kaup}, we only need a distance function on $\manifold{A}_{\mathrm{MPS}}$. We therefore define
\begin{equation}
D_{\mathrm{MPS}}[A_0,A_1]=\varmin_{A(t)} \int_{0}^{1} \sum_{n=1}^{N} \rbraket{l(n-1;t)| \mathbb{E}^{\dot{A}(n;t)}_{\dot{A}(n;t)}|r(n;t)}\, \rmd t\label{eq:gmps:distance}
\end{equation}
for any piecewise smooth path $A:[0,1]\to \manifold{A}_{\mathrm{MPS}}:t\mapsto A(t)$ with $A(0)=A_0$ and $A(1)=A_1$, and with $l(n;t)$ and $r(n;t)$ the density matrices defined by $A(t)$. The integrandum is inspired by, but not equivalent to, the pullback metric obtained in Subsection~\ref{ss:gmps:metric}. Whereas the pullback metric would result in a zero distance between different points $A_0$ and $A_1$ which are gauge equivalent (\textit{i.e.} $A_1=\Gamma[A_0,G]$ for some $G$), the integrandum used above is strictly positive for $A(t)\in\manifold{A}_{\mathrm{MPS}}$ and any two distinct points $A_0\neq A_1$ result in $D_{\mathrm{MPS}}[A_0,A_1])>0$.  Note that the positiveness of the distance depends on both sets of density matrices ($\{l(n)\}_{n=0,\ldots,N}$ and $\{r(n)\}_{n=0,\ldots,N}$) having full rank. In addition, by substituting $A(t)\leftarrow A^{[G]}(t)$, $l(n)\leftarrow l^{[G]}(n)=G(n)^{\dagger} l(n) G(n)$ and $r^{[G]}(n)\leftarrow G(n)^{-1} r(n) (G(n)^{\dagger})^{-1}$, we obtain $D_{\mathrm{MPS}}[A_0,A_1]=D_{\mathrm{MPS}}[A_0^{[G]},A_1^{[G]}]$ for any $G\in\mathsf{S}_{\mathrm{MPS}}$, so that the distance function $D$ is invariant under the action of $\mathsf{S}_{\mathrm{MPS}}$. We can thus conclude:
\begin{corollary}
The group action $\Gamma:\manifold{A}_{\mathrm{MPS}}\times \mathsf{S}_{\mathrm{MPS}}\to \manifold{A}_{\mathrm{MPS}}$ is proper.
\end{corollary}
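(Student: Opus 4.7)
The plan is to invoke Theorem~\ref{th:gmps:kaup} (Kaup's theorem), which has already been flagged as the right tool. Three ingredients must be verified: (i) $\mathsf{S}_{\mathrm{MPS}}$ embeds as a closed subgroup of $\mathrm{Aut}(\manifold{A}_{\mathrm{MPS}})$; (ii) the expression $D_{\mathrm{MPS}}$ defined in Eq.~(\ref{eq:gmps:distance}) is actually a continuous distance on $\manifold{A}_{\mathrm{MPS}}$; and (iii) the group action preserves $D_{\mathrm{MPS}}$. Of these, (iii) is essentially the displayed identity $D_{\mathrm{MPS}}[A_0,A_1]=D_{\mathrm{MPS}}[A_0^{[G]},A_1^{[G]}]$ that the author already records, so the real work is in (i) and (ii).

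For (ii), I would verify the three axioms of a distance in turn. Symmetry is immediate by reversing the parameterization $t\mapsto 1-t$ of any admissible path. The triangle inequality follows from concatenation of paths and passing to the infimum. Positivity is the crucial point: the integrand $\rbraket{l(n{-}1;t)|\voperator{E}^{\dot{A}(n;t)}_{\dot{A}(n;t)}|r(n;t)}$ is, in matrix form, $\sum_{s}\tr\!\bigl[l(n{-}1)\,\dot{A}^{s}(n)\,r(n)\,\dot{A}^{s}(n)^{\dagger}\bigr]$, which for $A(t)\in\manifold{A}_{\mathrm{MPS}}$ (where $l(n)>0$ and $r(n)>0$) is a strictly positive Hermitian form in the tensor entries $\dot{A}^{s}(n)$. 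Hence the integrand vanishes at time $t$ only if $\dot{A}(n;t)=0$ for all $n$, so any non-constant path has strictly positive length and $D_{\mathrm{MPS}}[A_{0},A_{1}]>0$ whenever $A_{0}\neq A_{1}$. Continuity of $D_{\mathrm{MPS}}$ with respect to the standard topology on $\manifold{A}_{\mathrm{MPS}}$ is standard for length-type distances built from a continuously varying quadratic form: near any $A_{0}$, straight-line paths give an upper bound that is continuous in the endpoint, and a lower bound comes from the same quadratic form on a compact coordinate neighborhood.

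For (i), the map $G\mapsto\Gamma[\cdot,G]$ is a group homomorphism from $\mathsf{S}_{\mathrm{MPS}}$ into $\mathrm{Aut}(\manifold{A}_{\mathrm{MPS}})$, and it is injective because the action of $\mathsf{S}_{\mathrm{MPS}}$ on $\manifold{A}_{\mathrm{MPS}}$ is free (established in the preceding corollary). To see that the image is closed, I would take a sequence $G_{k}\in\mathsf{S}_{\mathrm{MPS}}$ such that the associated automorphisms converge in $\mathrm{Aut}(\manifold{A}_{\mathrm{MPS}})$ to some $\phi$, pick any base point $A\in\manifold{A}_{\mathrm{MPS}}$, and extract the matrices $G_{k}(n)$ from $A^{[G_{k}]}$ by exploiting the full-rank conditions: the relation $A^{[G]s}(n)=G(n{-}1)^{-1}A^{s}(n)G(n)$ combined with the full rank of the reshaped matrices $V(n)$ and $W(n)$ lets one solve algebraically for each $G(n)$ in terms of $A$ and $A^{[G]}$, so $G_{k}(n)$ converges to an invertible limit and the limiting automorphism $\phi$ is itself of the form $\Gamma[\cdot,G_\infty]$. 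Together with (ii) and (iii), Theorem~\ref{th:gmps:kaup} then yields properness.

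I expect the main technical obstacle to be checking the invariance (iii) cleanly: one must verify that under $A(t)\mapsto A^{[G]}(t)$, the simultaneous substitutions $l(n;t)\mapsto G(n)^{\dagger}l(n;t)G(n)$, $r(n;t)\mapsto G(n)^{-1}r(n;t)(G(n)^{\dagger})^{-1}$ and $\dot{A}(n;t)\mapsto G(n{-}1)^{-1}\dot{A}(n;t)G(n)$ combine so that every $G(n)$ factor cancels inside $\rbraket{l(n{-}1)|\voperator{E}^{\dot A}_{\dot A}|r(n)}$; this is a direct but bookkeeping-heavy computation that uses the cyclicity of the trace and the tensor-product structure of $\voperator{E}$. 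The closedness argument in (i) is conceptually the subtlest point, but it is made tractable by the explicit invertibility of the gauge action on full-rank tensors.
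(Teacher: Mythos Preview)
Your approach is the same as the paper's: invoke Theorem~\ref{th:gmps:kaup} and verify its hypotheses using the distance $D_{\mathrm{MPS}}$ of Eq.~(\ref{eq:gmps:distance}), with positivity coming from the strict positive-definiteness of all $l(n)$ and $r(n)$ on $\manifold{A}_{\mathrm{MPS}}$, and invariance from the transformation rules $l^{[G]}(n)=G(n)^{\dagger}l(n)G(n)$, $r^{[G]}(n)=G(n)^{-1}r(n)(G(n)^{\dagger})^{-1}$. The paper's argument is actually more cursory than yours: it simply asserts that ``we only need a distance function'' and does not explicitly address the closed-subgroup hypothesis that you treat in part~(i), nor the metric axioms you spell out in~(ii); your extra care on these points is warranted but does not change the route.
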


We can now use one of the basic theorems from the fiber bundle literature \cite{Lee:2002uq,Duistermaat:2004fk}:
\begin{theorem}[Quotient manifold theorem]
If $\manifold{A}$ is a smooth manifold, $\mathsf{S}$ a Lie group and $\Gamma\colon\manifold{A}\times \mathsf{S}\to\manifold{A}$ a smooth, free and proper group action then
\begin{itemize}
\item The orbit space $\manifold{A}/\mathsf{S}$ is a smooth manifold with $\dim \manifold{A}/\mathsf{S}=\dim \manifold{A}-\dim \mathsf{S}$.
\item The natural projection $\pi\colon\manifold{A}\to \manifold{A}/\mathsf{S}$ is a smooth submersion
\end{itemize}
\label{th:gmps:quotientmanifold}
\end{theorem}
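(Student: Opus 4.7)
The plan is to exhibit an explicit smooth atlas on $\manifold{A}/\group{S}$ by constructing local \emph{slices} transverse to the group orbits, and then to verify that the natural projection $\pi\colon\manifold{A}\to \manifold{A}/\group{S}$ becomes a submersion in these coordinates; the dimension formula falls out because each slice has dimension $\dim \manifold{A}-\dim\group{S}$ by construction. The first step is to identify each orbit $\manifold{O}_A=\{\Gamma[A,G]\mid G\in\group{S}\}$ as an embedded submanifold diffeomorphic to $\group{S}$. The orbit map $\sigma_A\colon\group{S}\to\manifold{A}\colon G\mapsto\Gamma[A,G]$ is smooth and injective (by freeness), and its differential $(\rmd\sigma_A)_{\one}$ is also injective, since any $X$ in the kernel would generate a one-parameter subgroup fixing $A$, forcing $X=0$ by triviality of the stabilizer. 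Equivariance propagates injectivity of the differential to every point, so $\sigma_A$ is an injective immersion, and properness upgrades this to an embedding, because a proper injective immersion is a homeomorphism onto its (closed) image.

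Next I would construct, through each $A\in\manifold{A}$, a slice $\manifold{Z}_A\subset\manifold{A}$ with $T_A\manifold{Z}_A$ a linear complement to $(\rmd\sigma_A)_{\one}(T_{\one}\group{S})$, such that the map $\Phi\colon\manifold{Z}_A\times\group{S}\to\manifold{A}\colon(B,G)\mapsto\Gamma[B,G]$ is a diffeomorphism onto a $\group{S}$-invariant open neighborhood of $\manifold{O}_A$. At $(A,\one)$ the differential of $\Phi$ is by construction an isomorphism, so the inverse function theorem gives a local diffeomorphism. Properness is then used to shrink $\manifold{Z}_A$ so that distinct slice points never lie on the same orbit; otherwise sequences $B_k,B_k'\to A$ on the slice could be linked by group elements $G_k$ escaping to infinity, contradicting the proper map condition. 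Equipped with this tube, I use $\pi\rvert_{\manifold{Z}_A}\colon\manifold{Z}_A\to\pi(\manifold{Z}_A)$ as a chart on $\manifold{A}/\group{S}$; transition maps between two overlapping slice charts factor through $\Phi^{-1}$ followed by projection onto the slice factor, and are thus smooth. Hausdorffness of $\manifold{A}/\group{S}$ follows because the orbit equivalence relation, as the image of the proper map $(A,G)\mapsto(A,\Gamma[A,G])$, is closed in $\manifold{A}\times\manifold{A}$; second countability descends from that of $\manifold{A}$. In the local product coordinates $\manifold{Z}_A\times\group{S}$ supplied by $\Phi$, the projection $\pi$ is literally projection onto the first factor, hence smooth and a submersion, which establishes both assertions of the theorem.

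The main obstacle is the passage from \emph{pointwise} transversality, which is just linear algebra plus the inverse function theorem, to a \emph{genuine product tube around the whole orbit}: because $\group{S}_{\mathrm{MPS}}$ is non-compact, one must rule out the scenario where slice points arbitrarily close in $\manifold{A}$ are identified by group elements arbitrarily far away in $\group{S}$. This is precisely what properness buys, and it is the reason the paper goes to the trouble of constructing the invariant distance $D_{\mathrm{MPS}}$ in Eq.~(\ref{eq:gmps:distance}) and appealing to Theorem~\ref{th:gmps:kaup}. Without properness, $\manifold{A}/\group{S}$ can fail to be Hausdorff and each slice can meet infinitely many orbits; with it, the slice construction works uniformly along every orbit and both conclusions of the quotient manifold theorem follow.
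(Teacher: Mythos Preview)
Your sketch is essentially the standard slice-theorem proof of the quotient manifold theorem, and the outline is correct: orbits are embedded copies of $\group{S}$ by freeness plus properness, a transversal slice through each point gives a local product tube via the inverse function theorem, properness is used to shrink the slice so that distinct slice points lie on distinct orbits, the slices furnish an atlas on the quotient, and in the tube coordinates $\pi$ is projection onto the slice factor, hence a submersion. The dimension count is immediate.

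However, you should be aware that the paper does \emph{not} prove this theorem. It is stated without proof and attributed to the fiber bundle literature (Lee, \textit{Introduction to Smooth Manifolds}; Duistermaat--Kolk, \textit{Lie Groups}). The paper invokes it as a black box in order to conclude that $\manifold{A}_{\mathrm{MPS}}/\group{S}_{\mathrm{MPS}}$ is a smooth manifold, once freeness and properness of the $\group{S}_{\mathrm{MPS}}$-action have been established. So there is no ``paper's own proof'' to compare against; your write-up is a correct sketch of the classical argument that the cited references carry out in full. The one place where your exposition connects back to the paper is your observation that properness is the crucial ingredient for non-compact $\group{S}$, which is exactly why the paper constructs the invariant distance $D_{\mathrm{MPS}}$ and appeals to Theorem~\ref{th:gmps:kaup}.
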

Correspondingly, $\pi\colon\manifold{A}\to \manifold{A}/\mathsf{S}$ is a principal fiber bundle with total space $\manifold{A}$, base space $\manifold{A}/\mathsf{S}$ and structure group $\mathsf{S}$. Since the complex manifold $\manifold{A}_{\mathrm{MPS}}$ is in the first place a smooth manifold, and the holomorphic group action $\Gamma$ is a smooth map, they can be used to conclude that the MPS orbit space $\manifold{A}_{\mathrm{MPS}}/\mathsf{S}_{\mathrm{MPS}}$ is a smooth manifold.

In addition, for complex manifolds and a complex Lie group, we also state the following theorem\cite{Huybrechts:2004vn}:
\begin{theorem}
Let $\Gamma\colon\manifold{A}\times\mathsf{S}\to \manifold{A}$ be a holomorphic, free and proper group action of a complex Lie group $\mathsf{S}$ to a complex manifold $\manifold{A}$. Then the orbit space $\manifold{A}/\mathsf{S}$ is a complex manifold and the quotient map $\pi\colon\manifold{A}\to\manifold{A}/\mathsf{S}$ is holomorphic.\label{th:gmps:complexquotientmanifold}
\end{theorem}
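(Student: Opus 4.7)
The plan is to upgrade the smooth structure on $\manifold{A}/\mathsf{S}$ provided by the smooth quotient manifold theorem (Theorem~\ref{th:gmps:quotientmanifold}) to a complex structure by exhibiting a holomorphic atlas whose transition functions are automatically holomorphic thanks to the holomorphy of the group action. The atlas will be constructed via local holomorphic slices: for every $A\in\manifold{A}$, I would find a complex submanifold $\manifold{S}_A\subset\manifold{A}$ of complex dimension $\dim\manifold{A}-\dim\mathsf{S}$ passing through $A$ and transverse to the orbit $\mathsf{S}\cdot A$, such that $\pi|_{\manifold{S}_A}$ is a homeomorphism onto an open neighborhood of $\pi(A)$ in $\manifold{A}/\mathsf{S}$. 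The pair $(\pi(\manifold{S}_A),(\pi|_{\manifold{S}_A})^{-1})$ then serves as a holomorphic chart.

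First I would construct the slice. For fixed $A$, the orbit map $\Gamma_A\colon \mathsf{S}\to\manifold{A}\colon G\mapsto\Gamma[A,G]$ is holomorphic, and because the action is free its holomorphic tangent map $(\rmd\Gamma_A)_{\one}\colon T_{\one}\mathsf{S}^+\to T_A\manifold{A}^+$ is an injective $\mathbb{C}$-linear map. I would pick any complex linear complement $W_A$ of its image inside $T_A\manifold{A}^+$, and in some local holomorphic chart around $A$ realize a small polydisk inside $W_A$ as a holomorphic submanifold $\manifold{S}_A$ through $A$. By construction, the combined map $\mu\colon\mathsf{S}\times\manifold{S}_A\to\manifold{A}\colon(G,B)\mapsto\Gamma[B,G]$ is holomorphic with invertible differential at $(\one,A)$. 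The holomorphic inverse function theorem then upgrades $\mu$ to a biholomorphism from an open neighborhood of $(\one,A)$ onto an open set in $\manifold{A}$, which in particular exhibits $\manifold{S}_A$ as a local holomorphic slice for the orbit foliation near $A$.

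Second, I would promote this infinitesimal slice property to a genuine section of the quotient map on a neighborhood of $\pi(A)$. Here properness enters: after possibly shrinking $\manifold{S}_A$, no two distinct points in $\manifold{S}_A$ lie on the same $\mathsf{S}$-orbit, so $\pi|_{\manifold{S}_A}$ is injective and a homeomorphism onto an open subset of $\manifold{A}/\mathsf{S}$ by the already established smooth quotient structure. For two overlapping slices $\manifold{S}_A$ and $\manifold{S}_B$, the transition map sends $A'\in\manifold{S}_A$ with $\pi(A')\in\pi(\manifold{S}_B)$ to the unique $B'\in\manifold{S}_B$ satisfying $B'=\Gamma[A',G(A')]$ for a uniquely determined $G(A')\in\mathsf{S}$. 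Holomorphy of the transition follows from the local biholomorphism for the analogue of $\mu$ near $B$: writing $(G(A'),B')=\mu_B^{-1}(A')$, both $G(A')$ and $B'$ depend holomorphically on $A'$, and hence so does the composition. Consequently $\manifold{A}/\mathsf{S}$ is a complex manifold, and in slice coordinates $\pi$ is a coordinate projection, hence holomorphic.

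The main obstacle is the interplay between the purely local, linear-algebraic construction of the slice, which uses only freeness and the holomorphic inverse function theorem, and the global requirement that the slice meet each nearby orbit exactly once. Freeness alone yields injectivity of $\mu$ at $(\one,A)$ itself; one needs properness to rule out sequences of group elements that could bring distant points of $\manifold{S}_A$ back onto the same orbit, which is already encoded in the smooth quotient manifold theorem invoked above. Once the slice is both transverse (from freeness plus the implicit function theorem) and a genuine local section of $\pi$ (from properness), compatibility with the complex structure on $\manifold{A}$ is automatic, and the holomorphy of $\pi$ drops out of the construction.
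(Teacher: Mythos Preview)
The paper does not provide its own proof of this theorem: it is cited as a standard result from the complex geometry literature (Huybrechts), in the same spirit as the preceding smooth quotient manifold theorem, which is likewise quoted without proof. So there is no ``paper's proof'' to compare against.

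That said, your sketch is essentially the standard textbook argument and is correct in outline. The construction of a holomorphic slice via a complex complement to the vertical tangent space and the holomorphic inverse function theorem is exactly how one upgrades the smooth quotient to a complex one; your identification of where freeness (injectivity of the orbit-map differential) and properness (each slice meeting nearby orbits exactly once, so that slice charts glue) enter is accurate. One small point worth tightening: the step ``after possibly shrinking $\manifold{S}_A$, no two distinct points in $\manifold{S}_A$ lie on the same $\mathsf{S}$-orbit'' is usually argued directly from properness rather than by appealing to the smooth quotient theorem, since the latter already presupposes the slice-type argument you are reconstructing; a cleaner route is to show that if no such shrinking were possible, one could produce a sequence $(B_n,B_n',G_n)$ with $B_n,B_n'\to A$, $B_n'=\Gamma[B_n,G_n]$, and $G_n\not\to\one$, contradicting properness. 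With that adjustment, your argument is self-contained and matches the standard proof found in the reference the paper cites.
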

In addition, the natural projection map or quotient map $\pi\colon\manifold{A}\to\manifold{A}/\mathsf{S}$ has a \emph{universal property}:
\begin{lemma}
Any holomorphic map $\Psi\colon\manifold{A}\to \manifold{M}$ that is invariant under the action of $\mathsf{S}$ factorizes as $\Psi=\psi \circ \pi$, where $\psi$ is a holomorphic map from the orbit space $\manifold{A}/\mathsf{S}$ to $\varM$.\label{lemma:gmps:factorization}
\end{lemma}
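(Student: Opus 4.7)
The plan is to first define $\psi$ set-theoretically, then upgrade it from a well-defined map to a continuous one, and finally to a holomorphic one, using the structure of $\pi$ as a holomorphic principal bundle projection established in Theorem~\ref{th:gmps:complexquotientmanifold}.

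First, I would observe that the $\mathsf{S}$-invariance of $\Psi$ means that $\Psi(A_1)=\Psi(A_2)$ whenever $A_1$ and $A_2$ lie in the same $\mathsf{S}$-orbit, i.e.\ whenever $\pi(A_1)=\pi(A_2)$. Consequently the set-theoretic prescription $\psi([A])\defis \Psi(A)$ is independent of the representative $A\in [A]$, so it defines a unique map $\psi\colon \manifold{A}/\mathsf{S}\to \manifold{M}$ satisfying $\Psi=\psi\circ\pi$. Uniqueness is automatic since $\pi$ is surjective: any two maps $\psi_1,\psi_2$ with $\psi_i\circ\pi=\Psi$ must agree on every point $[A]=\pi(A)$ of $\manifold{A}/\mathsf{S}$.

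Next, continuity of $\psi$ is immediate from the universal property of the quotient topology: a map out of $\manifold{A}/\mathsf{S}$ is continuous if and only if its composition with $\pi$ is, and $\psi\circ\pi=\Psi$ is continuous because $\Psi$ is holomorphic. The real content of the lemma is the holomorphicity of $\psi$, which I would establish locally. Here I would invoke the fact, provided by Theorem~\ref{th:gmps:quotientmanifold}, that $\pi$ is a smooth submersion, and more strongly by Theorem~\ref{th:gmps:complexquotientmanifold} that $\pi$ is holomorphic of maximal rank. Around any point $[A_0]\in\manifold{A}/\mathsf{S}$ one can therefore choose a holomorphic local section $\sigma\colon \manifold{U}\to \manifold{A}$ of $\pi$ defined on an open neighborhood $\manifold{U}\subset \manifold{A}/\mathsf{S}$ of $[A_0]$ (equivalently, one can choose holomorphic coordinates on $\manifold{A}$ adapted to the fibers of $\pi$, in which $\pi$ is a coordinate projection). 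On $\manifold{U}$ we have
\begin{equation*}
\psi|_{\manifold{U}} = \psi\circ\pi\circ\sigma = \Psi\circ\sigma,
\end{equation*}
which exhibits $\psi|_{\manifold{U}}$ as a composition of two holomorphic maps and is therefore holomorphic. Covering $\manifold{A}/\mathsf{S}$ by such neighborhoods yields holomorphicity of $\psi$ globally.

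The main obstacle is the step of producing the local holomorphic section (or, equivalently, the adapted holomorphic coordinate chart), which rests on the fact that $\pi$ is a holomorphic submersion of complex manifolds. Once this is granted, the rest of the argument is formal. In the present MPS context this is not problematic, since the principal bundle structure established earlier in the subsection (free and proper holomorphic action of the complex Lie group $\mathsf{S}_{\mathrm{MPS}}$) guarantees the existence of such local holomorphic trivializations by the standard construction.
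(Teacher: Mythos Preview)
Your proof is correct and is the standard argument for the universal property of holomorphic quotients. Note, however, that the paper does not actually supply a proof of this lemma: it is stated as a known fact accompanying Theorem~\ref{th:gmps:complexquotientmanifold} (both are cited from the complex geometry literature), and is then used as a black box in the proof of the main fiber-bundle theorem. So there is no ``paper's own proof'' to compare against; what you have written is precisely the standard justification one would give if asked to fill in the details, and it is sound.
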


According to Theorem~\ref{th:gmps:complexquotientmanifold}, the MPS orbit space $\mathcal{A}_{\mathrm{MPS}}/\mathsf{S}_{\mathrm{MPS}}$ is a complex manifold. Since the MPS representation $\Psi\colon\manifold{A}_{\mathrm{MPS}}\to\hilbert_{\manifold{L}}$ is invariant under the group action, Lemma~\ref{lemma:gmps:factorization} dictates that it has a natural restriction $\psi:\manifold{A}_{\mathrm{MPS}}/\mathsf{S}_{\mathrm{MPS}}\to \hilbert_{\manifold{L}}$, which is also holomorphic. By the transition from $\mathbb{A}_{\mathrm{MPS}}$ to the subset of full-rank $\manifold{A}_{\mathrm{MPS}}$, this restricted map is made injective, as we now show in what is considered the main result of this subsection:
\begin{theorem}
The variational class of MPS $\Psi\colon\manifold{A}_{\mathrm{MPS}}\to\varM_{\mathrm{MPS}}$ is a principal fiber bundle with structure group $\mathsf{S}_{\mathrm{MPS}}$, base manifold $\varM_{\mathrm{MPS}}$, total manifold $\manifold{A}_{\mathrm{MPS}}$ and bundle projection $\Psi$. The variational manifold $\varM_{\mathrm{MPS}}$ is a complex manifold that is biholomorphic to the orbit space $\manifold{A}_{\mathrm{MPS}}/\mathsf{S}_{\mathrm{MPS}}$ and thus has dimension
\begin{equation}
\dim \varM_{\mathrm{MPS}}=\dim \manifold{A}_{\mathrm{MPS}}-\dim \mathsf{S}_{\mathrm{MPS}}=\dim \mathbb{A}_{\mathrm{MPS}}-\dim \mathsf{G}_{\mathrm{MPS}}+1.
\end{equation}
\end{theorem}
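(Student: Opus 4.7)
The plan is to assemble the theorem from three ingredients that have already been set up: the quotient manifold theorem, the universal factorization property, and a gauge-uniqueness statement that I still need to establish. First, I would invoke Theorem~\ref{th:gmps:complexquotientmanifold} applied to the holomorphic, free, proper action $\Gamma \colon \manifold{A}_{\mathrm{MPS}}\times\mathsf{S}_{\mathrm{MPS}}\to\manifold{A}_{\mathrm{MPS}}$ (all three properties are the corollaries just proven) to conclude that the orbit space $\manifold{A}_{\mathrm{MPS}}/\mathsf{S}_{\mathrm{MPS}}$ is a complex manifold, that the natural projection $\pi$ is a holomorphic submersion, and that $\pi\colon\manifold{A}_{\mathrm{MPS}}\to\manifold{A}_{\mathrm{MPS}}/\mathsf{S}_{\mathrm{MPS}}$ has the structure of a principal fiber bundle with structure group $\mathsf{S}_{\mathrm{MPS}}$. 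The dimension count from Theorem~\ref{th:gmps:quotientmanifold} immediately yields $\dim \manifold{A}_{\mathrm{MPS}}/\mathsf{S}_{\mathrm{MPS}} = \dim \manifold{A}_{\mathrm{MPS}} - \dim \mathsf{S}_{\mathrm{MPS}} = \dim \mathbb{A}_{\mathrm{MPS}} - \dim \mathsf{G}_{\mathrm{MPS}} + 1$.

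Second, since $\Psi$ is holomorphic and gauge-invariant, Lemma~\ref{lemma:gmps:factorization} gives a unique holomorphic factorization $\Psi = \psi \circ \pi$ with $\psi\colon\manifold{A}_{\mathrm{MPS}}/\mathsf{S}_{\mathrm{MPS}}\to\hilbert_{\mathcal{L}}$. The image of $\psi$ is exactly $\varM_{\mathrm{MPS}}$. The whole theorem then reduces to showing that $\psi$ is a biholomorphism onto $\varM_{\mathrm{MPS}}$, because then the principal bundle structure of $\pi$ transports along $\psi$ to give the claimed bundle structure on $\Psi$.

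The main obstacle, and the only substantive step, is proving that $\psi$ is injective: if $\ket{\Psi[A_1]}=\ket{\Psi[A_2]}$ with $A_1,A_2\in\manifold{A}_{\mathrm{MPS}}$, then $A_2=A_1^{[G]}$ for some $G\in\mathsf{S}_{\mathrm{MPS}}$. I would do this by reducing both representations to a common canonical form. Because all $l(n)$ are positive definite, the full-rank property of the matrices $V(n)$ with entries $V_{(\alpha s);\beta}(n)=A^s_{\alpha,\beta}(n)$ allows one, at each site, to perform a QR-type decomposition on $V(n)$ and absorb the right factor into $A(n+1)$; iterating from $n=1$ produces a gauge transformation in $\mathsf{S}_{\mathrm{MPS}}$ bringing any $A\in\manifold{A}_{\mathrm{MPS}}$ into a left-orthonormal (canonical) form in which $l(n)=\one_{D_n}$. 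Applied to both $A_1$ and $A_2$ one obtains canonical representatives $\tilde A_1, \tilde A_2$ with $\ket{\Psi[\tilde A_1]}=\ket{\Psi[\tilde A_2]}$. Proceeding site by site from $n=1$, comparing partial contractions of the two MPS against arbitrary bra-states of the remaining sites and using the injectivity of left-orthonormal tensors (which have maximal column rank), one forces $\tilde A_1^s(n)=\tilde A_2^s(n)$ for all $s,n$; the boundary condition $D_N=1$ eliminates the would-be residual unitary gauge. Hence $A_1$ and $A_2$ lie in the same $\mathsf{S}_{\mathrm{MPS}}$-orbit, and $\psi$ is injective.

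Finally, since $\pi$ is a holomorphic submersion and $\Psi$ is holomorphic of the same rank, $\psi$ has full rank everywhere and is therefore a holomorphic immersion; combined with injectivity and properness of the quotient construction, $\psi$ is a biholomorphism onto its image $\varM_{\mathrm{MPS}}$, which is thus a complex submanifold of $\hilbert_{\mathcal{L}}$. Transporting the principal bundle $\pi$ along $\psi$ identifies $\Psi\colon\manifold{A}_{\mathrm{MPS}}\to\varM_{\mathrm{MPS}}$ as a principal fiber bundle with structure group $\mathsf{S}_{\mathrm{MPS}}$, and the dimension formula from the first step gives the stated value of $\dim \varM_{\mathrm{MPS}}$. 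I expect all the real work to be concentrated in the canonical-form construction and its uniqueness; the rest is bookkeeping on top of the quotient manifold theorem.
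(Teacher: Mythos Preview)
Your overall architecture matches the paper exactly: invoke Theorems~\ref{th:gmps:quotientmanifold} and~\ref{th:gmps:complexquotientmanifold} to get the complex quotient manifold and principal bundle $\pi$, factor $\Psi=\psi\circ\pi$ via Lemma~\ref{lemma:gmps:factorization}, and reduce everything to the injectivity of $\psi$. The dimension count is identical.

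The only genuine difference is in how injectivity of $\psi$ is argued. The paper does it via successive Schmidt decompositions of the coefficient tensor $c_{(s_1,\ldots,s_n),(s_{n+1},\ldots,s_N)}$ across each bipartite cut: full rank of $l(n)$ and $r(n)$ forces the Schmidt rank at bond $n$ to be exactly $D_n$, and the Schmidt coefficients (eigenvalues of $\sqrt{l(n)^{1/2}r(n)l(n)^{1/2}}$) together with the Schmidt vectors are invariants of the physical state, fixed up to precisely the $\mathsf{S}_{\mathrm{MPS}}$ freedom. Your route via QR-based left-orthonormalization is a legitimate alternative from the MPS literature, but your claim that the site-by-site comparison forces $\tilde A_1=\tilde A_2$ outright is slightly too strong: left-orthonormality alone leaves a residual unitary gauge at each internal bond, and the boundary condition $D_N=1$ does not kill it. What you actually get is $\tilde A_1=\tilde A_2^{[U]}$ for some unitary $U\in\mathsf{S}_{\mathrm{MPS}}$, which is all you need. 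The Schmidt-decomposition argument has the advantage of working directly with state invariants, so the gauge ambiguity is transparent from the start rather than appearing as a residual to be tracked.
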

\begin{proof}
Since $\varM_{\mathrm{MPS}}$ is defined as the image of $\psi\colon\manifold{A}_{\mathrm{MPS}}/\mathsf{S}_{\mathrm{MPS}}\to\hilbert_{\mathcal{L}}$ or, equivalently, the image of $\Psi\colon\manifold{A}_{\mathrm{MPS}}\to\hilbert_{\mathcal{L}}$ and we already know that $\psi$ is holomorphic, it suffices to show that $\psi\colon\manifold{A}_{\mathrm{MPS}}/\mathsf{S}_{\mathrm{MPS}}\to\hilbert_{\mathcal{L}}$ is injective. Put differently, we need to show that the preimage of any point $\ket{\Psi[A]}$ for the map $\Psi\colon\manifold{A}_{\mathrm{MPS}}\to \hilbert_{\mathcal{L}}$ corresponds precisely to the gauge orbit $\{A^{[G]}, G\in\mathsf{S}_{\mathrm{MPS}}\}$, which is also known as the fiber corresponding to the base point $\ket{\Psi[A]}$.

A recipe for finding the MPS representation of any state $\ket{\Psi}$ in the Hilbert space $\hilbert_{\mathcal{L}}$ of a one-dimensional lattice with open boundary conditions is given in Ref.~\onlinecite{2003PhRvL..91n7902V}. It is based on a series of Schmidt decompositions (singular value decompositions) of the coefficient matrix $c_{(s_1,\ldots,s_n),(s_{n+1},\ldots,s_N)}$ corresponding to bipartite cuts between any two sites $n$ and $n+1$. For a state $\ket{\Psi(A)}$ with $A\in\mathbb{A}_{\mathrm{MPS}}$, the Schmidt rank (number of nonzero singular values) corresponding to a cut between sites $n$ and $n+1$ is at most $D_n$ and the corresponding Schmidt coefficients (singular values) are given by the eigenvalues of $\sqrt{l(n)^{1/2} r(n) l(n)^{1/2}}$ or, equivalently, the eigenvalues $\sqrt{r(n)^{1/2} l(n) r(n)^{1/2}}$. By restricting to $A\in\manifold{A}_{\mathrm{MPS}}$, all matrices $l(n)$ and $r(n)$ are strictly positive definite. Correspondingly, all Schmidt coefficients are non-zero and the Schmidt rank corresponding to the cut between sites $n$ and $n+1$ is precisely $D_n$. Under these conditions, the Schmidt decomposition becomes unique, up to permutations and degeneracies in the Schmidt coefficients. This freedom corresponds precisely to transformations $G\in\mathsf{S}_{\mathrm{MPS}}$. This construction can also be found in Ref.~\onlinecite{2006quant.ph..8197P}. 
\end{proof}

\begin{figure}
\begin{center}
\includegraphics{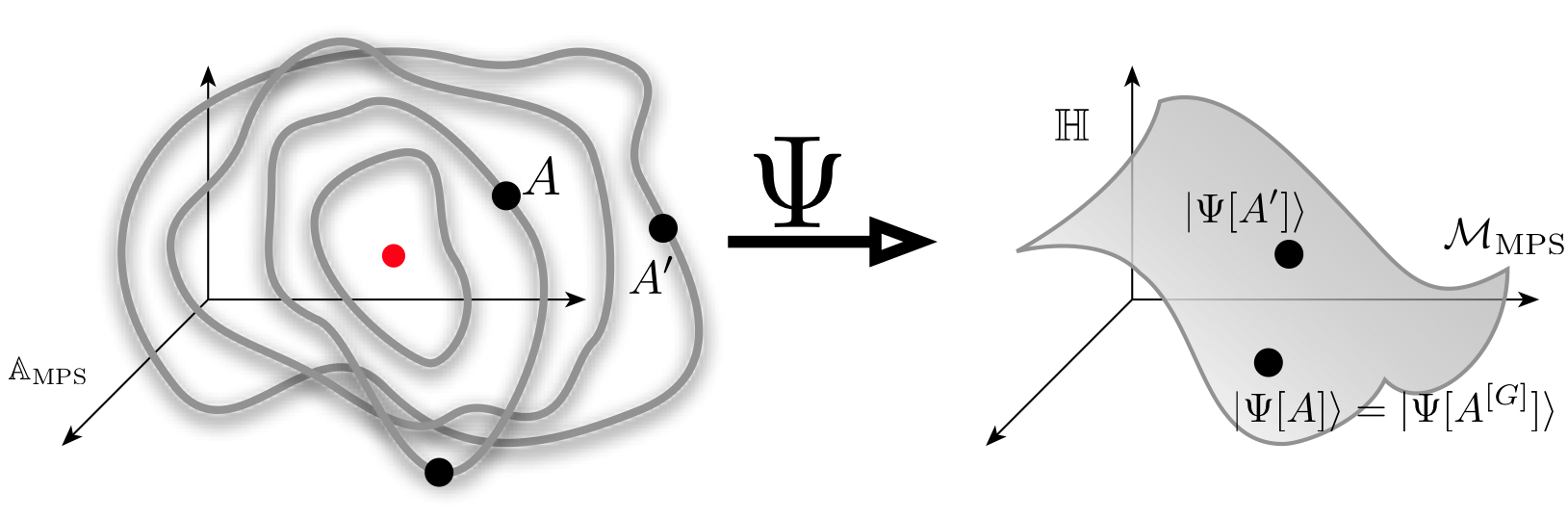}
\caption{Illustration of the principal fiber bundle interpretation of the MPS prescription. The closed lines in parameter space $\mathbb{A}_{\mathrm{MPS}}$ correspond to gauge orbits that are mapped to identical physical states in Hilbert space $\hilbert$. The dot in the middle corresponds to a MPS that does not have full rank. The gauge orbit looks fundamentally different and this point has to be excluded from the set $\manifold{A}_{\mathrm{MPS}}$ in order to define a principal fiber bundle.}
\label{fig:mps:fiberbundle}
\end{center} 
\end{figure}

A visualization of the interpretation of MPS as a principal fiber bundle is presented in FIG.~\ref{fig:mps:fiberbundle} 
For a normalized state $\braket{\Psi[\overline{A}]|\Psi[A]}=1$, the square of the Schmidt coefficients sums to one, since $\braket{\Psi[\overline{A}]|\Psi[A]}=\tr[ l(n)r(n) ]$. The \emph{entanglement spectrum} is then defined as spectrum of normalized squared Schmidt coefficients, or the negative logarithm thereof, depending on the convention. 

One can now try to associate a unique $A\in\manifold{A}_{\mathrm{MPS}}$ to every $\ket{\Psi[A]}\in\manifold{M}_{\mathrm{MPS}}$, or equivalently, to every point in $\manifold{A}_{\mathrm{MPS}}/\mathsf{S}_{\mathrm{MPS}}$. This corresponds to constructing a bundle section $\varphi\colon\varM_{\mathrm{MPS}}\to \manifold{A}_{\mathrm{MPS}}$, \textit{i.e.} a right inverse of $\Psi$: $\Psi\circ \varphi =\mathrm{Id}_{\varM_{\mathrm{MPS}}}$. While there is no easy way to explicitly specify a cross section, it is possible to try to characterize the points $A\in \varphi(\varM_{\mathrm{MPS}})$ by specifying a number of conditions that they should satisfy. Physically, these are called gauge fixing conditions and a MPS $A$ fulfilling these conditions is said to be in a \emph{canonical form}. Any MPS $A^{\prime}$ can be brought into a canonical form $A=\Gamma(A^{\prime},G)$ by acting with a transformation $G\in\mathsf{S}_{\mathrm{MPS}}$. In Ref.~\onlinecite{2006quant.ph..8197P}, a right-canonical form was constructed in two steps as:
\begin{itemize}
\item Firstly, the right orthonormalization condition is imposed (for all $n>1$)
\begin{equation}
\sum_{s=1}^{q_{n}}A^{s}(n) A^{s}(n)^{\dagger}=\mathscr{E}^{(n)}(\one_{D_{n}})=\one_{D_{n-1}}
\end{equation}
so that $r(n)=\one_{D_{n}}$ and the gauge freedom is reduced from $G(n)\in \mathsf{GL}(n;\mathbb{C})$ to $G(n)\in \mathsf{U}(n)$ ($\forall n=1,\ldots,N-1$);
\item Secondly, the left density matrices $l(n)$ are diagonalized using the remaining unitary gauge freedom. Clearly, $l(n)/\tr[l(n)]$ contains the entanglement spectrum corresponding to a bipartite cut between site $n-1$ and site $n$. In addition, we obtain $\tr[l(n)]=r(0)=\braket{\Psi[\overline{A}]|\Psi[A]}$ for any $n=1,\ldots,N$.
\end{itemize}
Alternatively, a left-canonical form can be defined. We refer to \citet{2006quant.ph..8197P,2011AnPhy.326...96S} for an efficient algorithm to obtain the canonical form starting from an arbitrary MPS. Strictly speaking, these gauge-fixing conditions do not identify a unique point within the gauge orbit even when the entanglement spectrum is non-degenerate, since there is still a freedom of choice in the phase of the eigenvectors that are used to diagonalize $l(n)$, for every $n=1,\ldots,N-1$. This residual gauge freedom corresponds to
\begin{displaymath}
\prod_{n=1}^{N-1}\underbrace{\mathsf{U}(1)\times \mathsf{U}(1)\times\cdots \times\mathsf{U}(1)}_{\text{$D_n$ times}}.
\end{displaymath}

Finally, we elaborate on the difference between the set of full-rank MPS and the set of injective MPS as defined in \citet{2006quant.ph..8197P}. For the set of full-rank MPS, the restricted map $\psi:\manifold{A}_{\mathrm{MPS}}/\mathsf{S}_{\mathrm{MPS}}\to \varM_{\mathrm{MPS}}$ becomes injective. Given the physical state $\ket{\Psi[A]}$ in any way, we can determine from it the unique fiber $\{A^{[G]},G\in\mathsf{S}_{\mathrm{MPS}}\}$ corresponding to this state. The set of injective MPS are smaller, as it corresponds to those states such that the unique fiber $\{A^{[G]},G\in\mathsf{S}_{\mathrm{MPS}}\}$ can be determined from local information about $\ket{\Psi[A]}$ only. More precisely, the fiber $\{A^{[G]},G\in\mathsf{S}_{\mathrm{MPS}}\}$ ---and thus also the state $\ket{\Psi[A]}$--- is completely determined by the set of all local density matrices of $\ell$ subsequent sites, where $\ell$ is called the \emph{injectivity length}. These states can then be obtained as unique ground states of local parent Hamiltonians, which is not necessarily the case for all full-rank MPS. In the next section on uniform MPS, we restrict to the smaller set of injective MPS, as they are the only ones that have a unambiguous thermodynamic limit. 

\subsection{Tangent bundles and the principal connection}
\label{ss:gmps:tangent}
Using the pushforward $\rmd \Psi$ of the bundle projection $\Psi$, we can define a bundle map $\rmd\Psi\colon T\manifold{A}_{\mathrm{MPS}}^+\to T\manifold{M}_{\mathrm{MPS}}^+$ between the holomorphic tangent bundle of $\manifold{A}_{\mathrm{MPS}}$ and the holomorphic tangent bundle of $\manifold{M}_{\mathrm{MPS}}$. At any point $A$ in $\manifold{A}_{\mathrm{MPS}}$, we have $T_A \manifold{A}_{\mathrm{MPS}}^+ \cong \mathbb{A}_{\mathrm{MPS}}$ by virtue of Lemma~\ref{lemma:var:opensubsetmanifold}. In addition, the holomorphic tangent space $T_{\ket{\Psi[A]}}\varM_{\mathrm{MPS}}^+\subset\hilbert_{\manifold{L}}$ is biholomorphic to a subspace of $\hilbert_{\manifold{L}}$, as denoted in Subsection~\ref{ss:var:hilbert}. It was illustrated in Ref.~\onlinecite{2012PhRvB..85c5130P,2012PhRvB..85j0408H} that this subspace defines a useful variational class to study the excited states of a Hamiltonian for which $\ket{\Psi[A]}$ is a good ground state approximation.

As in the previous section, we define a map $\ket{\Phi}\colon T\manifold{A}_{\mathrm{MPS}}^{+}\to \hilbert_{\mathcal{L}}\colon (B,A)\mapsto \ket{\Phi[B;A]}$ with
\begin{equation}
\begin{split}
\ket{\Phi[B;A]}&\defis\sum_{n=1}^{N}\sum_{i=1}^{D_{n-1} q_{n} D_{n}} B^{i}(n) \frac{\partial\ }{\partial A^{i}(n)} \ket{\Psi[A(n)]}\\
&=\sum_{n=1}^{N}\left(\sum_{s_{1}=1}^{q_{1}}\cdots\sum_{s_{n}=1}^{q_{n}}\cdots \sum_{s_{N}=1}^{q_{N}} \tr\left[A^{s_{1}}(1)\cdots B^{s_{n}}(n) \cdots A^{s_{N}}(N) \right] \ket{s_{1}s_{2}\cdots s_{N}}\right),\label{eq:gmps:defmpstangentvector}
\end{split}
\end{equation}
where $i$ is a collective index $i=(\alpha,s,\beta)$ that combines the physical index $s$ and the matrix indices $\alpha$ and $\beta$. A general tangent vector $\ket{\Phi[B;A]}$ is thus built from $N$ MPS, where one of the tensors $A(n)$ is replaced by $B(n)$. For the sake of brevity, we also introduce the notation $T_{\ket{\Psi[A]}}\varM_{\mathrm{MPS}}^{+}=\Tplane_{\mathrm{MPS}}^{[A]}$ and define the linear homomorphism $\rmd\Psi_A\defis\Phi^{[A]}$ as
\begin{equation}
\Phi^{[A]}\colon T_A\manifold{A}_{\mathrm{MPS}}^+\cong \mathbb{A}_{\mathrm{MPS}}\to \Tplane_{\mathrm{MPS}}^{[A]}\colon B\mapsto \ket{\Phi^{[A]}[B]}=\ket{\Phi[B;A]}.
\end{equation}
Often, we omit the explicit notation of the base point $A$ in the notation of the tangent space $\Tplane_{\mathrm{MPS}}$ and its vectors $\ket{\Phi[B]}$ if this is clear from the context.

Since for any $G\in\mathsf{S}_{\mathrm{MPS}}$, the map $\Gamma^{[G]}\colon\mathbb{A}_{\mathrm{MPS}}\to \mathbb{A}_{\mathrm{MPS}}\colon A\mapsto \Gamma^{[G]}(A)=\Gamma(A,G)=A^{[G]}$ describes a biholomorphism, we can define the pushforward biholomorphism of the tangent bundle to itself
\begin{equation}
\rmd\Gamma^{[G]}\colon T\mathbb{A}_{\mathrm{MPS}}\to T\mathbb{A}_{\mathrm{MPS}}\colon(A,B)\mapsto (A^{[G]},B^{[G]})
\end{equation}
where the group action on the tangent vectors $B$ is given by
\begin{equation}
\begin{split}B^{[G]}&=\rmd \Gamma^{[G]}_A(B)=\sum_{m=1}^N B^i(m) \frac{\partial \ }{\partial A^i(m)}A^{[G]}\\
&=\left\{B^{[G]}(n)=\sum_{m=1}^N B^i(m) \frac{\partial A^{[G]}(n)}{\partial A^i(m)}\right\}_{n=1,\ldots,N}
\end{split}
\end{equation}
which results in
\begin{equation}
\left(B^{[G]}\right)^s(n)=G(n-1)^{-1} B^s(n) G(n).
\end{equation}
We have thus defined a group action on the complex tangent bundle $T\manifold{A}^+_{\mathrm{MPS}}$, which is also a complex manifold in its own. 
\begin{definition}[Group action on tangent bundle]
Given the group action $\Gamma\colon\manifold{A}_{\mathrm{MPS}}\times\mathsf{S}_{\mathrm{MPS}}\to\manifold{A}_{\mathrm{MPS}}\colon (A,G)\mapsto \Gamma[A,G]=A^{[G]}$, a group action on the tangent bundle $T\manifold{A}^+_{\mathrm{MPS}}$ can be defined, which is denoted as $\rmd\Gamma$ and given by the prescription
\begin{equation}
\rmd \Gamma\colon T\manifold{A}^+_{\mathrm{MPS}}\times \mathsf{S}_{\mathrm{MPS}}\to T\manifold{A}^+_{\mathrm{MPS}}\colon ((A,B),G)\mapsto \rmd\Gamma^{[\Gamma]}[A;B]=(A^{[G]},B^{[G]}).
\end{equation}
\end{definition}
It can be proven that the group action $\rmd \Gamma$ of the complex Lie group $\mathsf{S}_{\mathrm{MPS}}$ on the complex manifold $T\manifold{A}^+_{\mathrm{MPS}}$ is also free, proper and holomorphic, and allows one to define a principal fiber bundle $T\manifold{A}^+_{\mathrm{MPS}}\to (T\manifold{A}^+_{\mathrm{MPS}})/\mathsf{S}_{\mathrm{MPS}}$, where $(T\manifold{A}^+_{\mathrm{MPS}})/\mathsf{S}_{\mathrm{MPS}}$ can also be given the structure of a complex manifold. In addition, it can easily be checked that the map $\rmd \Psi$ is invariant under the action of $\mathsf{S}_{\mathrm{MPS}}$. Indeed, explicit insertion in the definition shows $\ket{\Phi[B^{[G]};A^{[G]}]}=\ket{\Phi[B;A]}$ for any $G\in\mathsf{S}_{\mathrm{MPS}}$. According to Lemma~\ref{lemma:gmps:factorization}, the map $\rmd \Psi$ thus has a natural restriction to a map $(T\manifold{A}^+_{\mathrm{MPS}})/\mathsf{S}_{\mathrm{MPS}}\to T\varM^+_{\mathrm{MPS}}$, which is typically also denoted using the same symbol $\rmd \Psi$. 

Note, however, that $(T\manifold{A}^+_{\mathrm{MPS}})/\mathsf{S}_{\mathrm{MPS}}$ is not equal to $T(\manifold{A}_{\mathrm{MPS}}/\mathsf{S}_{\mathrm{MPS}})^+$ and that the restriction of $\rmd \Psi$ is not the same as $\rmd \psi$. This can easily be seen by counting dimensions. Whereas $\dim (T\manifold{A}^+_{\mathrm{MPS}})/\mathsf{S}_{\mathrm{MPS}}= 2 \dim \manifold{A}_{\mathrm{MPS}} - \dim \mathsf{S}_{\mathrm{MPS}}$, we have $\dim T(\manifold{A}_{\mathrm{MPS}}/\mathsf{S}_{\mathrm{MPS}})^+= 2 (\dim \manifold{A}_{\mathrm{MPS}} - \dim \mathsf{S}_{\mathrm{MPS}})$. One particular consequence is that, while $\rmd \psi$ would be injective, the restriction $\rmd \Psi\colon (T\manifold{A}^+_{\mathrm{MPS}})/\mathsf{S}_{\mathrm{MPS}}\to T\varM^+_{\mathrm{MPS}}$ is not. Indeed, after the multiplicative gauge freedom of $\mathsf{S}_{\mathrm{MPS}}$ is eliminated by \textit{e.g.}\ fixing a particular $A\in \manifold{A}_{\mathrm{MPS}}$, the linear homomorphism $\Phi^{[A]}$ has a non-trivial kernel $\mathbb{N}^{[A]}\subset \mathbb{A}_{\mathrm{MPS}}$ that contains vectors $\sum_{n=1}^N B^i \left.\frac{\partial\ }{\partial A^i(n)} \right|_A$ that are tangent to the fibers of the bundle $\Psi\colon\manifold{A}_{\mathrm{MPS}}\to \varM_{\mathrm{MPS}}$. The tangent vectors in the null space $\mathbb{N}^{[A]}\subset \mathbb{A}_{\mathrm{MPS}}$ were called zero modes in Ref.~\onlinecite{2012PhRvB..85j0408H}. In the fiber bundle literature, $\mathbb{N}^{[A]}\subset \mathbb{A}_{\mathrm{MPS}}$ is called the vertical subspace of $T_A\manifold{A}^+_{\mathrm{MPS}}$, and we can define a vertical subbundle $\mathrm{Ver}\ T\manifold{A}_{\mathrm{MPS}}^+$ with base manifold $\manifold{A}_{\mathrm{MPS}}$ and the fiber at base point $A$ given by $\mathbb{N}^{[A]}=\mathrm{Ver}\ T_A \manifold{A}^+_{\mathrm{MPS}}$. Naturally, we obtain $\ket{\Phi^{[A]}[B+B']}=\ket{\Phi^{[A]}[B]}$ for any $B'\in \mathbb{N}^{[A]}$. In order to associate a unique parameterization $B$ to every tangent vector of $\Tplane^{[A]}_{\mathrm{MPS}}$, we need to introduce an Ehresmann connection, for which there are a number of equivalent definitions. The Ehresmann connection defines at each point $A\in \manifold{A}_{\mathrm{MPS}}$ a horizontal subspace $\mathbb{B}^{[A]}\defis\mathrm{Hor}\ T_A \manifold{A}_{\mathrm{MPS}}^+$ such that
\begin{equation}
T_A \manifold{A}_{\mathrm{MPS}}^+\cong \mathbb{A}_{\mathrm{MPS}}=\mathrm{Ver}\ T_A \manifold{A}_{\mathrm{MPS}}^+\oplus \mathrm{Hor}\ T_A \manifold{A}^+_{\mathrm{MPS}}=\mathbb{N}^{[A]}\oplus \mathbb{B}^{[A]}.
\end{equation}
Every tangent vector $\ket{\Phi[B]}\in \mathbb{T}_{\mathrm{MPS}}$ then has a unique representation $B\in\mathbb{B}^{[A]}$, which is called the horizontal lift of $\ket{\Phi[B]}$. The Ehresmann connection can also be introduced as a one-form $\nu$ that takes value in the vertical subspace and acts like a projection on it, \textit{i.e.}\ $\nu_A$ is a map from $T_A\manifold{A}^+_{\mathrm{MPS}}\cong\mathbb{A}_{\mathrm{MPS}}$ to $\mathbb{N}^{[A]}$ such that $\nu_A[B]=B$ for any $B\in\mathbb{N}^{[A]}$. The horizontal subspace is then defined as
\begin{equation}
\mathbb{B}^{[A]}=\mathrm{ker}\ \nu_A=\left\{ B\in \mathbb{A}_{\mathrm{MPS}}| \nu_A[B]=0\right\}.
\end{equation}

Since the fiber bundle $\Psi\colon\manifold{A}_{\mathrm{MPS}}\to\varM_{\mathrm{MPS}}$ is a principal $\mathsf{G}$-bundle with structure group $\mathsf{S}_{\mathrm{MPS}}$, the vertical subspace $\mathbb{N}^{[A]}$ is isomorphic to the Lie algebra $\mathfrak{s}_{\mathrm{MPS}}$ of the structure group. We obtain
\begin{equation}
\mathfrak{s}_{\mathrm{MPS}}=\bigoplus_{n=1}^{N-1} \mathfrak{gl}(D_n,\mathbb{C})\cong \bigoplus_{n=1}^{N-1} \mathbb{C}^{D_n\times D_n}.
\end{equation}
We can define a map $\mathscr{N}\colon\mathfrak{s}_{\mathrm{MPS}}\to \mathrm{Ver}\ T\manifold{A}^+_{\mathrm{MPS}}\colon x=\{x(n)\}_{n=1,\ldots,N-1} \mapsto \mathscr{N}[x]$ with
\begin{equation}
\mathscr{N}_A[x]=\sum_{n=1}^{N} \mathscr{N}_A^i[x](n)\frac{\partial\ }{\partial A^i(n)}\in\mathbb{N}^{[A]}\label{eq:gmps:defzeromodes}
\end{equation}
and components given by
\begin{equation}
\mathscr{N}_A^s[x](n)=A^s(n) x(n)-x(n-1) A^s(n).
\end{equation}
The vector field $\mathscr{N}[x]$ acts as an infinitesimal generator for the group action of $G=\exp(x)$. Hence, any vertical tangent vector $B\in\mathbb{N}^{[A]}$ corresponds precisely to one element $x\in \mathfrak{s}_{\mathrm{MPS}}$ via $B=\mathscr{N}_A[x]$. We can then define the horizontal subspace as the kernel of a principal connection, which is defined as a Lie-algebra valued one-form $\omega$ such that $\omega_A\left[\mathscr{N}_A[x]\right] = x$ for any $x\in\mathfrak{s}_{\mathrm{MPS}}$ and any $A\in\manifold{A}_{\mathrm{MPS}}$. In addition, a principal connection has to transform \emph{equivariantly}, which requires that
\begin{equation}
\omega_{A^{[G]}}\left[B^{[G]}\right]= \mathrm{Ad}_{G^{-1}}\left[\omega_A(B)\right]
\end{equation}
for any $B\in T_A\manifold{A}_{\mathrm{MPS}}^+\cong\mathbb{A}_{\mathrm{MPS}}$, where for any $x\in\mathfrak{s}_{\mathrm{MPS}}$ and any $G\in\mathsf{S}_{\mathrm{MPS}}$, we have introduced the adjoint map $\mathrm{Ad}_{G}$ using the prescription
\begin{equation}
\mathrm{Ad}_{G}\left[x\right]=\{ G(n) x(n) G(n)^{-1}\}_{n=1,\ldots,N-1}.
\end{equation}
The principal connection defines an Ehresmann connection as $\nu_{A}[B]=\mathscr{N}_A[\omega_A[B]]$ which is called a principal Ehresmann connection. The equivarience property of the principal connection implies that the horizontal lift to the horizontal subspace $\mathbb{B}^{[A]}=\ker \omega_A$ can be interpreted as a \emph{section} (a continuous right inverse) of the restricted map $\rmd\Psi\colon(T\manifold{A}_{\mathrm{MPS}})/\mathsf{S}_{\mathrm{MPS}}\to T\varM_{\mathrm{MPS}}$. This restricted map can be given the structure of a fiber bundle and is called the \emph{bundle of principal connections}\cite{Kobayaschi:1957fk}, since there is a one-to-one mapping between sections of $\rmd\Psi\colon(T\manifold{A}_{\mathrm{MPS}})/\mathsf{S}_{\mathrm{MPS}}\to T\varM_{\mathrm{MPS}}$ and principal connections.

It can be checked that the following definition of $\omega^{(\mathrm{L})}$ satisfies the two conditions required in order to be a principal connection:
\begin{equation}
\begin{split}
\omega^{(\mathrm{L})}_A(B)&=\left\{l(n)^{-1} \sum_{m=1}^{n} \widetilde{\mathscr{E}}^{(n)}\left[\cdots \widetilde{\mathscr{E}}^{(m+1)}\left[\sum_s A^s(m)^\dagger l(m-1) B^s(m)\right]\cdots\right]\right\}_{n=1,\ldots,N-1}\\
&=\Bigg\{ l(n)^{-1} \sum_{m=1}^{n} \sum_{\{s_k\}} A^{s_n}(n)^{\dagger}\cdots A^{s_{m+1}}(m+1)^{\dagger}A^{s_m}(m)^{\dagger}l(m-1) \\
&\qquad\qquad\qquad\qquad\qquad\qquad\qquad \times B^{s_m}(m) A^{s_{m+1}}(m+1)\cdots A^{s}(n)\Bigg\}_{n=1,\ldots,N-1}
\end{split}
\end{equation}
To prove the equivariant transformation behavior of $\omega^{(\mathrm{L})}$, one should use that for $A\leftarrow A^{[G]}$, $l\leftarrow l^{[G]}$ with $l^{[G]}(n)= G(n)^\dagger l(n) G(n)$. The horizontal subspace is defined by these $B$ such that $\omega^{(\mathrm{L})}_A(B)=0$. Using the positivity of the virtual density matrices $l(n)$ defined in Eq.~(\ref{eq:gmps:defvirtualdensitymatrices}) and of the maps $\widetilde{\mathscr{E}}$, we obtain that tangent vectors $B\in\mathbb{B}^{[A]}$ satisfy
\begin{equation}
\forall n=1,\ldots,N-1:\quad A^s(n)^\dagger l(n-1) B^s(n)=0 \quad \Leftrightarrow \quad \rbra{l(n-1)}\voperator{E}^B_A=0.\label{eq:gmps:leftgaugeB}
\end{equation}

We can summarize this construction in a language that is more familiar to physicists, and in particular to the DMRG community. The representation $\Phi$ of MPS tangent vectors is invariant under the multiplicative group action $\ket{\Phi[B;A]}=\ket{\Phi[B^{[G]};A^{[G]}]}$. Having fixed this gauge freedom by selecting a fixed representation $A$ for the base point $\ket{\Psi[A]}$, an additional additive gauge invariance remains, since $\ket{\Phi[B;A]}=\ket{\Phi[B+\mathscr{N}_A[x];A]}$, where $\mathscr{N}_A[x]$ was defined in Eq.~(\ref{eq:gmps:defzeromodes}). Because of the linearity of $\Phi$ with respect to $B$, this boils down to the statement that $\ket{\Phi[\mathscr{N}_A[x];A]}=0$, which can easily be checked by explicitly substituting the definition of $\mathscr{N}_A[x]$ in the definition of $\Phi$ and noticing that all terms cancel. At a fixed value of $A$, we thus need a set of gauge fixing conditions to  link a physical vector $\ket{\Phi[B;A]}$ to a unique representation $B$, which can then also be called a canonical representation. In particular, if $B$ satisfies the \emph{left gauge fixing conditions} of Eq.~(\ref{eq:gmps:leftgaugeB}), it can be said to be in the left canonical form. The gauge fixing of the tangent vectors boils down to a vector space decomposition and is therefore much simpler than the gauge fixing of the original MPS. Unlike the left canonical form for the representation $A$ of the MPS $\ket{\Psi[A]}$, there is no residual gauge freedom left by the gauge fixing conditions for $B$ and the left canonical form is unique.

\begin{figure}
\begin{center}
\includegraphics[width=\textwidth]{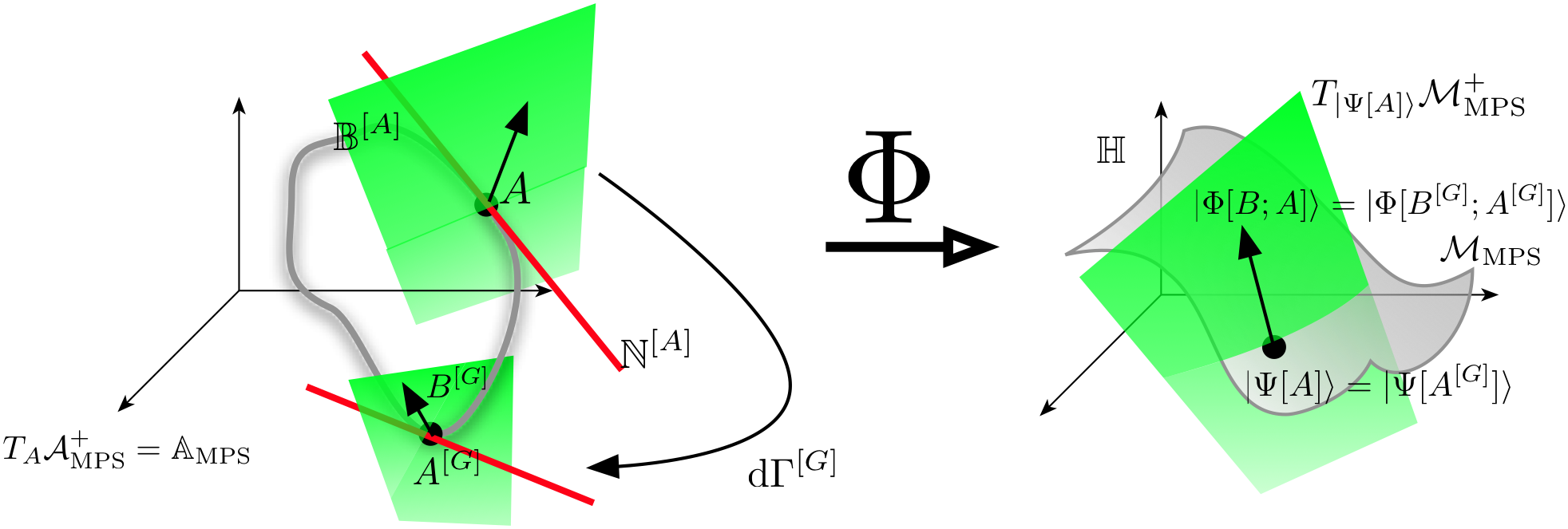}
\caption{Illustration of the tangent map $\Phi$ at base point $A$. In parameter space $\mathbb{A}_{\mathrm{MPS}}$, there is a vertical subspace $\mathbb{N}^{[A]}$ of vectors tangent to the gauge orbits (red line). A unique parameterization of vectors $\ket{\Phi[B;A]}$ in the MPS tangent space $T_{\ket{\Psi[A]}}\mathcal{M}_{\mathrm{MPS}}$ requires the definition of a complementary horizontal subspace $\mathbb{B}^{[A]}$ (green plane). If this horizontal subspace is defined as the kernel of a principal bundle connection, then it transforms equivariantly according to the adjoint representation.}
\label{fig:mps:fiberbundletangent}
\end{center} 
\end{figure}

We can also define an alternative yet equally valid principal connection $\omega^{(\mathrm{R})}$ via the prescription
\begin{equation}
\begin{split}\omega^{(\mathrm{R})}_A(B)&=\left\{-\sum_{m=n+1}^{N} \mathscr{E}^{(n)}\left[\cdots \mathscr{E}^{(m-1)}\left[\sum_s B^s(m) r(m) A^s(m)^\dagger\right]\cdots\right]r(n)^{-1}\right\}_{n=1,\ldots,N-1}\\
&=\Bigg\{- \sum_{m=n+1}^{N} \sum_{\{s_k\}} A^{s_n}(n)^{\dagger}\cdots A^{s_{m-1}}(m-1)^{\dagger}B^{s_m}(m)^{\dagger}r(m) \\
&\qquad\qquad\qquad\qquad\qquad \times A^{s_m}(m)^{\dagger} A^{s_{m-1}}(m-1)^\dagger\cdots A^{s}(n)^\dagger r(n)^{-1}\Bigg\}_{n=1,\ldots,N-1}
\end{split}
\end{equation}
The horizontal subspace $\mathbb{B}^{[A]}$ defined as $\mathrm{ker}\ \omega^{(\mathrm{R})}_A$ contains vectors $B$ that satisfy
\begin{equation}
\forall n=2,\ldots,N:\quad B^s(n) r(n) A^s(n)^\dagger=0 \quad \Leftrightarrow \quad \voperator{E}^B_A\rket{r(n)}=0,\label{eq:gmps:rightgaugeB}
\end{equation}
which are henceforth referred to as the \emph{right gauge-fixing conditions}. The vectors $B$ satisfying these conditions are said to be in the right-canonical form.

Finally, we can directly construct the transformation $\mathscr{N}_A[x]\in\mathbb{N}^{[A]}$ that transforms a general element $\tilde{B}\in\mathbb{A}_{\mathrm{MPS}}$ into a new representative $B=\tilde{B}+\mathscr{N}_A[x]\in \mathbb{B}^{[A]}$ satisfying \textit{e.g.}\ the left gauge fixing conditions in Eq.~(\ref{eq:gmps:leftgaugeB}). Starting for $n=1$, we obtain $B^s(1)=\tilde{B}^s(1)+A^s(1)x(1)$. Imposing
\begin{displaymath}
A^s(1)^\dagger l(0) B^s(1)=0
\end{displaymath}
results in
\begin{displaymath}
x(1)=-l(1)^{-1} \sum_{s=1}^{q_1} A^s(1)^\dagger l(0) \tilde{B}^s(1).
\end{displaymath}
If we have now imposed the gauge fixing conditions for all $n=1,\ldots,m-1$, which has fixed the values of $x(1)$ to $x(m-1)$ completely, we obtain $B^s(m)=\tilde{B}^s(m)-x(m-1) A^s(m) + A^s(m) x(m)$, where $x(m)$ can be used to impose the gauge fixing condition at $n=m$, resulting in ($\forall m=1,\ldots,N-1$)
\begin{displaymath}
x(m)=-l(m)^{-1} \sum_{s=1}^{q_m} A^s(m)^\dagger l(m-1) \left(\tilde{B}^s(m)-x(m-1) A^s(m)\right).
\end{displaymath}
This result is of course trivial, as it corresponds to $x=-\omega^{(\mathrm{L})}_A[\tilde{B}]$. Note that for $n=N$, we cannot impose the left gauge fixing condition. Similarly, we cannot impose the right gauge fixing condition for $n=1$. Since the overlap between the MPS $\ket{\Psi[A]}$ and one if its tangent vectors $\ket{\Phi[B]}$ is given by
\begin{equation}
\braket{\Psi[\overline{A}]|\Phi[B]}=\sum_{n=1}^{N} \rbraket{l(n-1)|\voperator{E}^{B(n)}_{A(n)}|r(n)},
\end{equation}
the right hand side is a gauge-invariant expression. By imposing the left gauge fixing conditions, it reduced to $\braket{\Psi[\overline{A}]|\Phi[B]}=\sum_{s=1}^{q_N} A^s(N)^\dagger l(N-1) B^s(N) r(N)$ with $r(N)=1$, so that imposing the left gauge fixing condition for $n=N$ would result in restricting to the subspace of tangent vectors $B$ such that $\braket{\Psi[\overline{A}]|\Phi[B]}=0$. This restriction becomes valid in the next section, where we discuss the manifold of MPS in the projective Hilbert space $P(\hilbert_{\mathcal{L}})$.

\subsection{Principal fiber bundle of matrix product states in projective Hilbert space}
\label{ss:gmps:projectivefiberbundle}
In projective Hilbert space $P(\hilbert)$, we can define a class of projective MPS.
\begin{definition}[Projective matrix product states]
The variational class of projective MPS corresponds to the map
\begin{equation}
\tilde{\Psi}\colon\manifold{A}\to P(\hilbert)\colon A\mapsto \tilde{\Psi}(A)\defis [\ket{\Psi[A]}]
\end{equation}
with $[\ket{\Psi}]$ the ray of states containing $\ket{\Psi}$. The corresponding variational set is given by
\begin{equation}
\tilde{\varM}_{\mathrm{MPS}}\defis \{\tilde{\Psi}(A), A\in\manifold{A}_{\mathrm{MPS}}\}.
\end{equation}
\end{definition}
For obvious reasons, we have again restricted to the submanifold of full-rank MPS $\manifold{A}_{\mathrm{MPS}}$. For any $A\in \manifold{A}_{\mathrm{MPS}}$, we also have $\lambda A\defis \{\lambda A(n)\}_{n=1,\ldots,N}\in \manifold{A}_{\mathrm{MPS}}$ for any $\lambda \in \mathbb{C}$. Since $\ket{\Psi[\lambda A]}=\lambda^N \ket{\Psi[A]}$, we have $\tilde{\Psi}(\lambda A)=\tilde{\Psi}(A)$ and we can define a larger structure group for which $\tilde{\Psi}$ is invariant.
\begin{definition}[Structure group of projective matrix product states] The structure group $\tilde{\mathsf{S}}_{\mathrm{MPS}}$ of projective MPS is defined as the product group
\begin{equation}
\tilde{\mathsf{S}}_{\mathrm{MPS}}\defis \mathsf{GL}(1,\mathbb{C})\times \mathsf{S}_{\mathrm{MPS}},
\end{equation}
where $\mathsf{GL}(1,\mathbb{C})$ corresponds to the group of normalization and phase changes. According to Lemma~\ref{lemma:var:productgroup}, this is a complex Lie group with $\dim \tilde{\mathsf{S}}_{\mathrm{MPS}}=\dim \mathsf{S}_{\mathrm{MPS}}+1$. The right action of $\tilde{\mathsf{S}}_{\mathrm{MPS}}$ on $\manifold{A}_{\mathrm{MPS}}$ is given by the map
\begin{multline}
\tilde{\Gamma}\colon\manifold{A}_{\mathrm{MPS}}\times \tilde{\mathsf{S}}_{\mathrm{MPS}}\to \manifold{A}_{\mathrm{MPS}}\colon\\
\qquad (A;(\lambda,G))\mapsto \tilde{\Gamma}(A,(\lambda,G))=\lambda \Gamma(A,G)=\lambda A^{[G]}=A^{[(\lambda,G)]}.
\end{multline}
\end{definition}
It can easily be checked that the additional $\mathsf{GL}(1,\mathbb{C})$ group does not change the properties of the group action, \textit{i.e.}\ just as $\Gamma$ is the new action $\tilde{\Gamma}$ a holomorphic, free and proper group action. Hence, we can reiterate all the results from Subsection~\ref{ss:gmps:affinefiberbundle} in order to obtain
\begin{theorem}
The variational class of projective MPS $\tilde{\Psi}\colon\manifold{A}_{\mathrm{MPS}}\to\tilde{\varM}_{\mathrm{MPS}}$ is a principal fiber bundle with structure group $\tilde{\mathsf{S}}_{\mathrm{MPS}}$, base manifold $\tilde{\varM}_{\mathrm{MPS}}$, total manifold $\manifold{A}_{\mathrm{MPS}}$ and bundle projection $\tilde{\Psi}$. The variational manifold $\tilde{\varM}_{\mathrm{MPS}}$ is a complex manifold embedded in $P(\hilbert)$ that is biholomorphic to the orbit space $\manifold{A}_{\mathrm{MPS}}/\tilde{\mathsf{S}}_{\mathrm{MPS}}$ and thus has dimension
\begin{equation}
\dim \tilde{\varM}_{\mathrm{MPS}}=\dim \manifold{A}_{\mathrm{MPS}}-\dim \tilde{\mathsf{S}}_{\mathrm{MPS}}=\dim \varM_{\mathrm{MPS}}-1.
\end{equation}
\end{theorem}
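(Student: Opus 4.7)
The plan is to mirror, step by step, the architecture developed in Subsection~\ref{ss:gmps:affinefiberbundle}, exploiting the fact that $\tilde{\mathsf{S}}_{\mathrm{MPS}}$ is a complex Lie group (Lemma~\ref{lemma:var:productgroup}) acting on the complex manifold $\manifold{A}_{\mathrm{MPS}}$ by $\tilde{\Gamma}(A,(\lambda,G)) = \lambda A^{[G]}$. First, I would verify that $\tilde{\Gamma}$ is a right group action: this is immediate from $(\lambda_1\lambda_2)(\lambda_2^{-1}\lambda_1^{-1}) = 1$, the associativity of $\Gamma$ and the trivial commutation of the overall rescaling with the site-wise gauge transformations. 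Holomorphicity of $\tilde{\Gamma}$ follows at once from the holomorphicity of $\Gamma$ together with the holomorphic product on $\mathbb{C}$, so that $\tilde{\Gamma}^{[(\lambda,G)]}\colon \manifold{A}_{\mathrm{MPS}}\to\manifold{A}_{\mathrm{MPS}}$ is a biholomorphism for every $(\lambda,G)\in\tilde{\mathsf{S}}_{\mathrm{MPS}}$.

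Next, I would establish freeness. Solving $\lambda A^{[G]} = A$ with $G(0)=G(N)=1$ and using the maximal-rank condition defining $\manifold{A}_{\mathrm{MPS}}$, one iterates through the sites exactly as in the proof of the stabilizer lemma for $\Gamma$ to conclude $G(n)=\lambda^{-n}\one_{D_n}$, and the boundary condition at $n=N$ then forces the unique trivial element of $\tilde{\mathsf{S}}_{\mathrm{MPS}}$. (A mild subtlety: the equation $\lambda^N = 1$ admits $N$-th roots of unity, and one must remark that these are absorbed in the identification of $\tilde{\mathsf{S}}_{\mathrm{MPS}}$ with $\mathsf{GL}(1,\mathbb{C})\times\mathsf{S}_{\mathrm{MPS}}$, or equivalently quotiented out at this stage.) Properness is obtained via Theorem~\ref{th:gmps:kaup}: since $\tilde{\Gamma}$ is holomorphic, each $\tilde{\Gamma}^{[(\lambda,G)]}$ is an element of $\mathrm{Aut}(\manifold{A}_{\mathrm{MPS}})$, so it remains to exhibit a continuous $\tilde{\mathsf{S}}_{\mathrm{MPS}}$-invariant distance. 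I would start from the distance $D_{\mathrm{MPS}}$ of Eq.~(\ref{eq:gmps:distance}), which is already $\mathsf{S}_{\mathrm{MPS}}$-invariant, and render it invariant under the additional $\mathsf{GL}(1,\mathbb{C})$ factor either by subtracting the logarithmic-norm contribution $\lvert\partial_t\log\braket{\Psi[\overline{A(t)}]|\Psi[A(t)]}\rvert^2$ or, equivalently, by passing to the projective integrandum built from the Fubini--Study form on $P(\hilbert_{\mathcal{L}})$ pulled back through $\tilde{\Psi}$. Strict positivity between distinct orbits persists because the extra term annihilates precisely the $\lambda$-direction already quotiented out.

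With holomorphicity, freeness and properness in hand, Theorem~\ref{th:gmps:quotientmanifold} yields that $\manifold{A}_{\mathrm{MPS}}/\tilde{\mathsf{S}}_{\mathrm{MPS}}$ is a smooth manifold and the projection is a smooth submersion, while Theorem~\ref{th:gmps:complexquotientmanifold} upgrades this to a complex manifold with a holomorphic quotient map. By Lemma~\ref{lemma:gmps:factorization}, the holomorphic, $\tilde{\mathsf{S}}_{\mathrm{MPS}}$-invariant map $\tilde{\Psi}$ descends to a holomorphic map $\tilde{\psi}\colon\manifold{A}_{\mathrm{MPS}}/\tilde{\mathsf{S}}_{\mathrm{MPS}}\to P(\hilbert_{\mathcal{L}})$ whose image is $\tilde{\varM}_{\mathrm{MPS}}$. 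To upgrade $\tilde{\psi}$ to a biholomorphism onto its image, I would prove injectivity: if $[\ket{\Psi[A_1]}] = [\ket{\Psi[A_2]}]$, then $\ket{\Psi[A_1]} = c\ket{\Psi[A_2]}$ for some $c\in\mathbb{C}_0$; picking any $\lambda$ with $\lambda^N = c$ gives $\ket{\Psi[A_1]} = \ket{\Psi[\lambda A_2]}$, and the injectivity statement already proven in the affine case (Theorem on the fiber bundle structure of $\Psi$) supplies a $G\in\mathsf{S}_{\mathrm{MPS}}$ with $A_1 = \lambda A_2^{[G]}$. The dimension formula is then a direct consequence of Theorem~\ref{th:gmps:quotientmanifold} together with $\dim\tilde{\mathsf{S}}_{\mathrm{MPS}} = \dim\mathsf{S}_{\mathrm{MPS}} + 1$ and the already-established $\dim\varM_{\mathrm{MPS}} = \dim\manifold{A}_{\mathrm{MPS}} - \dim\mathsf{S}_{\mathrm{MPS}}$.

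The step I expect to be most delicate is the construction of the $\tilde{\mathsf{S}}_{\mathrm{MPS}}$-invariant distance needed for properness: the original $D_{\mathrm{MPS}}$ is manifestly not invariant under the overall rescaling by $\lambda$, so some Fubini--Study-style correction is mandatory, and one must check that the resulting integrand remains pointwise non-negative on $\manifold{A}_{\mathrm{MPS}}$ while still vanishing exactly along $\tilde{\mathsf{S}}_{\mathrm{MPS}}$-orbits. The discrete-stabilizer subtlety in the freeness argument is the secondary point to watch, but it does not affect the final dimension count because the spurious stabilizer is zero-dimensional.
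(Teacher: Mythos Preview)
Your proposal is correct and follows essentially the same approach as the paper, which simply states that ``the proof is obtained as a straightforward generalization of the methods and results used in Subsection~\ref{ss:gmps:affinefiberbundle}.'' You have in fact supplied considerably more detail than the paper does, and your observation about the $N$-th roots of unity is a genuine subtlety: with $\tilde{\mathsf{S}}_{\mathrm{MPS}}$ defined literally as the direct product $\mathsf{GL}(1,\mathbb{C})\times\mathsf{S}_{\mathrm{MPS}}$, the element $(\lambda,\{G(n)=\lambda^{-n}\one_{D_n}\})$ with $\lambda^N=1$ stabilises every $A$, so the action has a finite kernel $\cong\mathbb{Z}_N$ that the paper's claim of freeness glosses over; as you note, this is zero-dimensional and does not affect the quotient-manifold structure or the dimension formula.
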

The proof is obtained as a straightforward generalization of the methods and results used in Subsection~\ref{ss:gmps:affinefiberbundle}. 

Correspondingly, we now study the tangent map $\rmd \tilde{\Psi}\colon T\manifold{A}^+_{\mathrm{MPS}}\to T\tilde{\varM}^+_{\mathrm{MPS}}$.  Since the new structure group $\tilde{\mathsf{S}}_{\mathrm{MPS}}$ is one dimension larger then $\mathsf{S}_{\mathrm{MPS}}$, the corresponding Lie-algebra $\tilde{\mathfrak{s}}_{\mathrm{MPS}}$ has also gained an additional dimension corresponding to
\begin{equation}
\tilde{\mathfrak{s}}_{\mathrm{MPS}}=\mathfrak{gl}(1,\mathbb{C})\oplus \mathfrak{s}_{\mathrm{MPS}}
\end{equation}
with $\mathfrak{gl}(1,\mathbb{C})\cong\mathbb{C}$. To each element $(\alpha,x)\in\tilde{\mathfrak{s}}_{\mathrm{MPS}}$ corresponds a vertical vector field $\tilde{\mathscr{N}}[(\alpha,x)]$, for which the components of $\tilde{\mathscr{N}}_A[(\alpha,x)]\in\tilde{\mathbb{N}}^{[A]}$ at base point $A$ are given by
\begin{equation}
\tilde{\mathscr{N}}_A^s[(\alpha,x)](n)=A^s(n) x(n)-x(n-1)A^s(n) + \alpha A^s(n).
\end{equation}
Note that $\ket{\Phi[B+\tilde{\mathscr{N}}_A[(\alpha,x)]]}=\ket{\Phi[B]}+\alpha N \ket{\Psi[A]}$, which is not a contradiction since $T_{[\ket{\Psi[A]}]}\tilde{\varM}^+_{\mathrm{MPS}}\cong T_{\ket{\Psi}} \varM^+_{\mathrm{MPS}}/\sim$, where for any two vectors $\ket{\Phi[B_1]},\ket{\Phi[B_2]}\in T_{\ket{\Psi[A]}} \varM^+_{\mathrm{MPS}}$, $\ket{\Phi[B_1]}\sim\ket{\Phi[B_2]}$ if $\ket{\Phi[B_1]}-\ket{\Phi[B_2]}=\beta \ket{\Psi[A]}$ for some $\beta\in\mathbb{C}$. We can easily construct two principal connections $\tilde{\omega}^{(\mathrm{L})}$ and $\tilde{\omega}^{(\mathrm{R})}$ using the prescriptions
\begin{equation}
\tilde{\omega}^{(\mathrm{L})}_A(B)=\left(\frac{1}{N}\frac{\braket{\Psi[\overline{A}]|\Phi[B]}}{\braket{\Psi[\overline{A}]|\Psi[A]}},\frac{\omega^{(\mathrm{L})}_A[B]}{\braket{\Psi[\overline{A}]|\Psi[A]}}\right)
\end{equation}
and
\begin{equation}
\tilde{\omega}^{(\mathrm{R})}_A(B)=\left(\frac{1}{N}\frac{\braket{\Psi[\overline{A}]|\Phi[B]}}{\braket{\Psi[\overline{A}]|\Psi[A]}},\frac{\omega^{(\mathrm{R})}_A[B]}{\braket{\Psi[\overline{A}]|\Psi[A]}}\right)
\end{equation}
It can easily be checked that $\tilde{\omega}^{(\mathrm{L,R})}_A(\tilde{\mathscr{N}}_A[(\alpha,x)])=(\alpha,x)$ and 
\begin{equation}
\tilde{\omega}^{(\mathrm{L,R})}_{A^{[(\lambda,G)]}}(B^{[(\lambda,G)]})=\widetilde{\mathrm{Ad}}_{(\lambda,G)^{-1}}\left[\tilde{\omega}^{(\mathrm{L,R})}_A(B)\right]
\end{equation}
where the components of $B^{[(\lambda,G)]}$ are given by $(B^{[(\lambda,G)]})^s(n)=\lambda G(n-1)^{-1} B^s G(n)$, where $(\lambda,G)^{-1}=(\lambda^{-1},G^{-1})$ and for any $(\alpha,x)\in\tilde{\mathfrak{s}}_{\mathrm{MPS}}$ we have
\begin{equation}
\widetilde{\mathrm{Ad}}_{(\lambda,G)}\left[(\alpha,x)\right]=\left(\alpha,\mathrm{Ad}_{G}\left[x\right]\right).
\end{equation}
The horizontal subspace $\tilde{\mathbb{B}}^{[A]}$ is now defined by the vectors $B$ satisfying $\tilde{\omega}^{(\mathrm{L})}_A(B)=0$ or $\tilde{\omega}^{(\mathrm{R})}_A(B)=0$. Hence, for both choices of the connection, the horizontal bundle $\tilde{\mathbb{B}}^{[A]}$ contains these vectors $B$ for which $\braket{\Psi[\overline{A}]|\Phi[B]}=0$. We thus obtain 
\begin{equation}
\tilde{\mathbb{T}}^{[A]}_{\mathrm{MPS}}\defis T_{[\ket{\Psi[A]}]}\tilde{\varM}^+_{\mathrm{MPS}}\cong \Tplane_{\mathrm{MPS}}^{[A]\perp},
\end{equation}
with $\Tplane_{\mathrm{MPS}}^{[A]\perp}$ the orthogonal complement of $\ket{\Psi[A]}$ in $\Tplane_{\mathrm{MPS}}^{[A]}$. A full characterization of the vectors $B\in\tilde{\mathbb{B}}^{[A]}$ is given by
\begin{equation}
\tilde{\omega}^{(\mathrm{L})}_A(B)=0\quad \Leftrightarrow\quad\forall n=1,\ldots,N: A^s(n)^\dagger l(n-1) B^s(n)=0\label{eq:gmps:leftgaugeBp}
\end{equation}
or
\begin{equation}
\tilde{\omega}^{(\mathrm{R})}_A(B)=0\quad \Leftrightarrow\quad\forall n=1,\ldots,N: B^s(n)r(n) A^s(n)^\dagger=0\label{eq:gmps:rightgaugeBp}.
\end{equation}

We can now also reinterpret the construction from Subsection~\ref{ss:var:projective} of the previous section in terms of a fiber bundle structure. The vector field $v^i(\vz)\partial_i$ corresponds to the vertical subbundle and generates the group action of $\mathsf{GL}(1,\mathbb{C})$, the group of normalization and phase changes. The condition in Eq.~(\ref{eq:var:condtangentvectorsparameterspace}) follows from defining a principal Ehresmann connection in order to define a unique horizontal lift of tangent vectors in $T_{[\ket{\Psi[\vz]}}\tilde{\varM}^+$ to the horizontal subspace of $(T_{\vz} \mathbb{C}^m)^+\cong\mathbb{C}^m$. 

\subsection{Pullback metric and efficient parametrization}
\label{ss:gmps:metric}
We have finally arrived at the point where we can induce the natural metric of $\hilbert$ or $P(\hilbert)$ onto $\varM_{\mathrm{MPS}}$ or $\tilde{\varM}_{\mathrm{MPS}}$ respectively, transforming these manifolds into K\"{a}hler manifolds, followed by a pullback of this metric to $\manifold{A}_{\mathrm{MPS}}$. Throughout this section, we discard the notation of $[A]$ in all quantities depending on it. 

The entries of the metric $g$ are implicitly defined by the following $N^2$ terms:
\begin{equation}
\begin{split}
\braket{\Phi[\overline{B}]|\Phi[B']}&=\sum_{n,n'=1}^{N} \overline{B}(n)^{\overline{\imath}} g_{(n,\overline{\imath});(n',j)} {B'}(n')^j\\
&=\sum_{n<n'=1}^{N}\rbraket{l(n-1)|\voperator{E}^{A(n)}_{B(n)} \left(\prod_{m=n+1}^{n'-1} \voperator{E}^{A(m)}_{A(m)}\right) \voperator{E}^{B'(n')}_{A(n')}|r(n')}\\
&\qquad+\sum_{n'<n=1}^{N}\rbraket{l(n'-1)|\voperator{E}^{B'(n')}_{A(n')} \left(\prod_{m=n'+1}^{n-1} \voperator{E}^{A(m)}_{A(m)}\right) \voperator{E}^{A(n)}_{B(n)}|r(n)}\\
&\qquad\qquad+\sum_{n=1}^{N}\rbraket{l(n-1)|\voperator{E}^{B'(n)}_{B(n)}|r(n)}.
\label{eq:gmps:overlap}
\end{split}
\end{equation}
where the definitions in Eq.~(\ref{eq:gmps:defgensupop}) and Eq.~(\ref{eq:gmps:defvirtualdensitymatrices}) were used and we use a summation convention with respect to the collected indices $i=(\alpha,s,\beta)$, but not with respect to the site index $n$. The metric is thus a complicated matrix of size $\dim \mathbb{A}_{\mathrm{MPS}}\times\dim \mathbb{A}_{\mathrm{MPS}}$, that couples all variations $B(n)$ and $B'(n')$ at different sites $n$ and $n'$. Straightforwardly, it is degenerate, since any $B\in\mathbb{N}$ results in
\begin{displaymath}
\sum_{n'=1}^{N} g_{(n,\overline{\imath});(n',j)} B^j(n') =0, \quad \forall n=1,\ldots,N, \forall \imath=1,\ldots D_{n-1}q_n D_n.
\end{displaymath}
It is, however, positive definite, when restricted to the horizontal subspace $\mathbb{B}$, so that we can define a pseudo-inverse metric such that
\begin{equation}
\sum_{n'=1}^{N} g^{(n,i);(n',\overline{\jmath})}g_{(n',\overline{\jmath});(n'',k)}=\left(P_{\mathbb{B}}\right)^{(n,i)}_{\quad(n'',k)}\label{eq:gmps:defpseudoinversemetric}
\end{equation}
where $P_{\mathbb{B}}$ is a projector onto the horizontal subspace $\mathbb{B}$ such that $\mathrm{ker}\  P_{\mathbb{B}}=\mathbb{N}$. Using a principal connection $\omega$ we can define a projector $P_{\mathbb{N}}$ to the vertical subspace with the horizontal subspace $\mathbb{B}$ as kernel, by defining
\begin{equation}
\sum_{n'=1}^{N}\left(P_{\mathbb{N}}\right)^{(n,i)}_{\quad(n',j)} B^j(n')=\mathscr{N}_A\left[\omega_A[B]\right]=\nu_A[B],
\end{equation}
so that $P_{\mathbb{N}}$ corresponds to the matrix representation of the principal Ehresmann connection $\nu$ associated to $\omega$. The projector $P_{\mathbb{B}}$ for the horizontal subspace can then be written as
\begin{equation}
P_{\mathbb{B}}=\one_{\mathbb{A}_{\mathrm{MPS}}}-P_{\mathbb{N}}.
\end{equation}
However, due to the complicated structure of the metric, it seems like an impossible task to explicitly compute a pseudo-inverse satisfying Eq.~(\ref{eq:gmps:defpseudoinversemetric}). This problem could be solved by using an iterative implementation, but the evaluation of $\braket{\Phi[\overline{B}]|\Phi[B']}$ would still scale as $\order(N^{2})$, which is also very unfavorable. However, for the horizontal subspace $\mathbb{B}$ defined by either $\omega^{(\mathrm{L})}$ or $\omega^{(\mathrm{R})}$, the vectors $B\in\mathbb{B}$ satisfy either the left or right gauge fixing conditions Eq.~(\ref{eq:gmps:leftgaugeB}) or Eq.~(\ref{eq:gmps:rightgaugeB}) respectively, and in both cases the overlap $\braket{\Phi[\overline{B}]|\Phi[B']}$ simplifies significantly to
\begin{equation}
\braket{\Phi[\overline{B}]|\Phi[B']}=\sum_{n=1}^{N}\rbraket{l(n-1)|\voperator{E}^{B'(n)}_{B(n)}|r(n)}\label{eq:gmps:overlapsimp}
\end{equation}
and all non-local terms that couple $B(n)$ with $B'(n')$ at different sites $n=n'$ are eliminated. Below, we construct a representation of tangent vectors $B\in\mathbb{B}$ in such a way that the metric is equal to the identity.

We first consider the modifications that arise when working with the projective manifold $\tilde{\varM}_{\mathrm{MPS}}\subset P(\hilbert)$ instead. We can implicitly define the pullback $\tilde{g}$ of the Fubini-Study metric to $\manifold{A}_{\mathrm{MPS}}$ as
\begin{equation}
\sum_{n,n'=1}^{N} \overline{B}(n)^{\overline{\imath}} \tilde{g}_{(n,\overline{\imath});(n',j)}{B'}(n')^j=\frac{\braket{\Phi[\overline{B}]|\Phi[B]}}{\braket{\Psi[\overline{A}]|\Psi[A]}}-\frac{\braket{\Phi[\overline{B}]|\Psi[A]}\braket{\Psi[\overline{A}]|\Phi[B]}}{\braket{\Psi[\overline{A}]|\Psi[A]}^2},\label{eq:gmps:metricP}
\end{equation}
where all quantities in the right hand side have been explicitly defined above. When working with vectors $B\in\tilde{\mathbb{B}}$, where the horizontal subspace has been defined by either $\tilde{\omega}^{(\mathrm{L})}$ or $\tilde{\omega}^{(\mathrm{R})}$, the right hand side essentially reduces to Eq.~(\ref{eq:gmps:overlapsimp}), up to the overall factor $\braket{\Psi[\overline{A}]|\Psi[A]}^{-1}$, which can be set to $1$ by properly normalizing the MPS $\ket{\Psi[A]}$. A pseudo-inverse metric is then completely characterized by
\begin{equation}
\sum_{n'=1}^{N} \tilde{g}^{(n,i);(n',\overline{\jmath})}\tilde{g}_{(n',\overline{\jmath});(n'',k)}=\left(P_{\tilde{\mathbb{B}}}\right)^{(n,i)}_{\quad(n'',k)}.\label{eq:gmps:defpseudoinversemetricP}
\end{equation}

The simplification of the metric $g$ in Eq.~(\ref{eq:gmps:overlapsimp}) is only useful if we have an efficient algorithm for imposing the left of right gauge fixing conditions in Eq.~(\ref{eq:gmps:leftgaugeB}) and Eq.~(\ref{eq:gmps:rightgaugeB}) respectively, or  Eq.~(\ref{eq:gmps:leftgaugeBp}) and Eq.~(\ref{eq:gmps:rightgaugeBp}) in case of the projective metric $\tilde{g}$. Even better is an efficient algorithm to construct gauge-fixed representations $B$  in the horizontal subspace $\mathbb{B}$ or $\tilde{\mathbb{B}}$ from the smaller number of truly independent degrees of freedom. Let us start with the projective case. It is indeed  easy to find a linear parameterization\footnote{It is important that this representation is linear in order to preserve the vector space structure of the tangent space.}  $B=\tilde{\mathscr{B}}[X]$ depending on a set $X=\{X(n)\}_{n=1,\ldots,N}$ of complex $(q_{n} D_{n-1}-D_{n})\times D_{n}$ matrices $X(n)$, where $\tilde{\mathscr{B}}[X](n)$ depends only on $X(n)$ (locality), so that $\tilde{\mathscr{B}}[X]$ satisfies the left gauge fixing conditions [Eq.~(\ref{eq:gmps:leftgaugeBp})] for all $n=1,\ldots,N$. In fact, we can even further simplify the metric and convert it into the unit matrix. We thereto define the set $L=\{L(n)\}_{n=1,\ldots,N}$ of $D_{n}\times q_{n}D_{n-1}$ matrices $L(n)$ as
\begin{equation}
[L(n)]_{\alpha;(s \beta)}= [A^{s}(n)^{\dagger}l(n-1)^{1/2}]_{\alpha,\beta}\label{eq:gmps:repdefL}
\end{equation}
and then construct a set $V_L=\{V_L(n)\}_{n=1,\ldots,N}$ of $q_{n} D_{n-1} \times (q_{n}D_{n-1}-D_n)$ matrices $V_{L}(n)$ so that $V_{L}(n)$ contains an orthonormal basis for the null space of $L(n)$, \textit{i.e.} $L(n) V_L(n)=0$ and $V_{L}(n)^\dagger V_{L}(n)=\one_{q_{n} D_{n-1}-D_{n}}$, for all $n=1,\ldots,N$. Setting $[V^{s}_{L}(n)]_{\alpha,\beta}=[V_{L}(n)]_{(s \alpha);\beta}$, we then define the representation $\tilde{\mathscr{B}}[X]$ as
\begin{equation}
\tilde{\mathscr{B}}^{s}[X](n)= l(n-1)^{-1/2} V_L^s(n) X(n)  r(n)^{-1/2}
\end{equation}
in order to obtain $\sum_{s=1}^{q_{n}} A^s(n)^\dagger l(n-1) \tilde{\mathscr{B}}[X](n)=0$, $\forall n=1,\ldots,N$, and
\begin{equation}
\braket{\Phi[\overline{\tilde{\mathscr{B}}}[\overline{X}]]|\Phi[\tilde{\mathscr{B}}[Y]]}=\sum_{n=1}^{N} \mathrm{tr}\left[ X(n)^{\dagger} Y(n)\right].
\end{equation}
Hence, when expressed in terms of the matrices $X$, the covariant and contravariant components of a vector are identical. An alternative representation $B=\tilde{\mathscr{B}}'[X']$ in terms of a set $X'=\{X'(n)\}_{n=1,\ldots,N}$ of complex $D_{n-1}\times(q_n D_n-D_{n-1})$ matrices $X'(n)$ can be constructed, so that $B=\tilde{\mathscr{B}}'[X']$ lies within the horizontal subspace defined by $\tilde{\omega}^{(\mathrm{R})}$, \textit{i.e.} the matrices $B^s(n)$ satisfy the right gauge fixing conditions [Eq.~(\ref{eq:gmps:rightgaugeBp})].
Define hereto the $q_{n}D_{n}\times D_{n-1}$ matrices $R(n)$ as
\begin{equation}
[R(n)]_{(\alpha,s);\beta}= [ r(n)^{1/2}A^{s}(n)^{\dagger}]_{\alpha,\beta}\label{eq:gmps:repdefR}
\end{equation}
and then construct a $ (q_{n} D_{n}-D_{n-1})\times q_{n} D_{n}$ matrix $V_{R}(n)$ so that $V_{R}(n)^{\dagger}$ contains an orthonormal basis for the null space of $R(n)^{\dagger}$, \textit{i.e.}\ $V_{R}(n) R(n)=0$ and $V_{R}(n) V_{R}(n)^{\dagger}=\one_{q_{n} D_{n}-D_{n-1}}$, for all $n=1,\ldots,N$. Setting $[V^{s}_{R}(n)]_{\alpha,\beta}=[V_{R}(n)]_{\alpha;(\beta,s)}$, the representation $\tilde{\mathscr{B}}'[X]$ is defined using the prescription
\begin{equation}
(\tilde{\mathscr{B}}')^{s}[X](n)= l(n-1)^{-1/2} X(n) V_{R}^{s}(n)  r(n)^{-1/2}.
\end{equation}

Finally, when using the pullback $g$ of the affine Hilbert metric, we can only impose \textit{e.g.}\ the left-gauge fixing conditions [Eq.~(\ref{eq:gmps:leftgaugeB})] for $n=1,\ldots,N-1$. The non-zero value of $\sum_{s=1}^{q_N} A^s(N)^\dagger l(N-1) B^{s}(N)$ determines the overlap $\braket{\Psi[\overline{A}]|\Phi[B]}$. We can then use a representation $B=\mathscr{B}[(\alpha,X)]$ depending on a set $X=\{X(n)\}_{n=1,\ldots,N}$ of complex $(q_{n} D_{n-1}-D_{n})\times D_{n}$ matrices $X(n)$ and a complex scalar $\alpha$, where
\begin{equation}
\mathscr{B}^s[(\alpha,X)](n)=\left\{\begin{array}{ll}\tilde{\mathscr{B}}^s[X](n),&n<N,\\
\displaystyle\tilde{\mathscr{B}}^s[X](N)+\frac{\alpha}{\braket{\Psi[\overline{A}]|\Psi[A]}^{1/2}} A^s(N),&n=N.\end{array}\right.
\end{equation}
We thus have $\braket{\Psi[\overline{A}]|\Psi[\mathscr{B}[(\alpha,X)]}=\alpha\braket{\Psi[\overline{A}]|\Psi[A]}^{-1/2}$, and
\begin{equation}
\braket{\Phi[\overline{\mathscr{B}}[(\overline{\alpha},\overline{X})]]|\Phi[\mathscr{B}[(\beta,Y)]]}=\overline{\alpha}\beta+\sum_{n=1}^{N} \mathrm{tr}\left[ X(n)^{\dagger} Y(n)\right].
\end{equation}

Note that we can now also understand why we took the effort of reducing the parameter space from the simple vector space $\mathbb{A}_{\mathrm{MPS}}$ to the more complicated manifold $\manifold{A}_{\mathrm{MPS}}$. If at a point $A\in\mathbb{A}_{\mathrm{MPS}}$ one of the virtual density matrices $l(n)$ or $r(n)$ do not have a full rank, the metric $g$ or $\tilde{g}$ has additional zero eigenvalues, \textit{i.e.} its reduction to the horizontal subspace is still degenerate. Hence, these rank-decifit points really correspond to the singular points of the set $\manifold{V}_{\mathrm{MPS}}$. The open set $\varM_{\mathrm{MPS}}$ is obtained by precisely removing these singular points from the set $\manifold{V}_{\mathrm{MPS}}$. Given the definition of a metric and its (pseudo)-inverse, we can then go on to define a Levi-Civita connection and a Riemann curvature tensor, as was sketched in the previous section. We have now reached the point where it is important to emphasize the difference between the Levi-Civita connection, which is an intrinsic property of the vector bundle $T\varM^+_{\mathrm{MPS}}$ and follows from the geometry induced by its embedding in Hilbert space, and the principal connection, which is defined for the bundle map $\rmd\Psi\colon T\manifold{A}^+_{\mathrm{MPS}}\to T\manifold{M}^+_{\mathrm{MPS}}$, and allows one to lift quantities in $T\manifold{M}^+_{\mathrm{MPS}}$ to $T\manifold{A}^+_{\mathrm{MPS}}$.  In particular, the principal connections defined above allow to lift the Levi-Civita connection from $T\manifold{M}^+_{\mathrm{MPS}}$, and thus to define it in terms of the coordinates of $T\manifold{A}^+_{\mathrm{MPS}}$.

Generic MPS have the property that the map $\Psi$ is multilinear, \textit{i.e.}\ it is linear in each parameter. Every line parallel to one of the coordinate axes in $\mathbb{A}_{\mathrm{MPS}}$ is mapped to a straight line, and thus to a geodesic, in the affine Hilbert space $\hilbert$. Hence, the variational manifold $\varM_{\mathrm{MPS}}$ is a ruled surface, where the tangent vectors along the rules constitute an (over)complete basis, since there is a rule corresponding to every coordinate axis in $\mathbb{A}_{\mathrm{MPS}}$. This fact was used in Ref.~\onlinecite{Sidles:2009fk} to study the curvature properties of variational manifolds defined by multilinear maps in a general setting, and it was shown that the sectional curvature of the resulting manifolds is always negative. We therefore restrict our discussion of the Levi-Civita connection and the Riemann curvature tensor to uniform MPS, which are defined in the next section, and for which the defining map does not have the multi-linearity property.

\section{Geometry of uniform matrix product states}
\label{s:umps}
Because many interesting physical systems are of macroscopic size, one is often interested in the bulk properties of these systems, far away from any physical boundary. In addition, the main interest is often in systems which are translation invariant. These requirements vote in favor of systems with periodic boundary conditions, where there are no boundary effects ---only finite-size effects with nice scaling behavior--- and translation invariance can easily be reproduced. On a lattice with periodic boundary conditions (where translation invariance of the models dictates a site-independent $q_{n}=q$), a translation invariant subclass of MPS can be obtained by choosing the bond dimensions $D_{n}=D$ site-independent and using a translation invariant representation, \textit{i.e.}\ by choosing the matrices to be site-independent: $A_{n}^{s}=A^{s}$, $\forall s=1,\dots,q$, $\forall n=1,\ldots,N$. The resulting variational class is called the class of \emph{uniform matrix product states} $\manifold{V}_{\mathrm{uMPS}(D)}\subset \manifold{V}_{\mathrm{MPS}\{D_n=D\}}$. Note that a general (translation non-invariant) gauge transformation will ruin the translation invariance of the representation. The only allowed gauge transformation in $\mathbb{A}_{\mathrm{uMPS}(D)}$ is a global transformation with site-indepedent matrices $G(n)=G$, $\forall n=1,\ldots,N$, where $G\in\mathsf{G}_{\mathrm{uMPS}(D)}\equiv \mathsf{GL}(D;\mathbb{C})$. Vice versa, a translation invariant MPS might only have a representation as a uMPS after a suitable site-dependent gauge transform. In addition, some translation invariant MPS do not allow for a translation invariant representation without enlarging the bond dimension\cite{2006quant.ph..8197P}. Thus, $\manifold{V}_{\mathrm{uMPS}(D)}$ does not contain all translation invariant states of $\manifold{V}_{\mathrm{MPS}\{D_{n}=D\}}$. 

\subsection{Uniform MPS as principal fiber bundles}
\label{ss:umps:fiberbundle}
Throughout this section, we are dealing with uniform MPS, for which we do not invent new symbols. However, the functional dependence on a single tensor $A$ rather than a collection of site-dependent tensors $A=\{A(n)\}$ is denoted by using round brackets $(\ )$ instead of square brackets $[\ ]$.
\begin{definition}[Uniform MPS]
Let a uniform MPS (uMPS) on a lattice $\manifold{L}=\{1,\ldots,N\}$ with a local site dimension $q$ and periodic boundary conditions be defined as a holomorphic map
\begin{equation}
\Psi\colon\mathbb{A}_{\mathrm{uMPS}(D)}\to \hilbert_{\manifold{L}}\colon A \mapsto \ket{\Psi(A)}\defis\sum_{s_{1}=1}^{q}\cdots \sum_{s_{N}=1}^{q} \tr\left[A^{s_{1}}\cdots A^{s_{N}} \right] \ket{s_{1}s_{2}\ldots s_{N}},\label{eq:umps:defumps}
\end{equation}
with $\mathbb{A}_{\mathrm{uMPS}(D)}\equiv\mathbb{C}^{D\times q \times D}$. Henceforth, we discard of the explicit notation of $(D)$ in $\mathbb{A}_{\mathrm{uMPS}}$. The image of the map $\Psi$ is defined as the variational set
\begin{equation}
\manifold{V}_{\mathrm{uMPS}}=\{\ket{\Psi(A)}, \forall A \in \mathbb{A}_{\mathrm{MPS}}\}\subset\hilbert_{\manifold{L}}.
\end{equation}
\end{definition}

In the evaluation of physical expectation values, an important role is played by the transfer matrix $\voperator{E}=\voperator{E}^A_A=\sum_{s=1}^{q} A^{s}\otimes\overline{A^{s}}$ and its associated completely positive maps $\mathscr{E}$ and $\widetilde{\mathscr{E}}$, all of which are now site-independent. The transfer matrix $\voperator{E}$ has $D^2$ eigenvalues $z^{(k)}$ ($k=1,\ldots,D^2$) and corresponding left and right eigenvectors which we denote as $\rbra{l^{(k)}}$ and $\rket{r^{(k)}}\in \mathbb{C}^{D}\otimes\overline{\mathbb{C}}^{D}$. They correspond to linear operators $l^{(k)},r^{(k)}\in\End(\mathbb{C}^{D})$, with $\mathbb{C}^{D}$ being the ancilla space, and are related to the associated maps by $\mathscr{E}(r^{(k)})=z^{(k)} r^{(k)}$ and $\widetilde{\mathscr{E}}(l^{(k)})=z^{(k)} l^{(k)}$. 

As mentioned in the introduction, the physical state $\ket{\Psi(A)}$ is unchanged under the the right group action $\Gamma\colon\mathbb{A}_{\mathrm{uMPS}}\times\group{G}_{\mathrm{uMPS}}\to \mathbb{A}_{\mathrm{uMPS}}\colon\Gamma(A,G)=\Gamma^{(G)}(A)=A^{(G)}$ defined by
\begin{equation}
\left(A^{(G)}\right)^s=G^{-1} A^s G,
\end{equation}
where $\group{G}_{\mathrm{uMPS}}\cong\mathsf{GL}(D,\mathbb{C})$. It can easily be seen that the center subgroup $\mathsf{GL}(1,\mathbb{C})=\{ c \one_D|\forall c\in\mathbb{C}_0\}$ is within the stabilizer subgroup $\mathsf{G}^{(A)}$ for any $A\in\mathbb{A}_{\mathrm{uMPS}}$. Hence, we can define a quotient group
\begin{equation}
\mathsf{S}_{\mathrm{uMPS}}=\mathsf{G}_{\mathrm{uMPS}}/\mathsf{GL}(1,\mathbb{C})\cong \mathsf{PGL}(D,\mathbb{C})
\end{equation}
with $\mathsf{PGL}(D,\mathbb{C})$ the projective linear group. We denote elements of $\mathsf{S}_{\mathrm{uMPS}}$ as $[G]$, with $[G]=G\mathsf{GL}(1,\mathbb{C})=\{c G|c\in\mathbb{C}_0\}$ the coset of the normal subgroup $\mathsf{GL}(1,\mathbb{C})\subset\mathsf{GL}(D,\mathbb{C})$. As in the previous section, we have to restrict to some open subset $\manifold{A}_{\mathrm{uMPS}}\subset \mathbb{A}_{\mathrm{MPS}}$ in order to ensure that the group action
\begin{equation}
\Gamma\colon \manifold{A}_{\mathrm{uMPS}}\times \mathsf{S}_{\mathrm{uMPS}} \to \manifold{A}_{\mathrm{uMPS}}\colon (A,[G]) \mapsto \Gamma(A,[G])=A^{(G)}\qquad (\text{for any}\ G\in[G])
\end{equation}
is free and proper. Naturally, the definition above is independent of the choice of $G$ within the coset $[G]$. In order to characterize the subset $\manifold{A}_{\mathrm{uMPS}}$, we recall some known decomposition results for uniform MPS \cite{1992CMaPh.144..443F}.

As in the previous section, we start with tensors $A\in\mathbb{A}_{\mathrm{uMPS}}$ such that the $qD\times D$ matrix $V_{(\alpha s),\beta}=A^s_{\alpha,\beta}$ and the $D\times qD$ matrix $W_{\alpha,(s\beta)}=A^s_{\alpha,\beta}$ have maximal rank, \textit{i.e.} rank $D$. If this were not the case, the uMPS $\ket{\Psi(A)}$ could be written as a uMPS $\ket{\tilde{\Psi}(\tilde{A})}$ with lower bond dimension $\tilde{D}<D$. However, this restriction is in itself insufficient for the case of periodic boundary conditions. Under the full rank condition, the $D\times D$ matrices $A^{s}$ have a \emph{block decomposition} into $J\geq 1$ blocks as
\begin{equation}
A^{s}=\begin{bmatrix} \lambda_{1} A^{s}_{1}& 0 & \cdots & 0\\ 0 & \lambda_{2} A^{s}_{2} & \cdots & 0\\
\vdots & \vdots & \ddots & \vdots\\
0 & 0 & \ldots & \lambda_{J} A^{s}_{J}
\end{bmatrix},\label{eq:umps:blockdecomposition}
\end{equation}
where $A^{s}_{j}$ are matrices of size $D_{j}\times D_{j}$, $\forall j=1,\ldots,J$, with $\sum_{j=1}^{J}D_{j}=D$. The corresponding matrices $(V_j)_{(\alpha s),\beta}=(A_j)^s_{\alpha,\beta}$ and matrix $(W_j)_{\alpha,(s\beta)}=(A_j)^s_{\alpha,\beta}$ have rank $D_j$. The coefficients $\lambda_{j}$ satisfy $0< \lambda_{j}\leq 1$ and are chosen such that the corresponding transfer operators $\voperator{E}_{j}=\sum_{s=1}^{q} A^{s}_{j}\otimes \overline{A}^{s}_{j}$ have $1$ as eigenvalue with largest absolute value. The block decomposition is constructed such that this eigenvalue is non-degenerate for each block. The remaining gauge invariance within the blocks can be used to bring the blocks $A_j$ into a specific format, such as \textit{e.g.} the right canonical form of Subsection~\ref{ss:gmps:affinefiberbundle}:
\begin{itemize}
\item $\sum_{s=1}^{q} A_{j}^{s} {A_{j}^{s}}^{\dagger}=\mathscr{E}_{j}(\one_{D_{j}})=\one_{D_{j}}$,
\item $\sum_{s=1}^{q} {A_{j}^{s}}^{\dagger} l_{j} A_{j}^{s}=\widetilde{\mathscr{E}}_{j}(l_{j})= l_{j}$ where $l_{j}$ is a diagonal matrix with strictly positive eigenvalues. 
\end{itemize}

The block decomposition states that  the uMPS $\ket{\Psi(A)}$ can be written as a superposition 
\begin{equation}
\ket{\Psi(A)}=\sum_{j=1}^{J} \lambda_{j}^{N}\ket{\Psi_{j}(A_{j})},\label{eq:umps:stateblockdecomposition}
\end{equation}
where $\ket{\Psi_{j}(A_{j})}\in \varM_{\text{uMPS}(D_{j})}$ is a uMPS with lower bond dimension $D_{j}<D$. Let us now relate this block decomposition of $A\in\mathbb{A}_{\mathrm{uMPS}}$ to the stabilizer subgroup $\mathsf{G}^{(A)}$. As mentioned before, the stabilizer subgroup certainly contains the subgroup $\mathsf{GL}(1,\mathbb{C})$. If $A$ has $J>1$ blocks in its block decomposition, it is immediately clear that there exist additional gauge transformations $G=\bigoplus_{j=1}^J c_j \one_{D_j}$ with $c_j\in\mathbb{C}_0$ for all $j=1,\ldots,J$ within the stabilizer subgroup $\mathsf{G}^{(A)}$. Let us now restrict restrict to those uMPS $A$ which have a single block within the block decomposition. Up to a normalization, the spectral radius $\rho(\voperator{E})=1$ and the transfer operator has a unique eigenvalue $z^{(1)}=1$ with corresponding left and right eigenvectors $l^{(1)}$ and $r^{(1)}$ that have full rank. It can be shown that $\voperator{E}$ then has $P$ eigenvalues $z^{(p)}$ ($p=1,\ldots,P$) that are evenly distributed on the unit circle, \textit{i.e.} $z^{(p)}=\exp(\rmi 2\pi (p-1)/P)=z_{\ast}^{p-1}$ with $z_{\ast}=\exp(\rmi 2\pi/P)$. A further decomposition, called the \emph{periodic decomposition}, is possible:
\begin{equation}
A^{s}=\begin{bmatrix} 0 & A^{s}_{1}& 0 & \cdots & 0\\ 0 & 0 & A^{s}_{2} & \cdots & 0\\
\vdots & \vdots & \vdots & \ddots & \vdots\\
 A^{s}_{P} & 0 & 0 & \cdots & 0
\end{bmatrix},\label{eq:umps:periodicdecomposition}
\end{equation}
where $A^{s}_{p}$ is a matrix of size $D_{p-1}\times D_{p}$ with $D_{0}=D_{P}$ and $\sum_{p=1}^{P} D_{p}=D$. The eigenstates $l^{(p)}$ and $r^{(p)}$ corresponding to the eigenvalues $z^{(p)}$ ($p\in\mathbb{Z}_{P}$) of unit magnitude correspondingly decompose into
\begin{subequations}
\begin{align}
l^{(p)}&=\begin{bmatrix} \big(z^{(p)}\big)^{P-1} l_{1} & 0 & 0 & \cdots & 0\\ 0 & \big(z^{(p)}\big)^{P-2}l_{2} & 0 & \cdots & 0\\
\vdots & \vdots & \vdots & \ddots & \vdots\\
0 & 0 & 0 & \cdots & l_{P}
\end{bmatrix},\\
r^{(p)}&=\begin{bmatrix} r_{P} & 0 & 0 & \cdots & 0\\ 0 & \big(z^{(p)}\big)^{P-1}r_{1} & 0 & \cdots & 0\\
\vdots & \vdots & \vdots & \ddots & \vdots\\
0 & 0 & 0 & \cdots & \big(z^{(p)}\big) r_{P-1}
\end{bmatrix},
\end{align}
\end{subequations}
where $\sum_{s=1}^{q} {A^{s}_{p}}^{\dagger} l_{p} A^{s}_{p}= l_{p+1}\in \mathbb{C}^{D_{p}\times D_{p}}$ and $\sum_{s=1}^{q} A^{s}_{p} r_{p} {A^{s}_{p}}^{\dagger}= r_{p-1}\in \mathbb{C}^{D_{p-1}\times D_{p-1}}$ ($\forall p\in\mathbb{Z}_{p}$). If $P$ is a factor of $N$, the state $\ket{\Psi(A)}$ can be written as
\begin{equation}
\ket{\Psi(A)}=\sum_{p\in\mathbb{Z}_{p}} \operator{T}^{p} \ket{\tilde{\Psi}[\tilde{A}]},\label{eq:umps:stateperiodicdecomposition}
\end{equation}
where $\ket{\tilde{\Psi}[\tilde{A}]}\in\varM_{\text{MPS}\{\tilde{D}_{n}\}}$ is a non-uniform MPS with $\tilde{A}^{s}(n)=A^{s}_{n\mod P}$ and $\tilde{D}_{n}=D_{n\mod P}$. The state $\ket{\tilde{\Psi}[\tilde{A}]}$ is thus $P$-periodic ($\operator{T}^{P}\ket{\tilde{\Psi}[\tilde{A}]}=\ket{\tilde{\Psi}[\tilde{A}]}$) and $\ket{\Psi(A)}$ is a translation invariant superposition of $\ket{\tilde{\Psi}[\tilde{A}]}$ and its shifted versions. If $P$ is not a factor of $N$, $\ket{\Psi(A)}=0$.

The set of \emph{injective uMPS} $\manifold{A}_{\mathrm{uMPS}}\subset\mathbb{A}_{\mathrm{uMPS}}$ is given as those uMPS $A$ for which $z^{(1)}=1$ is the only eigenvalue with modulus $1$, \textit{i.e.} $P=1$ in the periodic decomposition of $A$. All other eigenvalues are then situated within the unit circle. This condition is also required in order to obtain an unambiguous thermodynamic limit in Subsection~\ref{ss:umps:thermodynamiclimit}. The nomenclature (\textit{i.e.}\ injective) is clarified at the end of this subsection. A gauge-invariant definition for the subset $\manifold{A}_{\mathrm{uMPS}}\subset\mathbb{A}_{\mathrm{uMPS}}$ of injective uMPS that does depend on a particular canonical form or a particular normalization such as $\rho(\voperator{E})=1$ can also be given.
\begin{lemma}
The set of injective MPS \emph{injective uMPS} $\manifold{A}_{\mathrm{uMPS}}$ defined as
\begin{equation}
\manifold{A}_{\mathrm{uMPS}}=\left\{A\in\mathbb{A}_{\mathrm{uMPS}}\mid \rho\big(\voperator{E}-z^{(1)} \rket{r^{(1)}}\rbra{l^{(1)}}\big)<\rho(\voperator{E})\ \text{and}\ \rank(l^{(1)})=\rank(r^{(1)})=D\right\}
\end{equation}
where $z^{(1)}$ is any eigenvalue of largest magnitude of the transfer matrix $\mathbb{E}$ associated to $A$ (it is unique for $A\in\manifold{A}_{\mathrm{uMPS}}$), and $\rket{r^{(1)}}$ and $\rbra{l^{(1)}}$ are the corresponding right and left eigenvectors, 
is a complex manifold with $\dim \manifold{A}_{\mathrm{uMPS}}=\dim \mathbb{A}_{\mathrm{uMPS}}=q D^2$.
\end{lemma}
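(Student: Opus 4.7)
The plan is to mirror the strategy used for full-rank MPS in Subsection~\ref{ss:gmps:affinefiberbundle}: show that the two defining conditions of $\manifold{A}_{\mathrm{uMPS}}$ are open conditions on $\mathbb{A}_{\mathrm{uMPS}}\cong\mathbb{C}^{qD^2}$, and then invoke Lemma~\ref{lemma:var:opensubsetmanifold} to conclude that $\manifold{A}_{\mathrm{uMPS}}$ is an open complex submanifold of $\mathbb{A}_{\mathrm{uMPS}}$, inheriting complex dimension $qD^2$.

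First, I would observe that the transfer matrix $\voperator{E}=\voperator{E}^{A}_{A}$ depends polynomially, and in particular holomorphically, on the entries of $A$. Fix any $A_0\in\manifold{A}_{\mathrm{uMPS}}$; by assumption the leading eigenvalue $z^{(1)}_0$ of $\voperator{E}_{A_0}$ is simple and isolated from the rest of the spectrum by a strict gap, and the remaining eigenvalues all lie inside a disk of strictly smaller radius. By standard non-Hermitian perturbation theory \cite{Kato:1995ys,Bhatia:1996zr}, one can enclose $z^{(1)}_0$ by a small contour $\gamma$ disjoint from the rest of the spectrum, and for all $A$ in a sufficiently small neighborhood of $A_0$ the resolvent $(z\one-\voperator{E}_A)^{-1}$ remains well defined along $\gamma$, depending holomorphically on both $z$ and $A$. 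The Cauchy integral of the resolvent along $\gamma$ then produces a spectral projector $\operator{P}^{(1)}_A$ that varies holomorphically with $A$ and has constant rank one in this neighborhood, guaranteeing that $\voperator{E}_A$ still possesses a unique simple eigenvalue $z^{(1)}_A$ inside $\gamma$. A separate contour enclosing the remaining spectrum of $\voperator{E}_{A_0}$ in the smaller disk likewise persists, so no new eigenvalue can cross into the region of maximal modulus; this rules out a perturbative transition to a periodic decomposition and yields the gap condition $\rho\big(\voperator{E}_A-z^{(1)}_A\rket{r^{(1)}_A}\rbra{l^{(1)}_A}\big)<\rho(\voperator{E}_A)$ in the whole neighborhood.

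Second, via the resolvent representation the eigenvectors $\rket{r^{(1)}_A}$ and $\rbra{l^{(1)}_A}$, suitably normalized, also depend continuously (in fact holomorphically up to a scalar) on $A$, and therefore the associated matrices $l^{(1)}_A, r^{(1)}_A\in\End(\mathbb{C}^D)$ do as well. The rank of a continuous matrix-valued function is lower semi-continuous, so the full-rank conditions $\rank(l^{(1)}_{A_0})=\rank(r^{(1)}_{A_0})=D$ persist on an open neighborhood of $A_0$. Combining the two stability statements shows that every $A_0\in\manifold{A}_{\mathrm{uMPS}}$ admits an open neighborhood fully contained in $\manifold{A}_{\mathrm{uMPS}}$, so $\manifold{A}_{\mathrm{uMPS}}$ is open in $\mathbb{A}_{\mathrm{uMPS}}$ and Lemma~\ref{lemma:var:opensubsetmanifold} delivers the claimed complex manifold structure and dimension.

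The main obstacle is handling the non-Hermitian perturbation theory carefully: eigenvalues of $\voperator{E}$ are not guaranteed to depend smoothly on $A$ if they collide, and the leading eigenvalue being simple is exactly what permits analytic extraction of $z^{(1)}_A$ and its spectral projector via the Dunford--Cauchy integral. The normalization of $\rket{r^{(1)}_A}$ and $\rbra{l^{(1)}_A}$ must also be chosen consistently (for instance by fixing $\rbraket{l^{(1)}_A|r^{(1)}_A}=1$ together with a holomorphic phase convention), and a gauge-invariant phrasing of the rank condition in terms of the spectral projector itself avoids any ambiguity on this front.
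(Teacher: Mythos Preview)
Your proposal is correct and follows essentially the same approach as the paper: establish that $\manifold{A}_{\mathrm{uMPS}}$ is open in $\mathbb{A}_{\mathrm{uMPS}}$ by appealing to perturbation theory for the simple leading eigenvalue of $\voperator{E}$ (citing Kato), and then invoke Lemma~\ref{lemma:var:opensubsetmanifold}. Your version is considerably more detailed on the analytic-perturbation side (Dunford--Cauchy projectors, lower semi-continuity of rank), whereas the paper dispatches both the spectral-gap stability and the full-rank stability in a single sentence referencing \cite{Kato:1995ys}.
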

\begin{proof}
According to Lemma~\ref{lemma:var:opensubsetmanifold}, it is sufficient to prove that $\manifold{A}_{\mathrm{uMPS}}$ is an open subset of $\mathbb{A}_{\mathrm{uMPS}}$. Since $z^{(1)}$ is a non-degenerate eigenvalue, its value and associated eigenvectors change continuously under arbitrary small perturbations, so that the non-degeneracy and the full rank condition on the eigenvectors are also preserved\cite{Kato:1995ys}. Hence, for any $A\in\manifold{A}_{\mathrm{uMPS}}$, there exist an open neighborhood of $A$ contained in $\manifold{A}_{\mathrm{uMPS}}$.
Alternatively, one can argue that the complement $\manifold{A}_{\mathrm{uMPS}}$ in $\mathbb{A}_{\mathrm{uMPS}}$ satisfies conditions which make it a closed set.
\end{proof}

Having characterized the complex manifold $\manifold{A}_{\mathrm{uMPS}}$, we can now study the restriction of the group action to $\manifold{A}_{\mathrm{uMPS}}$. 
\begin{lemma}
For any $A\in\manifold{A}_{\mathrm{uMPS}}$, the stabilizer subgroup $\mathsf{G}^{(A)}$ with respect to the group action of $\mathsf{G}_{\mathrm{uMPS}}$ is exactly equal to the center group $\mathsf{GL}(1,\mathbb{C})$.
\end{lemma}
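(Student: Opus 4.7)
The plan is to translate the stabilizer condition into a commutation relation and then exploit the simplicity of the dominant transfer-matrix eigenvalue that injectivity supplies. The condition $G\in\mathsf{G}^{(A)}$ reads $G^{-1}A^{s}G=A^{s}$ for all $s=1,\ldots,q$, i.e.\ $[G,A^{s}]=0$. The crucial observation is that this automatically promotes the commutation to only one of the two ancilla factors of the transfer matrix: $G\otimes\one_{D}$ commutes with $\voperator{E}=\sum_{s}A^{s}\otimes\overline{A^{s}}$, because
\begin{displaymath}
\voperator{E}(G\otimes\one_{D})=\sum_{s}(A^{s}G)\otimes\overline{A^{s}}=\sum_{s}(GA^{s})\otimes\overline{A^{s}}=(G\otimes\one_{D})\voperator{E}.
\end{displaymath}

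Next I would invoke the defining conditions of $\manifold{A}_{\mathrm{uMPS}}$. The eigenvalue $z^{(1)}$ of $\voperator{E}$ of largest modulus has algebraic multiplicity one (the spectral-radius condition $\rho(\voperator{E}-z^{(1)}\rket{r^{(1)}}\rbra{l^{(1)}})<\rho(\voperator{E})$ excludes any Jordan block at $z^{(1)}$), so its eigenspace coincides with its generalized eigenspace and is one-dimensional, spanned by $\rket{r^{(1)}}$. Moreover, the corresponding operator $r^{(1)}\in\End(\mathbb{C}^{D})$ has full rank and is therefore invertible. Because $G\otimes\one_{D}$ commutes with $\voperator{E}$ it must preserve this one-dimensional eigenspace, so there exists $\alpha\in\mathbb{C}$ with $(G\otimes\one_{D})\rket{r^{(1)}}=\alpha\rket{r^{(1)}}$. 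Translating back through the Choi--Jamio{\l}kowski identification, this is the operator identity $Gr^{(1)}=\alpha r^{(1)}$.

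Right-multiplication by $(r^{(1)})^{-1}$ then yields $G=\alpha\one_{D}$, and the invertibility of $G$ forces $\alpha\in\mathbb{C}_{0}$, so $\mathsf{G}^{(A)}\subseteq\{c\one_{D}\mid c\in\mathbb{C}_{0}\}\cong\mathsf{GL}(1,\mathbb{C})$; the reverse inclusion is immediate, since scalar matrices commute with everything and hence stabilize any $A$. I do not foresee a substantial obstacle: the only delicate point is ensuring algebraic rather than merely geometric simplicity of $z^{(1)}$, so that $G\otimes\one_{D}$ genuinely restricts to a scalar on the relevant eigenspace, and this is precisely what the spectral-gap condition on the transfer matrix built into the definition of $\manifold{A}_{\mathrm{uMPS}}$ encodes.
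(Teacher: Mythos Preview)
Your proof is correct and follows essentially the same route as the paper: translate the stabilizer condition into a commutation with the transfer matrix, use the simplicity of the dominant eigenvalue to conclude that $Gr^{(1)}$ is proportional to $r^{(1)}$, and then invoke the full rank of $r^{(1)}$ to force $G$ to be a scalar. The paper phrases the first step slightly differently (it writes $\voperator{E}\rket{Gr}=z^{(1)}\rket{Gr}$ directly via the map $\mathscr{E}$ rather than exhibiting the commutation of $G\otimes\one_D$ with $\voperator{E}$), but the substance is identical; note also that geometric simplicity of $z^{(1)}$ already suffices for your eigenspace-preservation step, so your caution about algebraic simplicity is harmless but not strictly needed.
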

\begin{proof}
For any $A\in\manifold{A}_{\mathrm{uMPS}}$, the corresponding transfer operator $\mathbb{E}$ has a non-degenerate eigenvalue $z^{(1)}$ for which the corresponding eigenvectors $l^{(1)}=l$ and $r^{(1)}=r$ have full rank. Since $G\in\mathsf{G}^{(A)}$ implies that $A^{(G)}=A$, we obtain
\begin{displaymath}
\voperator{E}\rket{Gr}=z^{(1)}\rket{Gr},
\end{displaymath}
so that $Gr$ is proportional to the $r$. Hence, the only elements $G$ of the stability group satisfying this relation are $G=c\one_D\in\mathsf{GL}(1,\mathbb{C})$.
\end{proof}
Hence, by trading the full gauge group $\mathsf{G}_{\mathrm{uMPS}}$ for the quotient group $\mathsf{S}_{\mathrm{uMPS}}=\mathsf{PGL}(\mathbb{C},D)$, we obtain the following result:
\begin{corollary}
The group action $\Gamma:\manifold{A}_{\mathrm{uMPS}}\times \mathsf{S}_{\mathrm{uMPS}}\to \manifold{A}_{\mathrm{uMPS}}$ is free. 
\end{corollary}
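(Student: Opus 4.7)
The plan is to deduce freeness of the quotient group action directly from the preceding lemma, which identifies the full stabilizer subgroup $\mathsf{G}^{(A)}\subset \mathsf{G}_{\mathrm{uMPS}}$ of any injective uMPS $A\in\manifold{A}_{\mathrm{uMPS}}$ as precisely the center $\mathsf{GL}(1,\mathbb{C})=\{c\one_D\mid c\in\mathbb{C}_0\}$. Since $\mathsf{S}_{\mathrm{uMPS}}$ is defined as $\mathsf{G}_{\mathrm{uMPS}}/\mathsf{GL}(1,\mathbb{C})$, the stabilizer of any point under the descended action should collapse to the identity coset, which is precisely the statement of freeness.

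First I would briefly justify that the group action really does descend to a well-defined action of $\mathsf{S}_{\mathrm{uMPS}}$ on $\manifold{A}_{\mathrm{uMPS}}$. For two representatives $G_1,G_2$ of the same coset $[G]$, we have $G_2=cG_1$ for some $c\in\mathbb{C}_0$, so that $(A^{(G_2)})^s=G_2^{-1}A^sG_2=c^{-1}G_1^{-1}A^s c G_1 = G_1^{-1}A^s G_1=(A^{(G_1)})^s$; the $c$ and $c^{-1}$ cancel. This is tacit in the definition of $\Gamma$ for $\mathsf{S}_{\mathrm{uMPS}}$ but worth recording before discussing the stabilizer.

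Next, fix any $A\in\manifold{A}_{\mathrm{uMPS}}$ and let $[G]\in\mathsf{S}_{\mathrm{uMPS}}$ be an element of its stabilizer, i.e.\ $\Gamma(A,[G])=A$. Choosing any representative $G\in[G]$, the relation $A^{(G)}=A$ holds in $\mathbb{A}_{\mathrm{uMPS}}$ and states precisely that $G\in\mathsf{G}^{(A)}$. By the preceding lemma this forces $G=c\one_D$ for some $c\in\mathbb{C}_0$, so $[G]=[c\one_D]$ coincides with the identity coset $\mathsf{GL}(1,\mathbb{C})\in\mathsf{S}_{\mathrm{uMPS}}$. Hence the stabilizer subgroup in $\mathsf{S}_{\mathrm{uMPS}}$ of every $A\in\manifold{A}_{\mathrm{uMPS}}$ is trivial, which is exactly the definition of a free group action.

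There is essentially no real obstacle here: the analytically substantive step---excluding any non-scalar $G$ that could conjugate $A$ to itself---has already been carried out in the preceding lemma via the non-degeneracy of the leading eigenvalue of $\voperator{E}$ and the full-rank property of $l^{(1)}$ and $r^{(1)}$ guaranteed by injectivity. What remains is a purely group-theoretic reformulation, namely that passing to the quotient by a subgroup $\mathsf{K}\subset\mathsf{G}_{\mathrm{uMPS}}$ kills the stabilizers exactly when $\mathsf{K}$ equals the common stabilizer, which is what the preceding lemma furnishes.
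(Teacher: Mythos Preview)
Your proposal is correct and matches the paper's approach exactly: the corollary is stated as an immediate consequence of the preceding lemma identifying $\mathsf{G}^{(A)}=\mathsf{GL}(1,\mathbb{C})$, with no separate proof given in the paper. Your additional verification that the action descends well to the quotient is a harmless (and arguably helpful) elaboration of what the paper leaves implicit.
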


Note that the proof above only requires a non-degenerate eigenvalue with corresponding full rank eigenvectors. Hence, the group action $\Gamma:\manifold{A}_{\mathrm{uMPS}}\times \mathsf{S}_{\mathrm{uMPS}}\to \manifold{A}_{\mathrm{uMPS}}$ is free for all uMPS $A$ which have a single block in the block decomposition ($J=1$), even if this block corresponds to $P>1$ in the periodic decomposition. However, for $P>1$, there do exist non-trivial translation non-invariant gauge transformations (\textit{i.e.} with site-dependent $G(n)$) that have no effect on the uMPS. In addition, the restriction to uMPS $A$ with $P=1$ is required below in order to have a natural notion of injectivity.

\begin{lemma}
The group action $\Gamma:\manifold{A}_{\mathrm{uMPS}}\times \mathsf{S}_{\mathrm{uMPS}}\to \manifold{A}_{\mathrm{uMPS}}$ is proper. 
\end{lemma}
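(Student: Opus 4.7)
My plan is to reduce properness to an application of Theorem~\ref{th:gmps:kaup}, mirroring the strategy of Subsection~\ref{ss:gmps:affinefiberbundle}. Because the action $\Gamma\colon\manifold{A}_{\mathrm{uMPS}}\times\mathsf{S}_{\mathrm{uMPS}}\to\manifold{A}_{\mathrm{uMPS}}$ is holomorphic and, by the preceding corollary, free, every $\Gamma^{([G])}$ is a holomorphic automorphism of $\manifold{A}_{\mathrm{uMPS}}$ and the complex Lie group $\mathsf{S}_{\mathrm{uMPS}}$ embeds into $\mathrm{Aut}(\manifold{A}_{\mathrm{uMPS}})$. The only remaining task is to exhibit a continuous, $\mathsf{S}_{\mathrm{uMPS}}$-invariant distance on $\manifold{A}_{\mathrm{uMPS}}$.

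The natural candidate is a path-length distance built from a gauge-covariant quadratic form in $\dot{A}$, in direct analogy with Eq.~(\ref{eq:gmps:distance}). For any $A\in\manifold{A}_{\mathrm{uMPS}}$ the injectivity condition guarantees a non-degenerate leading eigenvalue $z^{(1)}>0$ of the transfer matrix $\voperator{E}=\voperator{E}^{A}_{A}$ whose left and right eigenvectors $l,r\in\End(\mathbb{C}^{D})$ are strictly positive definite; I normalize them by $\tr[lr]=1$. For a piecewise-smooth path $A\colon[0,1]\to\manifold{A}_{\mathrm{uMPS}}$ I then set
\begin{equation*}
L[A]=\int_{0}^{1}\rbraket{l(t)|\voperator{E}^{\dot{A}(t)}_{\dot{A}(t)}|r(t)}\,\rmd t=\int_{0}^{1}\sum_{s=1}^{q}\tr\bigl[l(t)\,\dot{A}^{s}(t)\,r(t)\,\dot{A}^{s}(t)^{\dagger}\bigr]\,\rmd t,
\end{equation*}
and define $D_{\mathrm{uMPS}}[A_{0},A_{1}]=\varmin_{A(t)} L[A]$ over paths joining $A_{0}$ to $A_{1}$. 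Since the integrand is invariant under the residual rescaling $l\to\lambda l$, $r\to\lambda^{-1}r$, $L[A]$ is well-defined intrinsically even if $l,r$ cannot be chosen single-valued globally.

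Verifying the metric axioms and the $\mathsf{S}_{\mathrm{uMPS}}$-invariance then follows the generic-MPS argument line-by-line. Strict positivity of the integrand whenever $\dot{A}\neq 0$ comes from $l,r>0$, concatenation and reversal of paths give the triangle inequality and symmetry, and finiteness together with continuity of $D_{\mathrm{uMPS}}$ follows because $\manifold{A}_{\mathrm{uMPS}}$ is open in the Euclidean space $\mathbb{A}_{\mathrm{uMPS}}$, so nearby points can be joined by straight segments along which the integrand is bounded. For invariance, the eigenvalue equations directly imply that under $A\mapsto A^{(G)}=G^{-1}AG$ the eigenvectors transform as $l\mapsto G^{\dagger}lG$ and $r\mapsto G^{-1}r(G^{-1})^{\dagger}$, preserving $\tr[lr]=1$; cyclicity of the trace then gives
\begin{equation*}
\tr\bigl[(G^{\dagger}lG)(G^{-1}\dot{A}^{s}G)(G^{-1}r(G^{-1})^{\dagger})(G^{\dagger}\dot{A}^{s\dagger}(G^{-1})^{\dagger})\bigr]=\tr\bigl[l\,\dot{A}^{s}\,r\,\dot{A}^{s\dagger}\bigr],
\end{equation*}
so $L$ and hence $D_{\mathrm{uMPS}}$ are invariant, and the result depends only on the coset $[G]\in\mathsf{S}_{\mathrm{uMPS}}$. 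Theorem~\ref{th:gmps:kaup} then delivers properness.

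The most subtle step I anticipate is the global choice of $l(A),r(A)$. Analytic perturbation theory for simple eigenvalues yields local continuity of $z^{(1)}$ and its one-dimensional eigenspaces throughout $\manifold{A}_{\mathrm{uMPS}}$, but there may be genuine monodromy obstructions to picking $l,r$ as single-valued continuous functions on the whole manifold. The redemption is that the integrand depends only on the leading eigenspaces and is invariant under the residual scaling, and both of these data vary continuously on all of $\manifold{A}_{\mathrm{uMPS}}$ precisely because the definition of injective uMPS enforces both the non-degeneracy of $z^{(1)}$ and the full-rank property of $l,r$. This is the step where the restriction from $\mathbb{A}_{\mathrm{uMPS}}$ to the open submanifold $\manifold{A}_{\mathrm{uMPS}}$ is used essentially, exactly as the full-rank condition was used in the generic-MPS construction of Eq.~(\ref{eq:gmps:distance}).
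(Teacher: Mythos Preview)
Your proof is correct and follows essentially the same approach as the paper: both reduce properness to Theorem~\ref{th:gmps:kaup} by constructing the identical $\mathsf{S}_{\mathrm{uMPS}}$-invariant distance $D_{\mathrm{uMPS}}(A_0,A_1)=\varmin_{A(t)}\int_0^1\rbraket{l(t)|\voperator{E}^{\dot{A}(t)}_{\dot{A}(t)}|r(t)}\,\rmd t$ with the normalization $\tr[l(t)r(t)]=1$, arguing strict positivity from the full-rank condition on $l,r$ and gauge invariance from their transformation law. Your discussion of the residual scaling ambiguity and potential monodromy in the global choice of $l,r$ is a welcome refinement that the paper leaves implicit.
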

\begin{proof}
As in the previous section, we use Theorem~\ref{th:gmps:kaup} to prove this result. The group action $\Gamma:\manifold{A}_{\mathrm{uMPS}}\times\mathsf{S}_{\mathrm{uMPS}}\to \manifold{A}_{\mathrm{uMPS}}$ naturally fulfills the conditions of Theorem~\ref{th:gmps:kaup} if we can define a distance function that is invariant under the action of $\mathsf{S}_{\mathrm{uMPS}}$. The following distance function meets this requirement:
\begin{equation}
D_{\mathrm{uMPS}}(A_0,A_1)=\varmin_{A(t)} \int_{0}^{1} \rbraket{l(t)|\mathbb{E}^{\dot{A}(t)}_{\dot{A}(t)}|r(t)}\, \rmd t
\end{equation}
where $A:[0,1]\to \manifold{A}_{\mathrm{uMPS}}:t\mapsto A(t)$ is a piecewise smooth path with $A(0)=A_0$ and $A(1)=A_1$, and $l(t)$ and $r(t)$ are a left and right eigenvector corresponding to the largest eigenvalue of $\mathbb{E}^{A(t)}_{A(t)}$ and are normalized as $\tr[ l(t) r(t) ] =1$. Given the conditions on the subset $\manifold{A}_{\mathrm{uMPS}}$ of injective uMPS, the integrand is strictly positive for nonzero $\dot{A}(t)$, so that any two distinct uMPS $A_0$ and $A_1$ result in $D_{\mathrm{uMPS}}(A_0,A_1)>0$. It is straightforwardly checked that $D_{\mathrm{uMPS}}(A_0,A_1)=D_{\mathrm{uMPS}}(A^{(G)}_0,A^{(G)}_1)$ for any $[G]\in\mathsf{S}_{\mathrm{uMPS}}$. The distance function $D_{\mathrm{uMPS}}$ can only be related to a (rescaled version of) $D_{\mathrm{MPS}}$ in the thermodynamic limit, as is explained in the next subsection.
\end{proof}

We now have all the necessary ingredients to prove the main theorem of this section:
\begin{theorem}
The variational class of injective uMPS $\Psi:\manifold{A}_{\mathrm{uMPS}}\to \varM_{\mathrm{uMPS}}$ can be given the structure of a principal fiber bundle with structure group $\mathsf{S}_{\mathrm{uMPS}}$, base manifold $\varM_{\mathrm{uMPS}}$, total manifold $\manifold{A}_{\mathrm{uMPS}}$ and bundle projection $\Psi$. The variational manifold $\varM_{\mathrm{uMPS}}$ is a complex submanifold of $\hilbert$ that is biholomorphic to the orbit space $\manifold{A}_{\mathrm{uMPS}}/\mathsf{S}_{\mathrm{uMPS}}$ and thus has dimension $\dim \varM_{\mathrm{MPS}}= q D^2 - D^2 +1$. 
\end{theorem}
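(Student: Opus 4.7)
The plan is to mirror the strategy successfully executed for generic MPS in Subsection~\ref{ss:gmps:affinefiberbundle}, assembling the ingredients already established earlier in Section~\ref{ss:umps:fiberbundle}. The preceding lemmas and corollaries guarantee that $\manifold{A}_{\mathrm{uMPS}}$ is a complex manifold of complex dimension $qD^2$, that $\mathsf{S}_{\mathrm{uMPS}}\cong\mathsf{PGL}(D,\mathbb{C})$ is a complex Lie group with $\dim\mathsf{S}_{\mathrm{uMPS}}=D^2-1$, and that the restricted right action $\Gamma\colon\manifold{A}_{\mathrm{uMPS}}\times\mathsf{S}_{\mathrm{uMPS}}\to\manifold{A}_{\mathrm{uMPS}}$ is free and proper. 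It is also manifestly holomorphic: $A^{(G)s}=G^{-1}A^sG$ depends holomorphically on the entries of $A$ and $G$, since inversion is holomorphic in $\mathsf{GL}(D,\mathbb{C})$, and this property descends to $\mathsf{S}_{\mathrm{uMPS}}$.

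With these hypotheses in place, Theorem~\ref{th:gmps:quotientmanifold} immediately implies that the orbit space $\manifold{A}_{\mathrm{uMPS}}/\mathsf{S}_{\mathrm{uMPS}}$ is a smooth manifold and the natural projection $\pi\colon\manifold{A}_{\mathrm{uMPS}}\to\manifold{A}_{\mathrm{uMPS}}/\mathsf{S}_{\mathrm{uMPS}}$ is a smooth submersion, hence $\pi$ is a principal fiber bundle with structure group $\mathsf{S}_{\mathrm{uMPS}}$. The holomorphic refinement, Theorem~\ref{th:gmps:complexquotientmanifold}, then upgrades $\manifold{A}_{\mathrm{uMPS}}/\mathsf{S}_{\mathrm{uMPS}}$ to a complex manifold and $\pi$ to a holomorphic map. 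Because $\Psi\colon\manifold{A}_{\mathrm{uMPS}}\to\hilbert_{\manifold{L}}$ is by construction a holomorphic map that is invariant under $\Gamma$, the universal property Lemma~\ref{lemma:gmps:factorization} yields a holomorphic factorization $\Psi=\psi\circ\pi$ with $\psi\colon\manifold{A}_{\mathrm{uMPS}}/\mathsf{S}_{\mathrm{uMPS}}\to\hilbert_{\manifold{L}}$, whose image is precisely $\varM_{\mathrm{uMPS}}$.

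The decisive step, and the main obstacle, is proving that $\psi$ is injective so that $\psi$ becomes a biholomorphism onto $\varM_{\mathrm{uMPS}}$. Unlike the open-boundary case, the uMPS construction does not admit a direct Schmidt-decomposition reconstruction. Instead I would invoke the injectivity condition built into the definition of $\manifold{A}_{\mathrm{uMPS}}$: the transfer matrix $\voperator{E}$ has a unique eigenvalue of largest modulus with strictly positive left and right fixed points $l$ and $r$. This is the hypothesis under which the Fannes--Nachtergaele--Werner reconstruction theorem (implicitly used for finitely correlated states \cite{1992CMaPh.144..443F} and recast in Ref.~\onlinecite{2006quant.ph..8197P}) applies: for $N$ larger than some threshold determined by the spectral gap of $\voperator{E}$, the reduced density matrices of $\ket{\Psi(A)}$ on sufficiently large contiguous blocks determine the tensor $A$ up to a similarity transformation $A^s\mapsto G^{-1}A^sG$. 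Concretely, I would argue that if $\ket{\Psi(A)}=\ket{\Psi(\tilde A)}$ with $A,\tilde A\in\manifold{A}_{\mathrm{uMPS}}$, forming the combined transfer operator $\voperator{E}^{A}_{\tilde A}$ and exploiting its spectral radius together with positivity of $l$ and $r$ forces the existence of an invertible intertwiner $G$ with $\tilde A^s G=GA^s$, i.e.\ $\tilde A=A^{(G)}$. This intertwiner is unique up to the overall scalar that has already been quotiented out in passing from $\mathsf{G}_{\mathrm{uMPS}}$ to $\mathsf{S}_{\mathrm{uMPS}}$, so the preimage of $\ket{\Psi(A)}$ under $\Psi$ is exactly one $\mathsf{S}_{\mathrm{uMPS}}$-orbit.

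Combining these steps, $\psi$ is a holomorphic bijection from $\manifold{A}_{\mathrm{uMPS}}/\mathsf{S}_{\mathrm{uMPS}}$ onto $\varM_{\mathrm{uMPS}}$, and since $\Psi=\psi\circ\pi$ is a holomorphic submersion followed by $\psi$, a standard argument shows $\psi$ is a biholomorphism onto its image, endowing $\varM_{\mathrm{uMPS}}$ with the structure of a complex submanifold of $\hilbert_{\manifold{L}}$. The principal fiber bundle structure with bundle projection $\Psi$, total space $\manifold{A}_{\mathrm{uMPS}}$, structure group $\mathsf{S}_{\mathrm{uMPS}}$ and base $\varM_{\mathrm{uMPS}}$ follows, and the dimension is
\begin{equation*}
\dim\varM_{\mathrm{uMPS}}=\dim\manifold{A}_{\mathrm{uMPS}}-\dim\mathsf{S}_{\mathrm{uMPS}}=qD^2-(D^2-1)=qD^2-D^2+1,
\end{equation*}
as claimed.
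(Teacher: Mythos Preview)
Your proof is correct and follows essentially the same approach as the paper: both assemble the previously established free, proper, holomorphic action, apply Theorems~\ref{th:gmps:quotientmanifold} and~\ref{th:gmps:complexquotientmanifold} together with Lemma~\ref{lemma:gmps:factorization} to get the factorization $\Psi=\psi\circ\pi$, and then invoke the Fannes--Nachtergaele--Werner/P\'erez-Garc\'ia et al.\ uniqueness results to establish injectivity of $\psi$. The paper's version is slightly more terse, simply citing Refs.~\onlinecite{1992CMaPh.144..443F,2006quant.ph..8197P} for the injectivity step and noting the stronger reduced-density-matrix characterization, whereas you sketch the intertwiner argument explicitly; but the logical structure is the same.
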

\begin{proof}
As in the previous section, this theorem depends strongly on the general results stated in Theorems~\ref{th:gmps:quotientmanifold} and \ref{th:gmps:complexquotientmanifold} and Lemma~\ref{lemma:gmps:factorization}. These theorems allow to conclude that $\manifold{A}_{\mathrm{uMPS}}/\mathsf{S}_{\mathrm{uMPS}}$ is a complex manifold and that $\Psi:\manifold{A}_{\mathrm{uMPS}}\to \varM_{\mathrm{uMPS}}$ factorizes as $\psi\circ \pi$ with $\pi:\manifold{A}_{\mathrm{uMPS}}\to \manifold{A}_{\mathrm{uMPS}}/\mathsf{S}_{\mathrm{uMPS}}$ the natural holomorphic projection. In order to make any statements about the nature of $\varM_{\mathrm{uMPS}}$, it remains to be proven that $\psi: \manifold{A}_{\mathrm{uMPS}}/\mathsf{S}_{\mathrm{uMPS}}\to \varM_{\mathrm{uMPS}}$ is a biholomorphism. The holomorphic map $\psi$ is surjective by definition. Hence, it remains to be proven that it is injective, \textit{i.e.} that the uMPS representation of a state $\ket{\Psi(A)}$ is unique up to the action of the gauge group for any $A\in\manifold{A}_{\mathrm{uMPS}}$. It was proven in Ref.~\onlinecite{1992CMaPh.144..443F,2006quant.ph..8197P} that the defining properties of $\manifold{A}_{\mathrm{uMPS}}$ are sufficient to obtain a one-to-one correspondence between $A$ and $\ket{\Psi(A)}$, hence justifying the name injective MPS. In fact, the result is even stronger, since $A$ is completely defined (up to gauge transformations) by all reduced density matrices $\rho_{\ell}=\tr_{\manifold{L}\setminus \ell} \ket{\Psi(A)}\bra{\Psi(\overline{A})}$ for any contiguous block of sites $\ell$ that is longer than a certain minimal length $\ell_0$, called the injectivity length.
\end{proof}

As in the previous section on generic MPS, we can similarly define a manifold $\tilde{\varM}_{\mathrm{uMPS}}\subset P(\hilbert)$ and define it as the base space of a fiber bundle with bundle space $\manifold{A}_{\mathrm{uMPS}}$ and an enlarged structure group $\tilde{\group{S}}_{\mathrm{uMPS}}$ that also includes norm and phase changes. We will explicitly do so when discussing the thermodynamic limit in the next subsection, where this becomes the most natural structure to look at.

\subsection{Thermodynamic limit}
\label{ss:umps:thermodynamiclimit}
Despite the nice properties of systems with periodic boundary conditions, the increased computational complexity of evaluating expectation values with respect to MPS with periodic boundary conditions has hindered their applicability. This increased computational complexity is caused by the fact that correlations between two points can travel along two different ways on the circle. In contrast, systems with open boundary conditions can have strong boundary effects (Friedel oscillations) that extend deeply into the bulk, especially for (near)-critical systems. However, for very large systems ---which are finite-size restrictions of translation invariant Hamiltonians in the thermodynamic limit--- we still expect the matrices of the MPS approximation of the ground state to become site-independent when sufficiently far from the boundaries. By exploiting the translation invariance in either a MPS with periodic boundary conditions or in the bulk of a MPS with open boundary conditions, we can directly define a uniform MPS representation in the thermodynamic limit. The computational disadvantages of the MPS with periodic boundary conditions disappear, since observables with compact support cannot distinguish between open or periodic boundary conditions. On the other hand, boundary effects are also undetectable by operators that live deep in the bulk. We can therefore discard them and restrict to the translation invariant bulk of a system with open boundary conditions.

A quantitative verification of these statements requires the definition of the class of uMPS $\ket{\Psi(A)}$ in the thermodynamic limit. Starting with a system on the lattice $\mathcal{L}=\{-N,-N+1,\ldots,N-1,N\}$, this limit is formally obtained as
\begin{equation}
\ket{\Psi(A)}=\lim_{N\to \infty } \sum_{\{s_{n}\}=1}^{q} \tr\left[Q\prod_{n=-N}^{+N} A^{s_{n}}\right] \ket{\{s_{n}\}}=\sum_{\{s_{n}\}=1}^{q} \tr\left[Q\prod_{n\in\mathbb{Z}} A^{s_{n}}\right] \ket{\{s_{n}\}}.\label{eq:umps:definf}
\end{equation}
The $D\times D$ matrix $Q$ is a boundary matrix that allows one to interpolate between periodic boundary conditions ($Q=\one_D$, $\text{rank}(Q)=D$) and open boundary conditions ($\text{rank}(Q)=1$). The purpose of this section is to show in a constructive fashion that the subset of elements $A\in\mathbb{A}_{\mathrm{uMPS}}$ that have a well-defined thermodynamic limit ---in the sense that normalized expectation values of local operators do not depend on the boundary conditions encoded by $Q$--- corresponds exactly to the set $\manifold{A}_{\mathrm{uMPS}}$. Note that the limit in the definition above is only valid at a physical level, and a more rigorous treatment would inevitably require the introduction of $C^\ast$-algebras. Such a rigorous study of uMPS $\ket{\Psi(A)}$ has been done by Fannes, Nachtergaele and Werner in Ref.~\onlinecite{1992CMaPh.144..443F}, who refer to these states as \emph{finitely correlated states}. The class of finitely correlated states is even more general, and the subclass that corresponds to the uMPS are the so-called \emph{purely generated} finitely correlated states. When the transfer matrix $\voperator{E}$ has a unique eigenvalue $1$ (\textit{i.e.}\ $J=1$ in the block decomposition), the state is called \emph{ergodic}, and when this is also the only eigenvalue with modulus $1$ (\textit{i.e.}\ $P=1$ in the periodic decomposition), the state is called a \emph{pure} finitely correlated state. Unlike the ordinary MPS $\ket{\Psi[A]}$, which is linear in each of its arguments $A(n)$ separately, the uMPS $\ket{\Psi(A)}$ is highly non-linear in its argument. It took some major breakthroughs before an algorithm was constructed that allowed the variational optimization of the uMPS ansatz \cite{2007PhRvL..98g0201V}.

The norm of the state in Eq.~(\ref{eq:umps:definf}) is given by $\braket{\Psi(A)|\Psi(A)}=\lim_{N\to \infty} \tr[(Q\otimes \overline{Q})\voperator{E}^{2N+1}]$. We can always rescale $A$ such that the spectral radius $\rho(\voperator{E})=1$ and $\ket{\Psi(A)}$ becomes normalizable to some finite value. However, by only looking at normalized expectation values and defining everything in terms of the limit $N\to\infty$, this is not really necessary. Let $z^{(k)}$ for $k=1,\ldots,K$ be the eigenvalues with largest magnitude, so that $\lvert z^{(k)}\rvert=\rho(\voperator{E})$, and denote the corresponding left and right eigenvectors as $\rbra{l^{(k)}}$ and $\rket{r^{(k)}}$, which are normalized as $\rbraket{l^{(k)}|r^{(k)}}=1$ so that $\voperator{S}^{(k)}=\rket{r^{(k)}}\rbra{l^{(k)}}$ is a projector onto the corresponding eigenspace. At least one eigenvalue is positive, and we label it with $k=1$, so that $z^{(1)}=\rho(\voperator{E})$. For the normalization of the state, the only terms that survive the limit are given by
\begin{equation}
\braket{\Psi(\overline{A})|\Psi(A)}=\lim_{N\to\infty}\sum_{k=1}^{K} (z^{(k)})^{2N+1} \rbraket{l^{(k)}|Q\otimes\overline{Q}|r^{(k)}}.
\end{equation}
If a product operator $\operator{O}$ has non-trivial support only on the sites $\{-M,-M+1,\ldots,+M\}$ with $M$ some constant, then the correlations acting along the other side of the circle have to travel over an ``infinite distance'' in the limit $N\to \infty$. Here too, only the terms corresponding to the eigenvalues $z^{(k)}$ ($k=1,\ldots,K$) survive, resulting in
\begin{multline}
\braket{\Psi(\overline{A})|\operator{O}|\Psi(A)}=
\lim_{N\to\infty}\sum_{k=1}^{K} (z^{(k)})^{2N-2M} \rbraket{l^{(k)}|Q\otimes\overline{Q}|r^{(k)}}\\
\times\rbraket{l^{(k)}| \voperator{E}_{O(-M)}\voperator{E}_{O(-M+1)}\cdots \voperator{E}_{O(M)}|r^{(k)}}.
\end{multline}
We can thus define the normalized expectation value as
\begin{equation}
O(\overline{A},A)\defis\frac{\braket{\Psi(\overline{A})|\operator{O}|\Psi(A)}}{\braket{\Psi(\overline{A})|\Psi(A)}}=\lim_{N\to\infty} O_N
\end{equation}
where $\{O_N\}$ is an infinite sequence with entries
\begin{equation}
\begin{split}
O_N&=\frac{\sum_{k=1}^{K} (z^{(k)})^{2N-2M} \rbraket{l^{(k)}|Q\otimes\overline{Q}|r^{(k)}}
\rbraket{l^{(k)}| \voperator{E}_{O(-M)}\voperator{E}_{O(-M+1)}\cdots \voperator{E}_{O(M)}|r^{(k)}}}{\sum_{k=1}^{K} (z^{(k)})^{2N+1} \rbraket{l^{(k)}|Q\otimes\overline{Q}|r^{(k)}}}\\
&=\left[\sum_{k=1}^{K} \left(\frac{z^{(k)}}{z^{(1)}}\right)^{2N-2M} \rbraket{l^{(k)}|Q\otimes\overline{Q}|r^{(k)}}\frac{\rbraket{l^{(k)}| \voperator{E}_{O(-M)}\voperator{E}_{O(-M+1)}\cdots \voperator{E}_{O(M)}|r^{(k)}}}{(z^{(1)})^{2M+1}}\right]\\
&\qquad\qquad \times\left[\sum_{k=1}^{K} \left(\frac{z^{(k)}}{z^{(1)}}\right)^{2N+1} \rbraket{l^{(k)}|Q\otimes\overline{Q}|r^{(k)}}\right]^{-1}.
\end{split}
\end{equation}
The entries $O_N$ can be computed using the iterative construction for MPS with open boundary conditions, resulting in a computational complexity $\order(K D^{3})$. Since the number of eigenvalues $K$ with magnitude $\rho(\voperator{E})$ is typically much smaller than $D^{2}$, the increased computational complexity of periodic boundary conditions disappears, provided that we can efficiently determine the $K$ eigenvalues and their corresponding eigenvectors using an iterative eigensolver. However, unless all $K$ eigenvalues $z^{(k)}$ equal a unique positive value or
\begin{displaymath}
\rbraket{l^{(k)}| \voperator{E}_{O(-M)}\voperator{E}_{O(-M+1)}\cdots \voperator{E}_{O(M)}|r^{(k)}}=0
\end{displaymath}
for all eigenvalues that don't, the sequence $\{O_N\}$ is an alternating sequence that does not converge so that the limit in the definition of $O(\overline{A},A)$ does not exist. If all eigenvalues $z^{(k)}=z^{(1)}=\rho(\voperator{E})$ for $k=1,\ldots,K$, the thermodynamic limit is well-defined, but the expectation value still depends on the boundary matrix $Q$. This corresponds to a block decomposition of $A$ into $J\geq K$ blocks, where $\lambda_j=\rho(\voperator{E})$ for $j=1,\ldots,K$ and $\lambda_j<\rho(\voperator{E})$ for $j=K+1,\ldots,J$, and where in addition the individual blocks $A_j$ have a periodic decomposition with $P=1$ for $j=1,\ldots,K$. The MPS $\ket{\Psi(A)}$ is then a superposition of uniform MPS with smaller bond dimensions as in Eq.~(\ref{eq:umps:stateblockdecomposition}), and $Q$ determines the superposition coefficients. Note that the blocks with $j=K+1,\ldots,J$, for which $\lambda_j<\rho(\voperator{E})$ are irrelevant in the thermodynamic limit. If such blocks are present, the state $\ket{\Psi(A)}$ can in the thermodynamic limit be represented by an equivalent uMPS $\ket{\tilde{\Psi}(\tilde{A})}$ with smaller bond dimension $\tilde{D}=\sum_{j=1}^{K} D_j$. Only when $K=1$ and the transfer operator $\voperator{E}$ has a unique eigenvalue with magnitude equal to the spectral radius do we obtain a normalized expectation value $\braket{\Psi(\overline{A})|\operator{O}|\Psi(A)}/\braket{\Psi(\overline{A})|\Psi(A)}$ that is completely independent of $Q$. We then obtain
\begin{equation}
O(\overline{A},A)=\frac{\rbraket{l^{(1)}| \voperator{E}_{O(-M)}\voperator{E}_{O(-M+1)}\cdots \voperator{E}_{O(M)}|r^{(1)}}}{(z^{(1)})^{2M+1}}\label{eq:umps:normalizedexpectationvalue}
\end{equation}
In addition, we do assume that there are no other blocks $A_j$ with $\lambda_j<\rho(\voperator{E})$ ($j=2,\ldots,J$) present in the block decomposition (\textit{i.e.}\ we assume $J=1$), so that $l^{(1)}=l$ and $r^{(1)}=r$ have full rank. The resulting subset of uMPS $A$ with these properties corresponds precisely to the set $\manifold{A}_{\mathrm{uMPS}}$. While there is no difference between open and periodic boundary conditions for normalized expectation values of local operators if we restrict to $A\in\manifold{A}_{\mathrm{uMPS}}$, we do assume we are working with open boundary conditions and denote $Q=\bm{v}_\text{R} \bm{v}_\text{L}^\dagger$ in the definition of the uMPS and its tangent vectors (see next subsection) accordingly. This assumption simplifies the computation of expectation values and allows for certain generalizations that \textit{e.g.} enable the description of topologically non-trivial excitations in Ref.~\onlinecite{2012PhRvB..85j0408H}. We always assume that the transfer matrix $\voperator{E}$ has a non-degenerate eigenvalue $1$ as single eigenvalue on the unit circle, and no physical expectation value will ever depend on on the left and right boundary vectors $\bm{v}_\text{L}$ and $\bm{v}_\text{R}$.

Note that the central object in the discussion above was the normalized expectation value of (local) observables. This quantity can be defined in the whole of Hilbert space $\hilbert_{\manifold{L}}$ and is independent of the norm and phase of a given state. Hence, the normalized expectation value of a local operator can unambiguously be defined for the elements $[\ket{\Psi}]\in P(\hilbert_{\manifold{L}})$. In the thermodynamic limit, it is thus natural to interpret the uMPS representation as the principal fiber bundle $\tilde{\Psi}\colon \manifold{A}_{\mathrm{uMPS}}\to \tilde{\varM}_{\mathrm{uMPS}}$ given by
\begin{equation}
\tilde{\Psi}\colon \manifold{A}_{\mathrm{uMPS}}\to \tilde{\varM}_{\mathrm{uMPS}} \subset P(\hilbert_{\manifold{L}})\colon A\mapsto [\ket{\Psi(A)}],
\end{equation}
which is injective up to the action of a structure group or gauge group
\begin{equation}
\tilde{S}_{\mathrm{uMPS}}=\mathsf{GL}(1,\mathbb{C})\times \mathsf{S}_{\mathrm{uMPS}}\cong \mathsf{GL}(1,\mathbb{C})\times \mathsf{PGL}(D,\mathbb{C})
\end{equation}
with right group action
\begin{equation}
\tilde{\Gamma} \colon \manifold{A}_{\mathrm{uMPS}}\times \tilde{S}_{\mathrm{uMPS}}\to \manifold{A}_{\mathrm{uMPS}}\colon \big(A,(\lambda,[G])\big)\mapsto \lambda A^{(G)}.
\end{equation}
It is clear that we can repeat all the proofs of the last subsection, if we also define a normalized distance function $\tilde{D}_{\mathrm{uMPS}}$ that is invariant under the action of $\tilde{\mathsf{S}}_{\mathrm{uMPS}}$. Firstly, we observe that $D_{\mathrm{uMPS}}$ defined in the previous subsection matches with $D_{\mathrm{MPS}}$ defined in Eq.~\eqref{eq:gmps:distance} in the setting of a translation-invariant thermodynamic limit, up to an overall diverging factor $\lim_{N\to \infty} N=\lvert \mathbb{Z}\rvert$. Such a diverging factor will be encountered often in the remainder of this section. To also obtain invariance under norm and phase changes, we modify the distance function $D_{\mathrm{uMPS}}$ to
\begin{equation}
\tilde{D}_{\mathrm{uMPS}}(A_0,A_1)=\varmin_{A(t)} \int_{0}^{1} \frac{\rbraket{l(t)|\mathbb{E}^{\dot{A}(t)}_{\dot{A}(t)}|r(t)}}{z^{(1)}(t)}\, \rmd t
\end{equation}
with $z^{(1)}(t)=\rho(\mathbb{E}^{A(t)}_{A(t)})$. With this definition at hand, it is easy to prove that $\tilde{\Gamma}\colon \manifold{A}_{\mathrm{uMPS}}\times \tilde{S}_{\mathrm{uMPS}}\to \manifold{A}_{\mathrm{uMPS}}$ is free and proper and the principal fiber bundle construction still holds.

We now introduce some notations that are used throughout the remainder of this section. We have already introduced the left and right eigenvectors $l\defis l^{(1)}$ and $r\defis r^{(1)}$ corresponding to the unique eigenvalue $z^{(1)}=\rho(\voperator{E})$, which are normalized such that $\rbraket{l|r}=1$. All other eigenvalues $z^{(k)}$, $k>1$ satisfy the strict inequality $\lvert z^{(k)}\rvert<\rho(\voperator{E})$. We also define $\voperator{S}^{(1)}=\voperator{S}=\rket{r}\rbra{l}$ as a projector onto the eigenspace of eigenvalue $z^{(1)}$, and its complement $\voperator{Q}=\voperator{\one}-\voperator{S}$. Since physical states are now living within the projective space $\tilde{\varM}_{\mathrm{uMPS}}$, we can and often will use a point $A$ on the gauge orbit for which $\rho(\voperator{E})=1$. Put differently, we often `renormalize' $A\leftarrow A/\sqrt{\rho(\voperator{E})}$ such that $\rho(\voperator{E})=1$ for reasons of simplicity, \textit{i.e.}\ this allows one to eliminate the denominator in the normalized expectation value of operators [Eq.~(\ref{eq:umps:normalizedexpectationvalue})]. Given a set of local operators $\operator{O}^\alpha$, we can use these definitions to compute the 2-point connected correlation function as 
\begin{equation}
\begin{split}
\Gamma^{(\alpha,\beta)}(n)&=\rbraket{l|\voperator{E}_{O^{\alpha}}\voperator{E}^{n-1}\voperator{E}_{O^{\beta}}|r}-\rbraket{l|\voperator{E}_{O^{\alpha}}|r}\rbraket{l|\voperator{E}_{O^{\beta}}|r}\\
&=\rbraket{l|\voperator{E}_{O^{\alpha}}\voperator{Q}\big(\voperator{Q}\voperator{E}\voperator{Q}\big)^{n-1}\voperator{Q}\voperator{E}_{O^{\beta}}|r}.
\end{split}
\end{equation}
where we have used
\begin{displaymath}
\voperator{E}^{n}-\voperator{S}=\voperator{Q}\voperator{E}^n\voperator{Q}=\voperator{Q}\big(\voperator{Q}\voperator{E}\voperator{Q}\big)^n\voperator{Q}.
\end{displaymath}
The correlation length $\xi$ is then determined by the largest eigenvalue of $\voperator{Q}\voperator{E}\voperator{Q}$ as
\begin{equation}
\xi=-\frac{1}{\log \left[\rho(\voperator{Q}\voperator{E}\voperator{Q})\right]}.
\end{equation}
Under the given assumption, $\rho(\voperator{Q}\voperator{E}\voperator{Q})<1$ and the correlation length $\xi$ is finite. Hence, all pure uMPS are exponentially clustering. The correlation length is determined by $\rho(\voperator{Q}\voperator{E}\voperator{Q})$, which is equal to the eigenvalue of the transfer matrix $\voperator{E}$ that is second largest in absolute value.

Finally, we compute the normalized overlap between two uMPS as
\begin{equation}
F(\overline{A},A^{\prime})=\frac{\lvert\braket{\Psi(\overline{A})|\Psi(A^{\prime})}\rvert}{\braket{\Psi(\overline{A})|\Psi(A)}^{1/2}\braket{\Psi(\overline{A}^{\prime})|\Psi(A^{\prime})}^{1/2}}\sim \lim_{N\to\infty} \left[\frac{\rho(\voperator{E}^{A^{\prime}}_{A})}{\rho(\voperator{E}^A_A)^{1/2} \rho(\voperator{E}^{A^{\prime}}_{A^{\prime}})^{1/2}}\right]^{2N+1},
\end{equation}
We can thus define $d(\overline{A},A^{\prime})=\rho(\voperator{E}^{A^{\prime}}_{A})\rho(\voperator{E}^A_A)^{-1/2} \rho(\voperator{E}^{A^{\prime}}_{A^{\prime}})^{-1/2}$ as the fidelity per site between the two uMPS $\ket{\Psi(A)}$ and $\ket{\Psi(A^{\prime})}$. It can easily be shown that this definition is compatible with $d(\overline{A},A^{\prime})\leq 1$. In addition, under the given conditions for injective uMPS $A,A^{\prime}\in\manifold{A}_{\mathrm{uMPS}}$, it was proven in Ref.~\onlinecite{2008PhRvL.100p7202P} that  $d(A,A^{\prime})=1$ implies that the two states are equivalent, such that there exists $(\lambda,[G])\in\tilde{\mathsf{S}}_{\mathrm{uMPS}}$ for which $A^{\prime}=\Gamma(A,(\lambda,[G]))=\lambda A^{(G)}$. Alternatively, $d(\overline{A},A^{\prime})<1$ corresponds to gauge-inequivalent states and implies $F(A,A^{\prime})=0$ due to the orthogonality catastrophe\cite{Anderson:1967aa}: any two inequivalent injective uMPS are automatically orthogonal in the thermodynamic limit. In conclusion, the fidelity (per site) is a useful tool to check whether to uMPS $A,A^{\prime}\in\manifold{A}_{\mathrm{uMPS}}$ belong to the same gauge orbit.

\subsection{Tangent bundles and the principal Ehresmann connection}
\label{ss:umps:tangent}
As in the previous section, we can now define a bundle map $\rmd\tilde{\Psi}$ between the holomorphic tangent bundles $T\manifold{A}_{\mathrm{uMPS}}$ and $T\tilde{\varM}_{\mathrm{uMPS}}$. At any point in $A\in\manifold{A}_{\mathrm{uMPS}}$ we have $T_A\manifold{A}_{\mathrm{uMPS}}\cong \mathbb{A}_{\mathrm{uMPS}}\cong \mathbb{C}^{D\times d \times D}$. While the projective Hilbert space $P(\hilbert_{\manifold{L}})$ is the most natural choice in the thermodynamic limit, it is conceptually simpler to work with the affine Hilbert space $\hilbert_{\manifold{L}}$. We return to the projective setting at the end of this subsection. For now, we also define the bundle map $\rmd\Psi:T\manifold{A}_{\mathrm{uMPS}}\to T\varM_{\mathrm{uMPS}}$. As in the previous section, we denote $T_{\ket{\Psi(A)}}\varM_{\mathrm{uMPS}}=\mathbb{T}^{(A)}_{\mathrm{uMPS}}$ and we introduce a map $\Phi\colon T\manifold{A}_{\mathrm{uMPS}}\to \mathbb{T}^{(A)}_{\mathrm{uMPS}}\colon (B,A)\mapsto \ket{\Phi(B;A)}=\ket{\Phi^{(A)}(B)}$ using the prescription
\begin{equation}
\begin{split}
\ket{\Phi(B;A)}=\ket{\Phi^{(A)}(B)}&=B^{i}\frac{\partial\ }{\partial A^{i}} \ket{\Psi(A)}\\
&=\sum_{n\in\mathbb{Z}}\sum_{\{s_{n}\}=1}^{q} \bm{v}_{\mathrm{L}}^{\dagger}\left[\left(\prod_{m<n} A^{s_{m}}\right) B^{s_{n}} \left(\prod_{m'>n} A^{s_{m'}}\right)\right]\bm{v}_{\mathrm{R}} \ket{\{s_{n}\}}.
\end{split}
\label{eq:umps:defumpstangent}
\end{equation}

Note that the tangent space $\mathbb{T}^{(A)}_{\mathrm{uMPS}}$ contains only translation-invariant states. However, we can interpret a uMPS $A\in\manifold{A}_{\mathrm{uMPS}}$ as a special point within the class of generic MPS $\mathbb{A}_{\mathrm{MPS}\{D_n=D\}}$ by identifying $\{A(n)=A\}_{n\in\mathbb{Z}}\in\mathbb{A}_{\mathrm{MPS}(D)}\defis \mathbb{A}_{\mathrm{MPS}\{D_n=D\}}=\prod_{n\in\mathbb{Z}}\mathbb{C}^{D\times q\times D}$, such that $\ket{\Psi[\{A(n)=A\}_{n\in\mathbb{Z}}]}=\ket{\Psi(A)}$, where we heavily overload the notation $\Psi$. Clearly, we have $\varM_{\mathrm{uMPS}}\subset \varM_{\mathrm{MPS}(D)}\defis \varM_{\mathrm{MPS}\{D_n=D\}}$. Henceforth, we discard the notation of $(D)$ or $\{D_n=D\}$ in the definition of the MPS spaces. In principle, we should now take proper care to define the subset of injective MPS $\manifold{A}_{\mathrm{MPS}}$ in the thermodynamic limit, since we have only done this for a finite lattice with open boundary conditions in the previous section and for uniform MPS in the current section. However, we can expect that $\manifold{A}_{\mathrm{MPS}}$ is an open set which includes the points $\{A(n)=A\}_{n\in\mathbb{Z}}$ for any $A\in\manifold{A}_{\mathrm{uMPS}}$. Using the identification $A\mapsto\{A(n)=A\}_{n\in\mathbb{Z}}$, we can embed $\manifold{A}_{\mathrm{uMPS}}$ as a subset in $\manifold{A}_{\mathrm{MPS}}$. In addition, with $\manifold{A}_{\mathrm{MPS}}$ being an open set, it contains a neighborhood around every point $\{A(n)=A\}_{n\in\mathbb{Z}}$ and we expect $T_{\{A(n)=A\}_{n\in\mathbb{Z}}}\manifold{A}_{\mathrm{MPS}}\cong \mathbb{A}_{\mathrm{MPS}}$. We denote $T_{\ket{\Psi[\{A(n)=A\}_{n\in\mathbb{Z}}]}}\varM_{\mathrm{MPS}}$ as $\mathbb{T}_{\mathrm{MPS}}^{(A)}$. For any $A\in\manifold{A}_{\mathrm{uMPS}}$, we can then define a map
\begin{equation}
\Phi^{(A)}\colon T_{\{A(n)=A\}_{n\in\mathbb{Z}}}\manifold{A}_{\mathrm{MPS}}\cong \mathbb{A}_{\mathrm{MPS}}\mapsto \mathbb{T}_{\mathrm{MPS}}^{(A)}\colon (A,B=\{B(n)\}_{n\in\mathbb{Z}})\mapsto \ket{\Phi^{(A)}[B]}
\end{equation}
with the prescription
\begin{equation}
\ket{\Phi^{(A)}[B]}=\sum_{n\in\mathbb{Z}}\sum_{\{s_{n}\}=1}^{q} \bm{v}_{\mathrm{L}}^{\dagger}\left[\left(\prod_{m<n} A^{s_{m}}\right) B^{s_{n}}(n) \left(\prod_{m'>n} A^{s_{m'}}\right)\right]\bm{v}_{\mathrm{R}} \ket{\{s_{n}\}}.
\end{equation}
As before, the notation of $\Phi$ is heavily overloaded and the difference is indicated by the fact whether the argument is contained in round or square brackets. Evidently, we also have $\mathbb{T}^{(A)}_{\mathrm{uMPS}}\subset \mathbb{T}^{(A)}_{\mathrm{MPS}}$.

As a final observation, we recall that in many problems with translation invariance, Hilbert space can be decomposed into different momentum sectors. Here too, we can introduce a new set of definitions, by writing
\begin{equation}
T_{\{A(n)=A\}_{n\in\mathbb{Z}}}\manifold{A}_{\mathrm{MPS}}\cong \mathbb{A}_{\mathrm{MPS}}\cong \bigoplus_{n\in\mathbb{Z}} \mathbb{C}^{D\times q\times D}=\int_{p\in[-\pi,\pi)}^{\oplus} \mathbb{A}_p
\end{equation}
where $\mathbb{A}_{p}\cong\mathbb{A}_{\mathrm{uMPS}}\cong\mathbb{C}^{D\times q\times D}$ and we identify $B\in\mathbb{A}_{p}$ with $\{B(n)=B\rme^{\rmi p n}\}_{n\in\mathbb{Z}}\in\mathbb{A}_{\mathrm{MPS}}$. Analogously, we also make a momentum space decomposition of the tangent space as
\begin{equation}
\mathbb{T}^{(A)}_{\mathrm{MPS}}=\int_{p\in[-\pi,\pi)}^{\oplus}\mathbb{T}^{(A)}_p.
\end{equation}
For every $A\in\manifold{A}_{\mathrm{uMPS}}$, we then define a final map $\Phi^{(A)}_p:\mathbb{A}_p\to \mathbb{T}^{(A)}_{p}:B\mapsto \ket{\Phi^{(A)}_p(B)}$ using the prescription
\begin{equation}
\ket{\Phi^{(A)}_{p}(B)}=\sum_{n\in\mathbb{Z}}\rme^{\rmi p n}\sum_{\{s_{n}\}=1}^{q} \bm{v}_{\mathrm{L}}^{\dagger}\left[\left(\prod_{m<n} A^{s_{m}}\right) B^{s_{n}} \left(\prod_{m'>n} A^{s_{m'}}\right)\right]\bm{v}_{\mathrm{R}} \ket{\{s_{n}\}}.\label{eq:umps:deftangentk}
\end{equation}
Note that $\ket{\Phi_{0}^{(A)}(B)}=\ket{\Phi^{(A)}(B)}$ and thus $\mathbb{T}^{(A)}_{0}=\mathbb{T}^{(A)}_{\mathrm{uMPS}}$, the set of translation-invariant tangent vectors. These notations are used interchangeably.

The different representations $\Phi$ of tangent vectors also have a large representation redundancy. For the principal fiber bundle $\Psi:\manifold{A}_{\mathrm{MPS}}\to\varM_{\mathrm{MPS}}$, the structure group is given by $\group{S}_{\mathrm{MPS}}=\group{G}_{\mathrm{MPS}}/\mathsf{GL}(1,\mathbb{C})$ with $\group{G}_{\mathrm{MPS}}\cong\prod_{n\in\mathbb{Z}}\mathsf{GL}(D,\mathbb{C})$ and the normal subgroup $\mathsf{GL}(1,\mathbb{C})\subset\group{G}_{\mathrm{MPS}}$ given by $\mathsf{GL}(1,\mathbb{C})=\{ \{G(n)=c \one_D\}_{n\in\mathbb{Z}}|c\in\mathbb{C}_{0}\}$. However, since we restrict to uniform elements $A\in\manifold{A}_{\mathrm{uMPS}}\subset \manifold{A}_{\mathrm{MPS}}$ using the embedding discussed above, we should only consider the invariance of $\ket{\Phi^{(A)}[B]}$ under the translation-invariant action of the group $\mathsf{S}_{\mathrm{uMPS}}\cong \mathsf{PGL}(D,\mathbb{C})\cong \mathsf{GL}(D,\mathbb{C})/\mathsf{GL}(1,\mathbb{C})$. We can also embed $\mathsf{S}_{\mathrm{uMPS}}$ as a subgroup of $\mathsf{S}_{\mathrm{MPS}}$ by identifying $[G]\in\mathsf{S}_{\mathrm{uMPS}}$ with $[\{G(n)=G\}_{n\in\mathbb{Z}}]\in\group{S}_{\mathrm{MPS}}$, which is independent of the chosen element from the coset $[G]$. We then obtain, for any $[G]\in\mathsf{S}_{\mathrm{uMPS}}$, that $\ket{\Phi^{(A^{(G)})}[B^{(G)}]}=\ket{\Phi^{(A)}[B]}$ where the group action $B^{(G)}=\{B^{(G)}(n)\}_{n\in\mathbb{Z}}$ is defined by
\begin{equation}
\left(B^{(G)}\right)^s(n)=G^{-1} B^s(n) G
\end{equation}
and is independent of the chosen element from the coset $[G]$. At a fixed point $\{A(n)=A\}_{n\in\mathbb{Z}}$, the linear homomorphism $\Phi^{(A)}:\mathbb{A}_{\mathrm{MPS}}\to\mathbb{T}^{(A)}_{\mathrm{MPS}}$ has a null space $\mathbb{N}^{(A)}$ that is isomorphic to the full (translation non-invariant) group algebra $\mathfrak{s}_{\mathrm{MPS}}$, given by
\begin{equation}
\mathfrak{s}_{\mathrm{MPS}}=\left\{x=\{x(n)\}_{n\in\mathbb{Z}}\in \mathfrak{g}_{\mathrm{MPS}}\cong \bigoplus_{n\in\mathbb{Z}} \mathfrak{gl}(D,\mathbb{C})\cong \bigoplus_{n\in\mathbb{Z}} \mathbb{C}^{D\times D}\,\right\vert\ \left.\,\sum_{n\in\mathbb{Z}} \tr\left[x(n)\right] = 0\right\}
\end{equation}
It is now easier to decompose the group algebra into the different momentum sectors $\mathfrak{s}_{\mathrm{MPS}}=\int_{p\in[-\pi,\pi)}^{\oplus} \mathfrak{s}_p$ with
\begin{equation}
\mathfrak{s}_p=\begin{cases} \mathfrak{gl}(D,\mathbb{C})=\mathbb{C}^{D\times D},&p\neq 0,\\
\mathfrak{pgl}(D,\mathbb{C})\cong \mathfrak{sl}(D,\mathbb{C})=\{x\in\mathbb{C}^{D\times D}| \tr[x]=0\},&p=0.\end{cases}
\end{equation}
An element $x\in\mathfrak{s}_p$ is identified with $\{x(n)=x\rme^{\rmi p n}\}_{n\in\mathbb{Z}}\in\mathfrak{s}_{\mathrm{MPS}}$. Every $\mathfrak{s}_p$ is isomorphic to the null space $\mathbb{N}_p^{(A)}\subset \mathbb{A}_p$ of the map $\Phi^{(A)}_p$ using the isomorphism
\begin{equation}
\mathscr{N}^{(A)}_p:\mathfrak{s}_p\to \mathbb{N}^{(A)}_p\subset \mathbb{A}_p: x \mapsto \mathscr{N}^{(A)}_p(x)
\end{equation}
with $(\mathscr{N}^{(A)}_p)^s(x)=  A^s x - \rme^{-\rmi p} x A^s$. Hence, in momentum space, the different spaces completely decouple and we can simply work with finite-dimensional vector spaces. For the given prescription of $\mathscr{N}^{(A)}_p(x)$, the embedding of $\mathfrak{s}_p$ in $\mathfrak{s}_{\mathrm{MPS}}$ is compatible with the embedding of $\mathbb{A}_p$ in $\mathbb{A}_{\mathrm{MPS}}$. Note also that $\mathscr{N}^{(A)}_{p=0}(\one_D)=0$, which is in accordance with the restriction of $\mathfrak{s}_{p=0}$ to the set of traceless matrices.

Having defined the vertical subspaces $\mathbb{N}^{(A)}_p$ of $\mathbb{A}_p$, we should now introduce a principal connection $\omega^{(A)}_p:\mathbb{A}_{p}\to \mathfrak{s}_p$ that defines a complementary horizontal subspace $\mathbb{B}^{(A)}_p=\ker \omega^{(A)}$ such that $\mathbb{A}_{p}=\mathbb{B}^{(A)}_p\oplus\mathbb{N}^{(A)}_p$. The principal connection should act as $\omega^{(A)}_p\left(\mathscr{N}^{(A)}_p(x)\right)=x$ for any $x\in\mathfrak{s}_p$ and transform equivariantly as $\omega^{(A^{(G)})}_p\left(B^{(G)}\right)=\mathrm{Ad}_{G^{-1}}\left(\omega^{(A)}_p(B)\right)$ for any $B\in\mathbb{A}_p$ and any $[G]\in\mathsf{S}_{\mathrm{uMPS}}$. We first consider the case $p\neq 0$. The virtual operators $z^{(1)}\one-\rme^{\pm\rmi p}\voperator{E}$ have no zero eigenvalues and can thus be inverted, since $\voperator{E}$ has a unique eigenvalue of magnitude $z^{(1)}$, namely $z^{(1)}$ itself. Let $\mathscr{F}_p$ and $\tilde{\mathscr{F}}_p$ denote the inverse of the corresponding maps $z^{(1)} \mathrm{Id}-\rme^{+\rmi p} \mathscr{E}$ and $z^{(1)} \mathrm{Id}-\rme^{-\rmi p} \tilde{\mathscr{E}}$. Two valid choices for a principal connection are given by
\begin{align}
\omega^{(A,\mathrm{L})}_p(B)&=l^{-1} \tilde{\mathscr{F}}_p\left(\sum_{s=1}^q (A^s)^\dagger l B^s\right),&\omega^{(A,\mathrm{R})}_p(B)&=-\rme^{+\rmi p}\mathscr{F}_p\left(\sum_{s=1}^q B^s r (A^s)^\dagger \right) r^{-1}.\label{eq:umps:connection}
\end{align}
Equivarience can be checked by noting that $l^{(G)}=G^\dagger l G$ and $r^{(G)}=G^{-1} r (G^{-1})^\dagger$. Note that, while the transformation of $l$ and $r$ depends on the chosen element $G$ from the coset $[G]\in\mathsf{S}_{\mathrm{uMPS}}$, the principal connections $\omega^{(A,\mathrm{L})}$ and $\omega^{(A,\mathrm{R})}$ do not, nor does the normalization condition $\tr[l r]=\rbraket{l|r}=1$. The $p=0$ case requires a special treatment because the maps $z^{(1)}\mathrm{Id}-\mathscr{E}$ and $z^{(1)}\mathrm{Id}-\tilde{\mathscr{E}}$ corresponding to the right and left action of the operator $z^{(1)}\one-\voperator{E}$ are not invertible. There is a unique eigenvalue zero with corresponding projector $\voperator{S}=\rket{r}\rbra{l}$. However, for $B=\mathscr{N}^{(A)}_{p=0}(x)$, it can easily be checked that $\rbraket{l|\voperator{E}^{B}_{A}|r}=0$. Hence, we can define a pseudo-inverse $(z^{(1)}\one-\voperator{E})^{(-1)}$ with the property that
\begin{equation}
(z^{(1)}\one-\voperator{E})^{(-1)}(z^{(1)}\one-\voperator{E})=(z^{(1)}\one-\voperator{E})(z^{(1)}\one-\voperator{E})^{(-1)}=\one-\voperator{S}=\voperator{Q}.\label{eq:umps:pseudop0}
\end{equation}
The maps associated to the right and left action of $(z^{(1)}\one-\voperator{E})^{(-1)}$ are denoted as $\mathscr{F}_{p=0}$ and $\tilde{\mathscr{F}}_{p=0}$. With these definitions, the principal connections defined in Eq.~(\ref{eq:umps:connection}) are also valid for $p=0$. Vectors $B\in\mathbb{B}^{(A)}_p$ defined by $\mathbb{B}^{(A)}_p=\ker \omega_p^{(A,\mathrm{L})}$ satisfy
\begin{equation}
\begin{cases}
\rbra{l}\voperator{E}^B_A=0\Leftrightarrow\sum_{s=1}^q (A^s)^\dagger l B^s=0 ,&p\neq 0,\\
\rbra{l}\voperator{E}^B_A\voperator{Q}=0 \Leftrightarrow\sum_{s=1}^q (A^s)^\dagger l B^s=  l \tr\left[\sum_{s=1}^q (A^s)^\dagger l B^s r\right],&p=0,
\end{cases}
\end{equation}
which can be called the \emph{left gauge-fixing conditions}. Vectors $B\in\mathbb{B}^{(A)}_p$ satisfy these conditions and can therefore be said to be in the left-canonical form. Similarly, vectors $B\in\mathbb{B}^{(A)}_p$ defined by $\mathbb{B}^{(A)}_p=\ker \omega_p^{(A,\mathrm{R})}$ are in the right canonical form and satisfy the \emph{right gauge-fixing conditions}
\begin{equation}
\begin{cases}
\voperator{E}^B_A\rket{r}=0\Leftrightarrow \sum_{s=1}^q B^s r (A^s)^\dagger=0,&p\neq 0,\\
\voperator{Q}\voperator{E}^B_A\rket{r}=0 \Leftrightarrow \sum_{s=1}^q B^s r (A^s)^\dagger=r \tr\left[l \sum_{s=1}^q B^s r (A^s)^\dagger\right] ,&p=0.
\end{cases}
\end{equation}

Finally, we can return to the projective case. We can run through the same steps in order to define an enlarged tangent space $T_{[\ket{\Psi(A)}}\tilde{\varM}_{\mathrm{MPS}}$ with the uMPS $[\ket{\Psi(A)}]\in\tilde{\varM}_{\mathrm{uMPS}}$ at its base. We can write $\tilde{\mathbb{T}}_{\mathrm{MPS}}^{(A)}\defis T_{[\ket{\Psi(A)}}\tilde{\varM}_{\mathrm{MPS}}\cong T_{\ket{\Psi(A)}}\varM_{\mathrm{MPS}}/\sim$, where two tangent vectors $\ket{\Phi^{(A)}[B_1]}$ and $\ket{\Phi^{(A)}[B_2]}$ are equivalent if there exists some $\alpha\in\mathbb{C}$ such that $\ket{\Phi^{(A)}[B_1]}-\ket{\Phi^{(A)}[B_2]}=\alpha \ket{\Psi(A)}$. If we now make a momentum decomposition $\tilde{\mathbb{T}}_{\mathrm{MPS}}^{(A)}=\int_{p\in[-\pi,\pi)}^{\oplus} \tilde{\mathbb{T}}_p^{(A)}$, it is easily obtained that $\tilde{\mathbb{T}}_p^{(A)}\cong \mathbb{T}^{(A)}_p$ for any $p\neq 0$, whereas $\tilde{\mathbb{T}}_0^{(A)}=\tilde{\mathbb{T}}^{(A)}_{\mathrm{uMPS}}=T_{[\ket{\Psi(A)}]}\tilde{\varM}_{\mathrm{uMPS}}\cong \mathbb{T}^{(A)}_0/\sim\cong \mathbb{T}^{(A)\perp}_0$, where we have chosen to represent the different inequivalent vectors living in the quotient space by the unique representative that is orthogonal to $\ket{\Psi(A)}$. In accordance, we can now define a symmetry group $\tilde{\mathsf{S}}_{\mathrm{MPS}}\defis \mathsf{GL}(1,\mathbb{C})\times \mathsf{S}_{\mathrm{MPS}}$ and corresponding group algebra $\tilde{\mathfrak{s}}_{\mathrm{MPS}}=\int_{p\in[-\pi,\pi)}^{\oplus} \tilde{\mathfrak{s}}_p$. Since the additional $\mathsf{GL}(1,\mathbb{C})$ symmetry operations of norm and phase changes are translation-invariant operations, we obtain $\tilde{\mathfrak{s}}_p\cong \mathfrak{s}_p$ for any $p\neq 0$, whereas $\tilde{\mathfrak{s}}_0\cong\mathfrak{gl}(1,\mathbb{C})\oplus\mathfrak{s}_0 \cong\mathbb{C}\oplus\mathfrak{s}_0$. We can define a vertical subspace $\tilde{\mathbb{N}}^{(A)}_p\subset \mathbb{A}_{p}\cong \mathbb{A}_{\mathrm{uMPS}}$ that is isomorphic to $\tilde{\mathfrak{s}}_p$ using the map
\begin{equation}
\tilde{\mathscr{N}}_p^{(A)}\colon \tilde{\mathfrak{s}}_p \to \tilde{\mathbb{N}}^{(A)}_p\colon \begin{cases} x \mapsto \tilde{\mathscr{N}}_p^{(A)}(x)=\mathscr{N}_p^{(A)}(x),&p\neq 0,\\
(\alpha,x)\mapsto \tilde{\mathscr{N}}_0^{(A)}(\alpha;x)= \mathscr{N}_0^{(A)}(x)+\alpha A,&p=0.\end{cases}
\end{equation}
A complementary horizontal subspace $\tilde{\mathbb{B}}_p^{(A)}$ is defined as $\tilde{\mathbb{B}}_p^{(A)}=\ker \tilde{\omega}^{(A)}_p$, where the principal connection $\tilde{\omega}^{(A)}_p\colon\mathbb{A}_p\to \tilde{\mathfrak{s}}_p$ can be defined as
\begin{equation}
\tilde{\omega}^{(A)}_p\colon\mathbb{A}_p\to \tilde{\mathfrak{s}}_p\colon\begin{cases}
B\mapsto \tilde{\omega}^{(A)}_p(B)=\omega^{(A)}_p(B),&p\neq 0,\\
B\mapsto \tilde{\omega}^{(A)}_0(B)=\left(\frac{\rbraket{l|\voperator{E}^B_A|r}}{z^{(1)}},\omega^{(A)}_0(B)\right),&p=0,
\end{cases}
\end{equation}
where $\omega_p^{(A)}$ is a valid principal connection in the affine case, such as $\omega_p^{(A,\mathrm{L})}$ or $\omega_p^{(A,\mathrm{R})}$ defined in Eq.~(\eqref{eq:umps:connection}). The corresponding connections in the projective case are denoted as $\tilde{\omega}_p^{(A,\mathrm{L})}$ and $\tilde{\omega}_p^{(A,\mathrm{R})}$ respectively. Vectors $B\in\tilde{\mathbb{B}}^{(A)}_p=\ker \omega_p^{(A,\mathrm{L})}$ now satisfy the left gauge-fixing conditions
\begin{equation}
\rbra{l}\voperator{E}^{B}_A=0\label{eq:umps:leftgaugeuB}
\end{equation}
for any choice of $p$, both $p\neq 0$ and $p=0$. Analogously, for $B\in\tilde{\mathbb{B}}^{(A)}_p=\ker \omega_p^{(A,\mathrm{R})}$ we obtain the right gauge-fixing conditions
\begin{equation}
\voperator{E}^{B}_A\rket{r}=0\label{eq:umps:rightgaugeuB}
\end{equation}
for any $p$ including $p=0$. In conclusion, we now compute the overlap between any tangent vector $\ket{\Phi^{(A)}_p(B)}$ and the original uMPS $\ket{\Psi(A)}$. Henceforth, we always work in the projective setting but with representatives $\ket{\Psi(A)}\in\varM_{\mathrm{MPS}}$ instead of the equivalence classes $[\ket{\Psi(A)}]\in\tilde{\varM}_{\mathrm{MPS}}$. However, we choose these representatives such that $\braket{\Psi(A)|\Psi(A)}=1$ whenever we are computing with $\ket{\Psi(A)}$ or $\ket{\Psi^{(B)}}$. Put differently, we always act with an element of the $\mathsf{GL}(1,\mathbb{C})$ subgroup of $\tilde{\mathsf{S}}_{\mathrm{uMPS}}$ so as to have a representation $A\in\manifold{A}_{\mathrm{uMPS}}$ in the fiber corresponding to $[\ket{\Psi(A)}]\in\tilde{\varM}_{\mathrm{uMPS}}$ for which $z^{(1)}=\rho(\voperator{E})=1$. Correspondingly, we also obtain $\voperator{Q}\voperator{E}\voperator{Q}=\voperator{E}-\voperator{S}$. Based on the definition of the subset $\manifold{A}_{\mathrm{uMPS}}$ of injective MPS, we then have $\rho(\voperator{Q}\voperator{E}\voperator{Q})<1$ for the chosen representations $A$. Under this condition, the overlap between any uMPS $\ket{\Psi(A)}$ and $\ket{\Phi^{(A)}_p(B)}$ is given by
\begin{equation}
\braket{\Psi(\overline{A})|\Phi_{p}^{(A)}(B)}=\sum_{n\in\mathbb{Z}} \rme^{\rmi p n}\rbraket{l|\voperator{E}^{B}_{A}|r}= 2\pi \delta(p) \rbraket{l|\voperator{E}^{B}_{A}|r}\label{eq:umps:tangentoverlapwithumps}
\end{equation}
so that all states $\ket{\Phi_{p}(B)}$ with $p\neq 0$ are automatically orthogonal to $\ket{\Psi(A)}$ due to the orthogonality of the different momentum sectors. At momentum $p=0$, we observe that for $B\in\tilde{\mathbb{B}}^{(A)}_0$ defined by either the left or right principal connection $\tilde{\omega}^{(A)}_{0}$, we also obtain $\braket{\Psi(\overline{A})|\Phi_{p}^{(A)}(B)}=0$, in correspondence with our expectation $\tilde{\mathbb{T}}^{(A)}_0\cong \mathbb{T}^{(A)\perp}_0$. For the choice $B=A$, which is in $\mathbb{B}^{(A)}_0$ but not in $\tilde{B}^{(A)}_0$, we obtain $\braket{\Psi(\overline{A})|\Phi_{p}^{(A)}(B)}=2\pi\delta(0)=\lvert\mathbb{Z}\rvert$, where the cardinality $\lvert\mathbb{Z}\rvert$ represents the diverging number of lattice sites ($\mathcal{L}=\mathbb{Z}$). This explains why we work within the projective setting throughout the remainder of this section.

\subsection{Pullback metric and efficient parametrization}
\label{ss:umps:metric}
Once again, we have come to the point where we can induce the Fubini-Study metric from $P(\hilbert)$ onto $\tilde{\varM}_{\mathrm{uMPS}}$ in order to transform it into a K\"{a}hler manifold, and then define a pullback metric $\tilde{g}$ on $\manifold{A}_{\mathrm{uMPS}}$. As just mentioned, we now exclusively treat the projective setting. Nevertheless, we do also compute the pullback $g$ of the natural metric of affine Hilbert space, since it features in the computation of $\tilde{g}$ and it provides further justification for the restriction to the projective setting. 

The pullback metric $g(p,p';\overline{A},A)$ is implicitly defined by $\braket{\Phi_{p}^{(A)}(\overline{B})|\Phi^{(A)}_{p'}(B')}$. Henceforth, we discard again the explicit notation of the base point $(A)$ at which we are working, since this is fixed throughout the remainder of this subsection. We have to be very careful with the infinite sums over the positions $n\in \mathbb{Z}$ and $n'\in\mathbb{Z}$ of $B$ and $B'$. When a diverging result is obtained, it is easily possible to make errors by miscounting. Only when the result is guaranteed to be finite can we freely use index substitutions. We therefore replace every occurrence of $\voperator{E}^{n}$ by a `regularized' operator $\voperator{Q}\voperator{E}^{n}\voperator{Q}=\voperator{E}^{n}\voperator{Q}=\voperator{Q}\voperator{E}^{n}=\voperator{E}^{n}-\voperator{S}=\voperator{Q}(\voperator{Q}\voperator{E}\voperator{Q})^{n}\voperator{Q}$ with $\rho(\voperator{Q}\voperator{E}\voperator{Q})<1$ and a `singular' part $\voperator{S}=\rket{r}\rbra{l}$. The reason of this notation becomes clear if we now evaluate $\braket{\Phi_{p}(\overline{B})|\Phi_{p'}(B')}$ as
\begin{equation*}
\begin{split}
\langle\Phi_{p}(\overline{B})&\mid\Phi_{p'}(B')\rangle= \overline{B}^{\overline{\imath}} g_{\overline{\imath},j}(p,p') {B'}^j\\
\qquad=& \sum_{n=-\infty}^{+\infty}\sum_{n'=-\infty}^{+\infty}\rme^{+\rmi p' n' - \rmi p n}\left[\theta(n=n')\rbraket{l|\voperator{E}^{B'}_{B}|r}\right.\\
&\qquad\qquad\qquad\qquad\qquad\left.+ \theta(n'>n) \rbraket{l|\voperator{E}^{A}_{B} (\voperator{E})^{n'-n-1}\voperator{E}^{B'}_{A}|r}+ \theta(n'<n) \rbraket{l|\voperator{E}^{B'}_{A} (\voperator{E})^{n-n'-1}E^{A}_{B}|r}\right]\\
=&\sum_{n_{0}=-\infty}^{+\infty}\rme^{\rmi (p'-p)n_{0}}\sum_{\Delta n=-\infty}^{+\infty}\rme^{\rmi p \Delta n}\left[\theta(\Delta n=0)\rbraket{l|\voperator{E}^{B'}_{B}|r}\right.\\
&\qquad\qquad\qquad\qquad\qquad\left.+\theta(\Delta n > 0) \rbraket{l|\voperator{E}^{A}_{B} \voperator{Q}\voperator{E}^{\Delta n-1}\voperator{Q}\voperator{E}^{B'}_{A}|r}+\theta(\Delta n<0) \rbraket{l|\voperator{E}^{B'}_{A} \voperator{Q}\voperator{E}^{-\Delta n-1}\voperator{Q}\voperator{E}^{A}_{B}|r}\right]\\
&+\rbraket{l|\voperator{E}^{A}_{B}|r}\rbraket{l|\voperator{E}^{B'}_{A}|r} \sum_{n=-\infty}^{+\infty}\sum_{n'=-\infty}^{n-1}\rme^{\rmi p' n' - \rmi p n}+\rbraket{l|\voperator{E}^{B'}_{A}|r}\rbraket{l|\voperator{E}^{A}_{B}|r} \sum_{n=-\infty}^{+\infty}\sum_{n'=n+1}^{+\infty}\rme^{\rmi p' n' - \rmi p n}.
\end{split}
\end{equation*}
In this calculation, we have introduced a 'discrete' Heaviside function $\theta$ taking a logical expression as argument and resulting $1$ if the argument is true and zero otherwise. By using the well known result for the geometric series of an operator with spectral radius smaller than one, we obtain
\begin{equation}
\sum_{n=0}^{+\infty}\voperator{Q}\voperator{E}^{n}\voperator{Q}=\sum_{n=0}^{+\infty}\voperator{Q}(\voperator{Q}\voperator{E}\voperator{Q})^{n}\voperator{Q}=\voperator{Q}(\voperator{\one}-\voperator{Q}\voperator{E}\voperator{Q})^{-1}\voperator{Q}
\end{equation}
and thus
\begin{equation}
\begin{split}
\braket{\Phi_{p}(\overline{B})|\Phi_{p'}(B')}=& \overline{B}^{\overline{\imath}} g_{\overline{\imath},j}(p,p') {B'}^j=2\pi \delta(p-p')\overline{B}^{\overline{\imath}} g_{\overline{\imath},j}(p) {B'}^j\\
=&2\pi\delta(p'-p)\left[\rbraket{l|\voperator{E}^{B'}_{B}|r}+\rbraket{l|\voperator{E}^{A}_{B} \voperator{Q} (\voperator{\one}-\rme^{\rmi p}\voperator{Q}\voperator{E}\voperator{Q})^{-1}\voperator{Q}\voperator{E}^{B'}_{A}|r}\right.\\
&\qquad\left.+\rbraket{l|\voperator{E}^{B'}_{A} \voperator{Q}(\voperator{\one}-\rme^{-\rmi p}\voperator{Q}\voperator{E}\voperator{Q})^{-1}\voperator{Q}\voperator{E}^{A}_{B}|r}
-\rbraket{l|\voperator{E}^{B'}_{A}|r}\rbraket{l|\voperator{E}^{A}_{B}|r}\right]\\
&+\left[2\pi \delta(p)\right]^2 \rbraket{l|\voperator{E}^{B'}_{A}|r}\rbraket{l|\voperator{E}^{A}_{B}|r}
\end{split}
\label{eq:umps:phipoverlap}
\end{equation}
As expected, momentum eigenstates cannot be normalized to unity in an infinitely large system, but rather satisfy a $\delta$ normalization. However, for momentum $p=0$, we have an additional diverging contribution which is much stronger. By using Eq.~(\ref{eq:umps:tangentoverlapwithumps}), it can be traced back to the diverging overlap of $\ket{\Phi_p(B)}$ with $\ket{\Psi(A)}$. 
Let us now analyze the origin of the different terms in the expression above. The regular part $\voperator{Q}\voperator{E}\voperator{Q}$ produces a finite contribution inside the square brackets where $B$ and $B'$ cannot be separated into different factors. We therefore also refer to these terms as the \emph{connected contribution}. For $p=0$, the product $\voperator{Q}(\voperator{\one}-\rme^{\pm \rmi p}\voperator{Q}\voperator{E}\voperator{Q})^{-1}\voperator{Q}$ can be interpreted as the pseudo-inverse $(\voperator{\one}-\voperator{E})^{(-1)}$ of the singular superoperator $\voperator{\one}-\voperator{E}$, which we have already encountered in the previous subsection and was defined in Eq.~(\ref{eq:umps:pseudop0}). We now extend this definition and henceforth define $(\voperator{\one}-\rme^{\pm \rmi p}\voperator{E})^{(-1)}\defis\voperator{Q}(\voperator{\one}-\rme^{\pm \rmi p}\voperator{Q}\voperator{E}\voperator{Q})^{-1}\voperator{Q}$, so that $(\voperator{\one}-\voperator{E})^{(-1)} (\voperator{\one}-\voperator{E})=(\voperator{\one}-\rme^{\pm \rmi p}\voperator{E})(\voperator{\one}-\rme^{\pm \rmi p}\voperator{E})^{(-1)}=\voperator{Q}=\voperator{\one}-\rket{r}\rbra{l}$. Only for momentum $p=0$ does $(\voperator{\one}-\rme^{\pm \rmi p}\voperator{E})^{(-1)}$ denote a true pseudo-inverse. The singular part $\voperator{S}$ produces a finite contribution in the square brackets for any momentum, and the doubly diverging contribution at momentum $p=0$. In these terms, $B$ and $B'$ appear in two separate factors, and they are henceforth referred to as the \emph{disconnected contribution}. Since the doubly diverging term results from the non-zero overlap with the original uMPS, it disappears for tangent vectors in $\Tplane_0^{(A)\perp}$.

Clearly, this hints that we should work in the projective setting. Since the pullback of the Fubini-Study metric is implicitly defined by
\begin{equation}
\overline{B}^{\overline{\imath}} \tilde{g}_{\overline{\imath},j}(p,p') {B'}^j=\frac{\braket{\Phi_{p}(\overline{B})|\Phi_{p'}(B')}}{\braket{\Psi(\overline{A})|\Psi(A)}}-\frac{\braket{\Phi_p(\overline{B})|\Psi(A)}\braket{\Psi(\overline{A})|\Phi_{p'}(B')}}{\braket{\Psi(\overline{A})|\Psi(A)}^2},
\end{equation}
where we use the convention to choose $A$ such that $\braket{\Psi(\overline{A})|\Psi(A)}=1$, we obtain
\begin{equation}
\begin{split}
\overline{B}^{\overline{\imath}} \tilde{g}_{\overline{\imath},j}(p,p') {B'}^j=&2\pi \delta(p-p')\overline{B}^{\overline{\imath}} \tilde{g}_{\overline{\imath},j}(p) {B'}^j\\
=&2\pi\delta(p'-p)\left[\rbraket{l|\voperator{E}^{B'}_{B}|r}+\rbraket{l|\voperator{E}^{A}_{B} (\voperator{\one}-\rme^{\rmi p}\voperator{E})^{(-1)}\voperator{E}^{B'}_{A}|r}\right.\\
&\qquad\left.+\rbraket{l|\voperator{E}^{B'}_{A} (\voperator{\one}-\rme^{-\rmi p}\voperator{E})^{(-1)}\voperator{Q}\voperator{E}^{A}_{B}|r}
-\rbraket{l|\voperator{E}^{B'}_{A}|r}\rbraket{l|\voperator{E}^{A}_{B}|r}\right].
\end{split}
\label{eq:umps:metric}
\end{equation}
Up to the unavoidable diverging $\delta$ normalization, we now obtain a strictly finite contribution that is henceforth denoted as $\tilde{g}_{\overline{\imath},j}(p)$, so that $\tilde{g}_{\overline{\imath},j}(p,p')=2\pi\delta(p-p') \tilde{g}_{\overline{\imath},j}(p)$. The doubly diverging contribution has been cancelled automatically. It can easily be checked that the contraction of either index of the metric $\tilde{g}(p)$ with any vector $B\in\tilde{\mathbb{N}}_p$, including the choice $B=A$ for momentum $p=0$, results in zero. We can thus restrict to vectors $B$ in the horizontal subspace $\tilde{\mathbb{B}}_p$ by imposing either the left or right gauge fixing conditions in Eq.~\eqref{eq:umps:leftgaugeuB} or \eqref{eq:umps:rightgaugeuB}. This considerably simplifies the expression for the metric $\tilde{g}(p)$, since the non-local connected terms and the disconnected term cancel, resulting in 
\begin{equation}
\overline{B}^{\overline{\imath}} \tilde{g}_{\overline{\imath},j}(p) B^{\prime j} = \rbraket{l|\voperator{E}^{B'}_{B}|r}\label{eq:umps:metricsimple}.
\end{equation}

As before, we define a pseudo-inverse metric satisfying
\begin{equation}
\tilde{g}^{i,\overline{\jmath}}(p)\tilde{g}_{\overline{\jmath},k}(p)=\left(P_{\tilde{\mathbb{B}}_p}\right)^{i}_{\; k}=\delta^{i}_{k} - \left(P_{\tilde{\mathbb{N}}_p}\right)^{i}_{\; k}
\end{equation}
and the projector $P_{\tilde{\mathbb{N}}_p}\in\End(\mathbb{A}_p)$ onto the vertical subspace $\tilde{\mathbb{N}}_p\subset \mathbb{A}_p$ is defined by
\begin{equation}
\left(P_{\tilde{\mathbb{N}}_p}\right)^{i}_{\; k} B^k = \tilde{\mathscr{N}}_p^{i}(\tilde{\omega}_p(B)).
\end{equation}
The pseudo-inverse $\tilde{g}^{i,\overline{\jmath}}(p)$ is defined within a single momentum sector. We can extend it as $\tilde{g}^{i,\overline{\jmath}}(p,p')=2\pi\delta(p-p') \tilde{g}^{i,\overline{\jmath}}(p)$ in order to obtain
\begin{equation}
\int\frac{\rmd p'}{2\pi} \tilde{g}^{i,\overline{\jmath}}(p,p')\tilde{g}_{\overline{\jmath},k}(p',p'')=2\pi\delta(p-p'') \left(P_{\tilde{\mathbb{B}}_p}\right)^{i}_{\; k}.
\end{equation}

Finally, we need to discuss how to efficiently parameterize the tensors $B$ that satisfy the left or right gauge fixing conditions in Eq.~\eqref{eq:umps:leftgaugeuB} or \eqref{eq:umps:rightgaugeuB}. A linear parameterization $B=\tilde{\mathscr{B}}_{p}(X)$ depending on a $(q-1)D\times D$ matrix $X$ can be constructed, analogously to the construction in the previous section, but now in a translation invariant setting. We first define the $ D\times D q$ matrices $L$ as
\begin{equation}
[L]_{\alpha;(s,\beta)}= [{A^{s}}^{\dagger} l^{1/2}]_{\alpha,\beta}\label{eq:umps:repdefL}
\end{equation}
and then construct a $D\times (q-1)D$ matrix $V_{L}$ that contains an orthonormal basis for the null space of $L$, \textit{i.e.} $L V_{L}=0$ and $V_{L}^\dagger V_{L}=\one_{(q-1)D}$. Setting $[V^{s}_{L}]_{\alpha,\beta}=[V_{L}]_{(s \alpha);\beta)}$, we then define the representation $\tilde{\mathscr{B}}_{p}(X)$ as
\begin{equation}
\tilde{\mathscr{B}}_{p}(X)= l^{-1/2} V_{L}^{s} X  r^{-1/2}\label{eq:umps:defrepleft}
\end{equation}
in order to obtain
\begin{equation}
\braket{\Phi_{p}(\overline{\tilde{\mathscr{B}}}_{p}(\overline{X}))|\Phi_{p'}(\tilde{\mathscr{B}}_{p'}(Y))}=2\pi\delta(p-p')\tr\left[ X^{\dagger} Y\right],
\end{equation}
in combination with the left gauge fixing condition $\sum_{s=1}^{q} {A^{s}}^{\dagger} l\tilde{\mathscr{B}}_{p}^{s}(X)=0$. The representation $\tilde{\mathscr{B}}_{p}'(X')$ mapping the $D\times (q-1)D$ matrix $X'$ to a tensor $B$ satisfying the right gauge fixing conditions follows similarly. 

\subsection{Levi-Civita connection and parallel transport}
\label{ss:umps:paralleltransport}
We have characterized $\varM_{\mathrm{uMPS}}$ as a K\"{a}hler manifold and defined the K\"{a}hler metric $\tilde{g}_{\overline{\imath},j}$. So far, we have been using a description based on the parameterization of $\varM_{\mathrm{uMPS}}$ via tensors $A\in\manifold{A}_{\mathrm{uMPS}}$. This parameterization is overcomplete, and the $A^i$'s cannot be used as a set of coordinates for $\varM_{\mathrm{uMPS}}$. As a consequence, the metric $\tilde{g}_{\overline{\imath},j}$ is not a proper metric, since it is degenerate.

The most rigorous way to proceed is by introducing a coordinate transform $A^i\leftarrow A^i(\vz,\vw)$ for $\manifold{A}_{\mathrm{uMPS}}$, where the new coordinates $z^{j}$ ($j=1,\ldots,(q-1)D^2$) and $w^k$ ($k=1,\ldots,D^2$) are such that $\partial/\partial z^{j} \in \tilde{\mathbb{B}}_{\mathrm{uMPS}}$ and $\partial/\partial w^{k} \in \tilde{\mathbb{N}}_{\mathrm{uMPS}}$. Hence, the coordinates $\vw$ are related to gauge and scale transformations, whereas the coordinates $\vz$ label the different gauge orbits. They arise as the natural coordinates for the quotient manifold $\manifold{A}_{\mathrm{uMPS}}/\mathsf{S}_{\mathrm{uMPS}}$, or thus, for the manifold $\varM_{\mathrm{uMPS}}$. When expressed solely in terms of the coordinates $\vz$, the pullback metric would be strictly positive.  Since we presently restrict to the representation of uniform MPS, the horizontal and vertical subspace correspond to those defined in the previous subsection at momentum zero: $\tilde{\mathbb{B}}_{\mathrm{uMPS}}=\tilde{\mathbb{B}}_{p=0}$ and $\tilde{\mathbb{N}}_{\mathrm{uMPS}}=\tilde{\mathbb{N}}_{p=0}$. Note that these spaces also depend on the current position $A(\vz,\vw)$. The required properties for a principal bundle connection by which these spaces are defined, ensures that such a coordinate transformation exists. While it is quite easy to find an explicit parameterization for the gauge degrees of freedom, it is more difficult to find an explicit coordinization for the gauge orbits. 

Therefore, we continue with the parameterization of $\varM_{\mathrm{uMPS}}$ based on the original tensors $A\in\manifold{A}_{\mathrm{uMPS}}$, and take into account that this set is overcomplete. A first consequence thereof has already been observed in the previous subsection: the pullback metric $\tilde{g}_{\overline{\imath}j}$ is degenerate and we need to take a pseudo-inverse to define the entries $\tilde{g}^{i,\overline{\jmath}}$. We now proceed by constructing the Levi-Civita connection according to Eq.~\eqref{eq:var:levicivita}. Hereto, we introduce the states
\begin{equation}
\begin{split}
&\ket{\Upsilon(B_1,B_2;A)}=\ket{\Upsilon^{(A)}(B_1,B_2)}=B_1^{i}B_2^{j}\frac{\partial^2\ }{\partial A^{i}\partial A^{j}} \ket{\Psi(A)}=B_1^{i}B_2^{j}\ket{\partial_i\partial_j\Psi(A)}\\
&=\sum_{n_1<n_2\in\mathbb{Z}}\sum_{\{s_{n}\}=1}^{q} \bm{v}_{\mathrm{L}}^{\dagger}\left[\left(\prod_{m<n_1} A^{s_{m}}\right) B_1^{s_{n_1}}\left(\prod_{n_1<m<n_2} A^{s_{m}}\right) B_2^{s_{n_2}}\left(\prod_{m'>n} A^{s_{m'}}\right)\right]\bm{v}_{\mathrm{R}} \ket{\{s_{n}\}}\\
&\quad+\sum_{n_2<n_1\in\mathbb{Z}}\sum_{\{s_{n}\}=1}^{q} \bm{v}_{\mathrm{L}}^{\dagger}\left[\left(\prod_{m<n_2} A^{s_{m}}\right) B_2^{s_{n_2}}\left(\prod_{n_2<m<n_1} A^{s_{m}}\right) B_1^{s_{n_1}}\left(\prod_{m'>n} A^{s_{m'}}\right)\right]\bm{v}_{\mathrm{R}} \ket{\{s_{n}\}}.
\end{split}
\label{eq:umps:defumpstangent}
\end{equation}
We can generalize this definition to obtain arbitrary momentum eigenstates $\ket{\Upsilon_{p_1,p_2}(B_1,B_2;A)}$ with momentum $(p_1+p_2)\mod 2\pi$ by adding a factor $\exp(\rmi p_1 n_1+\rmi p_2 n_2)$ to every term in the definition above. While we restrict to $p_1=p_2=0$ throughout the remainder of this section, the inclusion of momentum factors facilitates keeping track of the different terms in the following expressions. We can easily compute the overlap
\begin{equation}
\begin{split}
\braket{\Psi(\overline{A})|\Upsilon_{p_1,p_2}(B_1,B_2;A)}&=2\pi\delta(p_1+p_2)\\
\times \bigg[&\rme^{\rmi p_2}\rbraket{l|\mathbb{E}^{B_1}_A (\one-\rme^{\rmi p_2}\mathbb{E})^{(-1)}\mathbb{E}^{B_2}_A|r}+\rme^{\rmi p_1}\rbraket{l|\mathbb{E}^{B_2}_A (\one-\rme^{\rmi p_1}\mathbb{E})^{(-1)}\mathbb{E}^{B_1}_A|r}\\
&\quad +(2\pi\delta(p_1)-1)\rbraket{l|\mathbb{E}^{B_1}_A|r}\rbraket{l|\mathbb{E}^{B_2}_A|r}\bigg]
\end{split}\label{eq:umps:overlapgammapsi}
\end{equation}
and with a little bit more algebra
\begin{equation}
\begin{split}
\langle\Phi_{p_3}(\overline{B}_3;\overline{A})&\mid\Upsilon_{p_1,p_2}(B_1,B_2;A)\rangle=2\pi\delta(p_1+p_2-p_3) \\
\times\bigg\{&\rme^{+\rmi p_1} \rbraket{l|\mathbb{E}^{B_2}_{B_3}(\one-\rme^{+\rmi p_1}\mathbb{E})^{(-1)}\mathbb{E}^{B_1}_{A}|r}+\rme^{-\rmi p_1} \rbraket{l|\mathbb{E}^{B_1}_{A}(\one-\rme^{-\rmi p_1}\mathbb{E})^{(-1)}\mathbb{E}^{B_2}_{B_3}|r}\\
&+\rme^{+\rmi p_2} \rbraket{l|\mathbb{E}^{B_1}_{B_2}(\one-\rme^{+\rmi p_2}\mathbb{E})^{(-1)}\mathbb{E}^{B_2}_{A}|r}+\rme^{-\rmi p_2} \rbraket{l|\mathbb{E}^{B_2}_{A}(\one-\rme^{-\rmi p_2}\mathbb{E})^{(-1)}\mathbb{E}^{B_1}_{B_3}|r}\\
&+\rme^{+\rmi p_1+2\rmi p_2}\rbraket{l|^A_{B_3}(\one-\rme^{\rmi p_1+\rmi p_2} \mathbb{E})^{-1} \mathbb{E}^{B_1}_A (\one - \rme^{\rmi p_2}\mathbb{E})^{-1} \mathbb{E}^{B_2}_A |r}\\
&+\rme^{-2\rmi p_2-\rmi p_1}\rbraket{l|\mathbb{E}^{B_2}_{A}(\one-\rme^{-\rmi p_2} \mathbb{E})^{-1} \mathbb{E}^{B_1}_{A} (\one - \rme^{-\rmi p_1-\rmi p_2}\mathbb{E})^{-1} \mathbb{E}^{A}_{B_3} |r}\\
&+\rme^{+\rmi p_2+2\rmi p_1}\rbraket{l|\mathbb{E}^A_{B_3}(\one-\rme^{\rmi p_1+\rmi p_2} \mathbb{E})^{-1} \mathbb{E}^{B_2}_A (\one - \rme^{\rmi p_1}\mathbb{E})^{-1} \mathbb{E}^{B_1}_A |r}\\
&+\rme^{-\rmi p_2-2\rmi p_1}\rbraket{l|\mathbb{E}^{B_1}_{A}(\one-\rme^{-\rmi p_1} \mathbb{E})^{-1} \mathbb{E}^{B_2}_A (\one - \rme^{-\rmi p_1-\rmi p_2}\mathbb{E})^{-1} \mathbb{E}^{A}_{B_3} |r}\\
&+\rme^{-\rmi p_1+\rmi p_2}\rbraket{l|\mathbb{E}^{B_1}_{A}(\one-\rme^{-\rmi p_1} \mathbb{E})^{-1} \mathbb{E}^{A}_{B_3} (\one - \rme^{+\rmi p_2}\mathbb{E})^{-1} \mathbb{E}^{B_2}_{A} |r}\\
&+\rme^{-\rmi p_2+\rmi p_1}\rbraket{l|\mathbb{E}^{B_2}_{A}(\one-\rme^{-\rmi p_2} \mathbb{E})^{-1} \mathbb{E}^{A}_{B_3} (\one - \rme^{+\rmi p_1}\mathbb{E})^{-1} \mathbb{E}^{B_1}_{A} |r}\\
&-\rbraket{l|\mathbb{E}^{B_1}_A|r}\big[ \rbraket{l|\mathbb{E}^{B_2}_{B_3}|r} + \rme^{+\rmi p_2}\rbraket{l|\mathbb{E}^{A}_{B_3}(\one-\rme^{+\rmi p_2}\mathbb{E})^{(-1)}\mathbb{E}^{B_2}_{A}|r}+ \rme^{-\rmi p_2} \rbraket{l|\mathbb{E}^{B_2}_{A}(\one-\rme^{-\rmi p_2}\mathbb{E})^{(-1)}\mathbb{E}^{A}_{B_3}|r}\big]\\
&-\rme^{+\rmi p_1+\rmi p_2}\rbraket{l|\mathbb{E}^{B_1}_A|r}\rbraket{l|\mathbb{E}^{A}_{B_3}(\one-\rme^{+\rmi p_2}\mathbb{E})^{(-1)}(\one-\rme^{+\rmi p_1+\rmi p_2}\mathbb{E})^{(-1)}\mathbb{E}^{B_2}_A|r}\\
&-\rme^{-\rmi p_1-\rmi p_2}\rbraket{l|\mathbb{E}^{B_1}_A|r}\rbraket{l|\mathbb{E}^{B_2}_{A}(\one-\rme^{-\rmi p_2}\mathbb{E})^{(-1)}(\one-\rme^{-\rmi p_1-\rmi p_2}\mathbb{E})^{(-1)}\mathbb{E}^{A}_{B_3}|r}\\
&-\rbraket{l|\mathbb{E}^{B_2}_A|r}\big[ \rbraket{l|\mathbb{E}^{B_1}_{B_3}|r} + \rme^{+\rmi p_1}\rbraket{l|\mathbb{E}^{A}_{B_3}(\one-\rme^{+\rmi p_1}\mathbb{E})^{(-1)}\mathbb{E}^{B_1}_{A}|r}+ \rme^{-\rmi p_2} \rbraket{l|\mathbb{E}^{B_1}_{A}(\one-\rme^{-\rmi p_1}\mathbb{E})^{(-1)}\mathbb{E}^{A}_{B_3}|r}\big]\\
&-\rme^{+\rmi p_1+\rmi p_2}\rbraket{l|\mathbb{E}^{B_2}_A|r}\rbraket{l|\mathbb{E}^{A}_{B_3}(\one-\rme^{+\rmi p_1}\mathbb{E})^{(-1)}(\one-\rme^{+\rmi p_1+\rmi p_2}\mathbb{E})^{(-1)}\mathbb{E}^{B_1}_A|r}\\
&-\rme^{-\rmi p_1-\rmi p_2}\rbraket{l|\mathbb{E}^{B_2}_A|r}\rbraket{l|\mathbb{E}^{B_1}_{A}(\one-\rme^{-\rmi p_1}\mathbb{E})^{(-1)}(\one-\rme^{-\rmi p_1-\rmi p_2}\mathbb{E})^{(-1)}\mathbb{E}^{A}_{B_3}|r}\\
&-\rbraket{l|\mathbb{E}^{A}_{B_3}|r}\big[\rme^{+\rmi p_2}\rbraket{l|\mathbb{E}^{B_1}_{A}(\one-\rme^{+\rmi p_2}\mathbb{E})^{(-1)}\mathbb{E}^{B_2}_{A}|r}+ \rme^{+\rmi p_1} \rbraket{l|\mathbb{E}^{B_2}_{A}(\one-\rme^{+\rmi p_1}\mathbb{E})^{(-1)}\mathbb{E}^{B_1}_{A}|r}\big]\\
&-\rme^{-\rmi p_2}\rbraket{l|\mathbb{E}^{A}_{B_3}|r}\rbraket{l|\mathbb{E}^{B_2}_{A}(\one-\rme^{+\rmi p_1}\mathbb{E})^{(-1)}(\one-\rme^{-\rmi p_2}\mathbb{E})^{(-1)}\mathbb{E}^{B_1}_A|r}\\
&-\rme^{-\rmi p_1}\rbraket{l|\mathbb{E}^{A}_{B_3}|r}\rbraket{l|\mathbb{E}^{B_1}_{A}(\one-\rme^{+\rmi p_2}\mathbb{E})^{(-1)}(\one-\rme^{-\rmi p_1}\mathbb{E})^{(-1)}\mathbb{E}^{B_2}_{A}|r}\\
&+2\rbraket{l|E^{B_1}_A|r}\rbraket{l|E^{B_2}_A|r} \rbraket{l|E^{A}_{B_3}|r}\\
&+2\pi \delta(p_1) \rbraket{l|\mathbb{E}^{B_1}_A|r}\big[ \rbraket{l|\mathbb{E}^{B_2}_{B_3}|r} + \rme^{+\rmi p_2}\rbraket{l|\mathbb{E}^{A}_{B_3}(\one-\rme^{+\rmi p_2}\mathbb{E})^{(-1)}\mathbb{E}^{B_2}_{A}|r}\\
&\qquad \qquad\qquad\qquad\qquad+ \rme^{-\rmi p_2} \rbraket{l|\mathbb{E}^{B_2}_{A}(\one-\rme^{-\rmi p_2}\mathbb{E})^{(-1)}\mathbb{E}^{A}_{B_3}|r}-\rbraket{l|E^{B_2}_A|r} \rbraket{l|E^{A}_{B_3}|r}\big]\\
&+2\pi \delta(p_2) \rbraket{l|\mathbb{E}^{B_2}_A|r}\big[ \rbraket{l|\mathbb{E}^{B_1}_{B_3}|r} + \rme^{+\rmi p_1}\rbraket{l|\mathbb{E}^{A}_{B_3}(\one-\rme^{+\rmi p_1}\mathbb{E})^{(-1)}\mathbb{E}^{B_1}_{A}|r}\\
&\qquad \qquad\qquad\qquad\qquad+ \rme^{-\rmi p_2} \rbraket{l|\mathbb{E}^{B_1}_{A}(\one-\rme^{-\rmi p_1}\mathbb{E})^{(-1)}\mathbb{E}^{A}_{B_3}|r}-\rbraket{l|E^{B_1}_A|r} \rbraket{l|E^{A}_{B_3}|r}\big]\\
&+2\pi \delta(p_1+p_2) \rbraket{l|\mathbb{E}^{A}_{B_3}|r}\big[\rme^{+\rmi p_2}\rbraket{l|\mathbb{E}^{B_1}_{A}(\one-\rme^{+\rmi p_2}\mathbb{E})^{(-1)}\mathbb{E}^{B_2}_{A}|r}\\
&\qquad \qquad\qquad\qquad\qquad+ \rme^{+\rmi p_1} \rbraket{l|\mathbb{E}^{B_2}_{A}(\one-\rme^{+\rmi p_1}\mathbb{E})^{(-1)}\mathbb{E}^{B_1}_{A}|r}-\rbraket{l|E^{B_1}_A|r}\rbraket{l|E^{B_2}_A|r}\big]\\
&+(2\pi)^2 \delta(p_1)\delta(p_2) \rbraket{l|E^{B_1}_A|r}\rbraket{l|E^{B_2}_A|r} \rbraket{l|E^{A}_{B_3}|r}\bigg\}
\end{split}\label{eq:umps:overlapgammaphi}
\end{equation}
Since this subsection is concerned with the translation invariant manifold $\varM_{\mathrm{uMPS}}$, we restrict to the momentum zero states $p_1=p_2=p_3=0$. Clearly, then, the expressions above have a diverging prefactor $2\pi \delta(0)=\lvert \mathbb{Z}\rvert$ corresponding to the infinite number of sites. This is compensated by an factor $\lvert \mathbb{Z}\rvert^{-1}$ in the pseudo-inverse metric $\tilde{g}^{k\overline{m}}$ featuring in the definition of $\tilde{\Gamma}_{ij}^{\ \;k}$. However, Eq.~\eqref{eq:umps:overlapgammaphi} also contains additional divergences on the last four lines within the square brackets, resulting from the disconnected contributions of the transfer matrix. These divergences are precisely cancelled by the other terms in Eq.~\eqref{eq:var:levicivitaaux} resulting in a well-defined and finite Levi-Civita connection for the manifold $\varM_{\mathrm{uMPS}}$.

Many other terms in Eq.~\eqref{eq:umps:overlapgammaphi} can be eliminated by using $B_k$'s ($k=1,2,3$) that live within the horizontal space $\tilde{\mathbb{B}}_{\mathrm{uMPS}}=\tilde{\mathbb{B}}_{0}$ defined by the left or right gauge fixing conditions in Eq.~\eqref{eq:umps:leftgaugeuB} or \eqref{eq:umps:rightgaugeuB}. However, since the Levi-Civita connection is not a proper tensor, we do not expect it to be invariant under gauge transformations $B_k'=B_k + \tilde{\mathscr{N}_0}(\alpha_k,x_k)$. Put differently, the linear map $\mathbb{A}_{\mathrm{uMPS}}\to\hilbert$ obtained by fixing either $B_1$ or $B_2$ in the definition of $\ket{\Upsilon(B_1,B_2;A)}$ does not have $\tilde{\mathbb{N}}_{0}$ as kernel. For example, it can easily be checked that
\begin{displaymath}
\ket{\Upsilon(B_1,A;A)}=\big[2\pi\delta(0)-1\big] \ket{\Phi(B_1;A)}.
\end{displaymath}
By introducing an infinitesimal gauge transformation $G=\exp(\epsilon x)$ in $\ket{\Phi(B^{(G)};A^{(G)})}=\ket{\Phi(B;A)}$, we also obtain
\begin{displaymath}
\ket{\Upsilon(B,\mathscr{N}^{(A)}(x); A}= - \ket{\Phi(\mathscr{N}^{(B)}(x);A)}.
\end{displaymath}
Hence, under a gauge transformation $B_2'=B_2+\tilde{\mathscr{N}}(\alpha,x)$, we obtain
\begin{equation}
\ket{\Upsilon(B_1,B_2';A)}=\ket{\Upsilon(B_1,B_2;A)}-\alpha (2\pi\delta(0)-1)\ket{\Phi(B_1;A)}-\ket{\Phi(\mathscr{N}^{(B)}(x);A}.\label{eq:umps:upsilontransform}
\end{equation}
Note that the additional contributions are contained within the tangent space $T_{\ket{\Psi(A)}}\mathcal{M}_{\mathrm{uMPS}}$. 

Having a Levi-Civita connection at hand, we can now introduce a covariant derivative $\nabla_i$, and use this to define the space of first and second order derivatives properly. Acting with $\nabla_i$ a first time on a scalar function
\begin{displaymath}
\tilde{O}(\overline{A},A)=\frac{\braket{\Psi(\overline{A})|\operator{O}|\Psi(A)}}{\braket{\Psi(\overline{A})|\Psi(A)}}
\end{displaymath}
 on $\tilde{\varM}_{\mathrm{uMPS}}$, allows one to recognize the orthogonal complement of the standard tangent vectors 
\begin{equation}
\ket{\Phi'(B;A)}=\Big[\hat{1}-\hat{P}_{0}(\overline{A},A)\Big] \ket{\Phi(B;A)}.\label{eq:umps:defphiprime}
\end{equation}
Here we have introduced the projector onto the uMPS $\ket{\Psi(A)}$ as
\begin{equation}
\hat{P}_{0}(\overline{A},A)=\frac{\ket{\Psi(A)}\bra{\Psi(\overline{A})}}{\braket{\Psi(\overline{A})|\Psi(A)}}.
\end{equation}
Acting on $\nabla_i \tilde{O}(\overline{A},A) = \partial_i \tilde{O}(\overline{A},A)$ a second time with $\nabla_j$, results in
\begin{displaymath}
\nabla_j\nabla_i \tilde{O}(\overline{A},A)=\partial_i \partial_j \tilde{O}(\overline{A},A)-\tilde{\Gamma}_{ij}^{\ \;k} \partial_k \tilde{O}(\overline{A},A)
\end{displaymath}
from which we can infer the following covariant definition for double tangent vectors
\begin{equation}
\begin{split}
\ket{\Upsilon'(B_1,B_2;A)}=&\Big[\hat{1}-\hat{P}_0(\overline{A},A)\Big]\Big[\hat{1}-\hat{P}_{T_{\ket{\Psi(A)}}\tilde{\varM}_{\mathrm{uMPS}}}(\overline{A},A)\Big]\\
&\quad\times\bigg[\ket{\Upsilon(B_1,B_2;A)}-\ket{\Phi(B_1;A)} \frac{\braket{\Psi(\overline{A})|\Phi(B_2;A)}}{\braket{\Psi(\overline{A})|\Psi(A)}}\\
&\qquad\qquad\qquad\qquad\qquad-\ket{\Phi(B_2;A)} \frac{\braket{\Psi(\overline{A})|\Phi(B_1;A)}}{\braket{\Psi(\overline{A})|\Psi(A)}}\bigg].
\end{split}\label{eq:umps:defupsilonprime}
\end{equation}
We have now also introduced the projector onto the tangent space
\begin{equation}
\hat{P}_{T_{\ket{\Psi(A)}}\tilde{\varM}_{\mathrm{uMPS}}}(\overline{A},A)=\ket{\partial_k\Psi(A)} \tilde{g}^{k\overline{m}}(\overline{A},A) \bra{\opartial_{\overline{m}}\Psi(\overline{A})}.
\end{equation}
If $B_1$ and $B_2$ are such that $\braket{\Psi(\overline{A})|\Phi(B_1;A)}=\braket{\Psi(\overline{A})|\Phi(B_2;A)}=0$, then the states $\ket{\Upsilon'(B_1,B_2;A)}$ are equal to that part of $\ket{\Upsilon(B_1,B_2;A)}$ that is orthogonal to both the original uMPS $\ket{\Psi(A)}$ and all of its tangent vectors. Eq.~\eqref{eq:umps:upsilontransform} illustrates that the states $\ket{\Upsilon'(B_1,B_2;A}$ are invariant under gauge transformations of the $B_k$'s ($k=1,2$), \textit{i.e.} the linear maps obtained by fixing either $B_1$ or $B_2$ have the vertical subspace $\tilde{\mathbb{N}}_{\mathrm{uMPS}}$ as kernel.

\subsection{Riemann curvature tensor, Ricci tensor and scalar curvature}
\label{ss:umps:curvature}
To complete the Riemannian description of the manifold $\varM_{\mathrm{uMPS}}\subset \hilbert$ or $\tilde{\varM}_{\mathrm{uMPS}}\subset P(\hilbert)$, we now also compute the Riemann curvature tensor and its derivates. It can be shown that Eq.~\eqref{eq:var:riemmanprojective} results in
\begin{equation}
\begin{split}
B_1^i \overline{B}_2^{\overline{\jmath}} B_3^k \overline{B}_4^{\overline{l}} \tilde{R}_{i\overline{\jmath}k\overline{l}}=&\frac{\braket{\Upsilon'(\overline{B}_2,\overline{B}_4;\overline{A})|\Upsilon'(B_1,B_3;A)}}{\braket{\Psi(\overline{A})|\Psi(A)}}\\
&\qquad-\frac{\braket{\Phi'(\overline{B}_2;\overline{A})|\Phi'(B_1;A)}\braket{\Phi'(\overline{B}_4;\overline{A})|\Phi'(B_3;A)}}{\braket{\Psi(\overline{A})|\Psi(A)}^2}\\
&\qquad-\frac{\braket{\Phi'(\overline{B}_2;\overline{A})|\Phi'(B_3;A)}\braket{\Phi'(\overline{B}_4;\overline{A})|\Phi'(B_1;A)}}{\braket{\Psi(\overline{A})|\Psi(A)}^2}.
\end{split}
\label{eq:umps:riemanncurvature}
\end{equation}
Given the remarks at the end of the previous subsection, the Riemann tensor $\tilde{R}_{i\overline{\jmath}k\overline{l}}$ is gauge-invariant. An explicit expression of the equation above is given in Appendix~\ref{a:riemannexplicit}, but only for $B_k$'s ($k=1,\ldots,4$) satisfying the left gauge fixing condition Eq.~\eqref{eq:umps:leftgaugeuB}, which allows us to cancel many terms.

From the Riemann curvature tensor we can also define the Ricci tensor as
\begin{equation}
\widetilde{\text{Ric}}_{i\overline{\jmath}}=\tilde{R}^k_{\ ki\overline{\jmath}} = \tilde{g}^{k\overline{l}} \tilde{R}_{k\overline{l}i\overline{\jmath}}=-\partial_i\opartial_{\overline{\jmath}} \log \det [ \tilde{g} ].
\end{equation}
One also defines the \emph{Ricci form} as
\begin{equation}
\tilde{\mathcal{R}}=\rmi\ \widetilde{\text{Ric}}_{i\overline{\jmath}}\ \rmd z^i \wedge \overline{\rmd z}^{\overline{\jmath}},
\end{equation}
which is a real and closed two-form, whose cohomology class corresponds ---up to a constant factor--- to the first Chern class of the canonical line bundle\cite{Moroianu:2007uq}.

Finally, the scalar curvature is obtained by also contracting the Ricci tensor, resulting in
\begin{equation}
\tilde{S}=\tilde{g}^{i\overline{\jmath}}\widetilde{\text{Ric}}_{i\overline{\jmath}}=\tilde{g}^{i\overline{\jmath}}\tilde{g}^{k\overline{l}} \tilde{R}_{i\overline{\jmath}k\overline{l}}.
\end{equation}
The physical significance of the curvature of a variational manifold in relation to the approximation error made by reducing a state in Hilbert space to the manifold was discussed in great detail by \citet{Sidles:2009fk}, and we refer to this publication for more information. 

\section{Summary and outlook}
This article presents a thorough discussion of the mathematical structure of the MPS representation of states in either affine or projective Hilbert space using the language of fiber bundles and complex geometry. We discussed both generic MPS on finite chains with open boundary conditions and translation invariant MPS on chains with periodic boundary conditions or in the thermodynamic limit.

By restricting to the so-called subset of full rank MPS (open boundary conditions) or injective MPS (periodic boundary conditions), we were able to identify this representation with a principal fiber bundle. The variational parameters live in the bundle space. The physical states encoded by the variational parameters are left invariant under a well-understood set of gauge transformations and should therefore be identified with points in the base space, \textit{i.e.} the quotient space of the bundle space and the structure group (gauge group). This identification is bijective, and standard theorems of fiber bundle literature automatically imply that the set of MPS can therefore be given the structure of a (complex) manifold. Since this manifold is embedded in an affine or projective Hilbert space, which is a K\"{a}hler manifold, the manifold of MPS is also a K\"{a}hler manifold. The corresponding K\"{a}hler metric can be obtained by inducing the standard metric of Hilbert space.

A major part of this paper has focussed on the tangent space to the manifold of MPS. This linear subspace of Hilbert space has recently been proven interesting in both the study of time-evolution\cite{2011arXiv1103.0936H} and as a variational ansatz of elementary excitations\cite{2012PhRvB..85j0408H,2012PhRvB..85c5130P}. The gauge invariance of the MPS representation implies that not all partial derivatives with respect to the variational parameters produce linearly independent tangent states of the manifold. Within the fiber bundle context, the tangent map from the tangent bundle of parameter space to the tangent bundle of the base manifold has a non-trivial kernel that is referred to as the vertical subspace. A unique representation of tangent vectors requires the introduction of a principal bundle connection, which defines a complementary horizontal subspace. This can be understood as a canonical representation for MPS tangent vectors, where the connection acts as a gauge fixing prescription. It can be constructed in such a way that many physical expectation values involving the tangent states simplify tremendously. In particular, for every given base point, there exist at least two canonical representations that transform the metric at that point into the unit matrix. While this gauge fixing prescription had already been constructed in previous papers\cite{2011arXiv1103.0936H,2012PhRvB..85j0408H}, we have now shown that it satisfies the required criteria for being a principal bundle connection, so that it transforms equivariantly under gauge transformations. 

Given the physical relevance of the MPS tangent space, its rigorous constructions helps to identify where the naive description breaks down. Non-injective MPS correspond to singular points of the variational set and require a special treatment. However, it is quite trivial to generalize the present construction to the case of $\mathsf{G}$-injective MPS \cite{2010AnPhy.325.2153S}, which will have a smaller structure group given by the quotient group $\mathsf{PGL}(D,\mathbb{C})/\mathsf{G}$. We also expect that our construction trivially generalizes to the set of tree tensor networks\cite{2006PhRvA..74b2320S}, and possibly to the set of injective or $\mathsf{G}$-injective projected entangled pair states\cite{2010AnPhy.325.2153S}.

Aside from the already established applications in studying time-evolution and excitations, we believe that the concepts developed in this paper constitute a basis for many additional developments. For example, it has already been understood that the properties of the structure group $\mathsf{PGL}(D,\mathbb{C})$ are relevant in the classification of possible phases that gapped quantum systems can exhibit \cite{2011PhRvB..83c5107C,2011PhRvB..84p5139S}. A further study of the topological properties of either the structure group or the manifold itself might unveil additional information. On a more general level, by using the rules of geometric quantum mechanics\cite{Giachetta:2011uq,2001JGP....38...19B,Ashtekar:1997kx}, a complete theory of quantum mechanics can be formulated in which the usual projective Hilbert space is replaced by the smaller manifold of MPS. The K\"{a}hler structure of the manifold gives rise to an invariant volume measure that can be used in order to describe mixed states as probability measures over the manifold of MPS. On the more practical side, it has been understood that standard optimization methods such as conjugate gradient or Newton's method benefit greatly from taking geometrical properties of the underlying search space into account\cite{Absil:2009fk}, by replacing ordinary derivatives by covariant derivatives and line searches by searches along geodesic. Clearly, the Levi-Civita connection and the notion of parallel transport play a major role in this development. Applying these methods to the manifold $\varM_{\mathrm{(u)MPS}}$ might result in new algorithms for finding ground states whose efficiency is less susceptible to the magnitude of the gap in the system.

\begin{acknowledgements}
J.H.~greatly acknowledges inspiring discussions with Tom Mestdag and Eduardo Garc\'{i}a Tora\~{n}o and valuable comments of Bart Vandereycken. Christine Tobler is acknowledged for pointing out related work in the field of numerical mathematics, in particular Ref.~\onlinecite{Holtz:2012kx,Uschmajew:2012vn}. Research supported by Research Fund Flanders (M.M.), by the EU grants QUERG and QFTCMPS, by the FWF SFB grants FoQuS and ViCoM and by the cluster of excellence EXC 201 Quantum Engineering and Space-Time Research.
\end{acknowledgements}

\appendix
\section{Graphical notation}
\label{a:graphnot}
The tensor network community has grown accustomed to using a graphical notation to represent tensor networks and visualize tensor identities that need to be imposed or are automatically satisfied. Any shape can in principle be used to represent a tensor. The tensor indices are represented by wires that can range over their respective values. When two wires exiting from two different tensors are connected, this corresponds to a contraction of the corresponding indices.

We now illustrate that such a graphical representation is also very natural in the context of MPS tangent vectors. FIG.~\ref{fig:graph:mps}(a) represents a basic tensor $A$ that could be an element of $\mathbb{A}_{\mathrm{uMPS}}$ and would then feature in the definition of a uniform MPS. The bottom wire corresponds to the physical indices, whereas the left and right wires correspond to the row and column indices of the matrix $A^s$ for a given value $s=1,\ldots,q$ of the physical index. FIG.~\ref{fig:graph:mps}(b) represents the definition of the transfer matrix $\voperator{E}=\sum_{s=1}^q A^s \otimes \overline{A}^s$, corresponding to a contraction of the physical index. Note that complex conjugation is not explicitly denoted in the graphical representation, but is implied by mirroring the tensor $A$ so as to have the physical index as an upper wire. Assuming that $A$ is properly normalized such that the largest eigenvalue of the transfer operator $\voperator{E}$ is $1$, we can graphically denote the eigenvalue equation for the left and right eigenvectors $\rbra{l}$ and $\rket{r}$ corresponding to Hermitian matrices $l$ and $r$ as in FIG.~~\ref{fig:graph:mps}(c) and \ref{fig:graph:mps}(d).

\begin{figure}
\begin{center}
\includegraphics{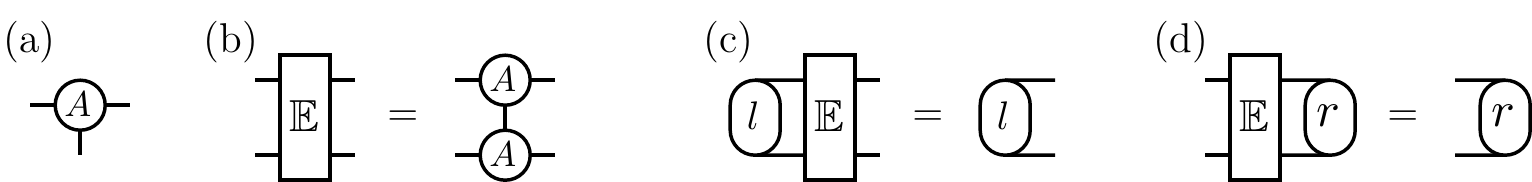}
\caption{Graphical illustration of the basic definitions involving a uniform MPS: (a) the elementary tensor $A\in\mathbb{A}_{\mathrm{uMPS}}$ that forms the basic building block of the uMPS; (b) definition of the transfer matrix $\voperator{E}$ obtained by contracting $A$ and its complex conjugate along the physical index; (c,d) eigenvalue equation for the left (c) and right (d) density matrices $l$ and $r$ under the assumption that $A$ is properly normalized such that $\voperator{E}$ has largest eigenvalue $1$.}
\label{fig:graph:mps}
\end{center}
\end{figure}

This graphical notation is particularly convenient to define \textit{e.g.} the left canonical form for the tensor $B$ used in the representation of tangent vectors $\ket{\Phi_p(B)}$, as is shown in FIG.~\ref{fig:graph:tangent}. In addition, we can expand full expectation values such as the (squared) norm $\braket{\Phi_p(B)|\Phi_p(B)}$ in terms of these diagrams, as in FIG.~\ref{fig:graph:metric}, and easily infer how this expression simplifies by using the defining equations of the tensor $V_L$ in FIG.~\ref{fig:graph:tangent}(b) and \ref{fig:graph:tangent}(c).

\begin{figure}
\begin{center}
\includegraphics{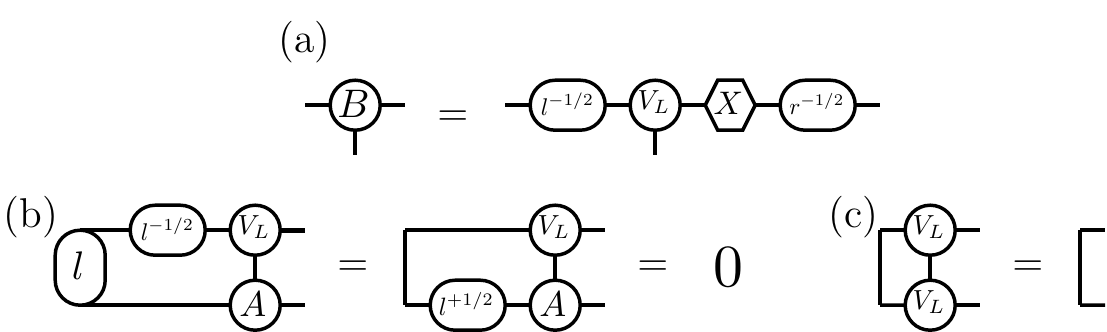}
\caption{Graphical representation of the left canonical form for the parameterization of tangent vectors to the manifold of uniform MPS $B=\tilde{\mathscr{B}}_p(x)$ according to Eq.~\eqref{eq:umps:defrepleft} (a), where the tensor $V_L$ is determined by the equations denoted in (b) and (c).}
\label{fig:graph:tangent}
\end{center}
\end{figure}

\begin{figure}
\begin{center}
\includegraphics[width=\textwidth]{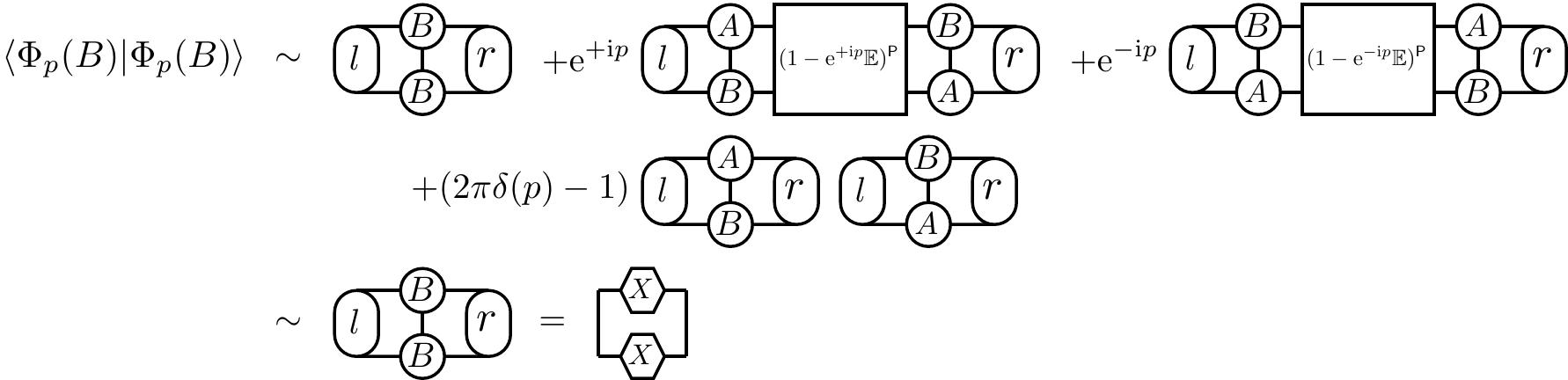}
\caption{Graphical representation of the norm of a tangent vector $\ket{\Phi_p(B)}$ [up to the diverging factor $2\pi \delta(0)$] and its simplification after the left canonical form of FIG.~\ref{fig:graph:tangent} has been inserted.}
\label{fig:graph:metric}
\end{center}
\end{figure} 

\section{The Riemann curvature tensor}
\label{a:riemannexplicit}
The Riemann curvature tensor of the manifold of (projective) uniform MPS at a point $\ket{\Psi(A)}$ was given in the main text in Eq.~\eqref{eq:umps:riemanncurvature}, using the covariantly defined states $\ket{\Phi'(B;A)}$ and $\ket{\Upsilon'(B_1,B_2;A)}$. Its full expression contains many terms, most of which are to ensure that $\tilde{R}_{i\overline{\jmath}k\overline{l}}$ acts as a proper gauge-invariant tensor, so that it is zero whenever one of the indices is contracted with an element of the vertical subspace $\tilde{\mathbb{N}}_{\mathrm{uMPS}}$. If we restrict to contraction with elements $B_1$, $B_2$, $B_3$ and $B_4$ within the horizontal subspace defined by either the left or right gauge fixing conditions [Eq.~\eqref{eq:umps:leftgaugeuB} and Eq.~\eqref{eq:umps:rightgaugeuB} respectively], then many of these terms cancel automatically. If for example the left gauge fixing conditions are satisfied, the resulting tensor elements are shown in FIG.~\ref{fig:graph:riemann}. The state $\ket{\Upsilon'(B_1,B_2)}$ contains the ordinary second derivative $\ket{\Upsilon(B_1,B_2)}$ minus its projection its projection onto the first tangent space. The former produces the first six lines, whereas the latter corresponds to lines number 7 and 8, and is formulated using the tensor $V_L$ which was defined graphically in the previous appendix. The last line corresponds to the disconnected contributions to the Riemann curvature tensor.

\begin{figure}
\begin{center}
\includegraphics[width=\textwidth]{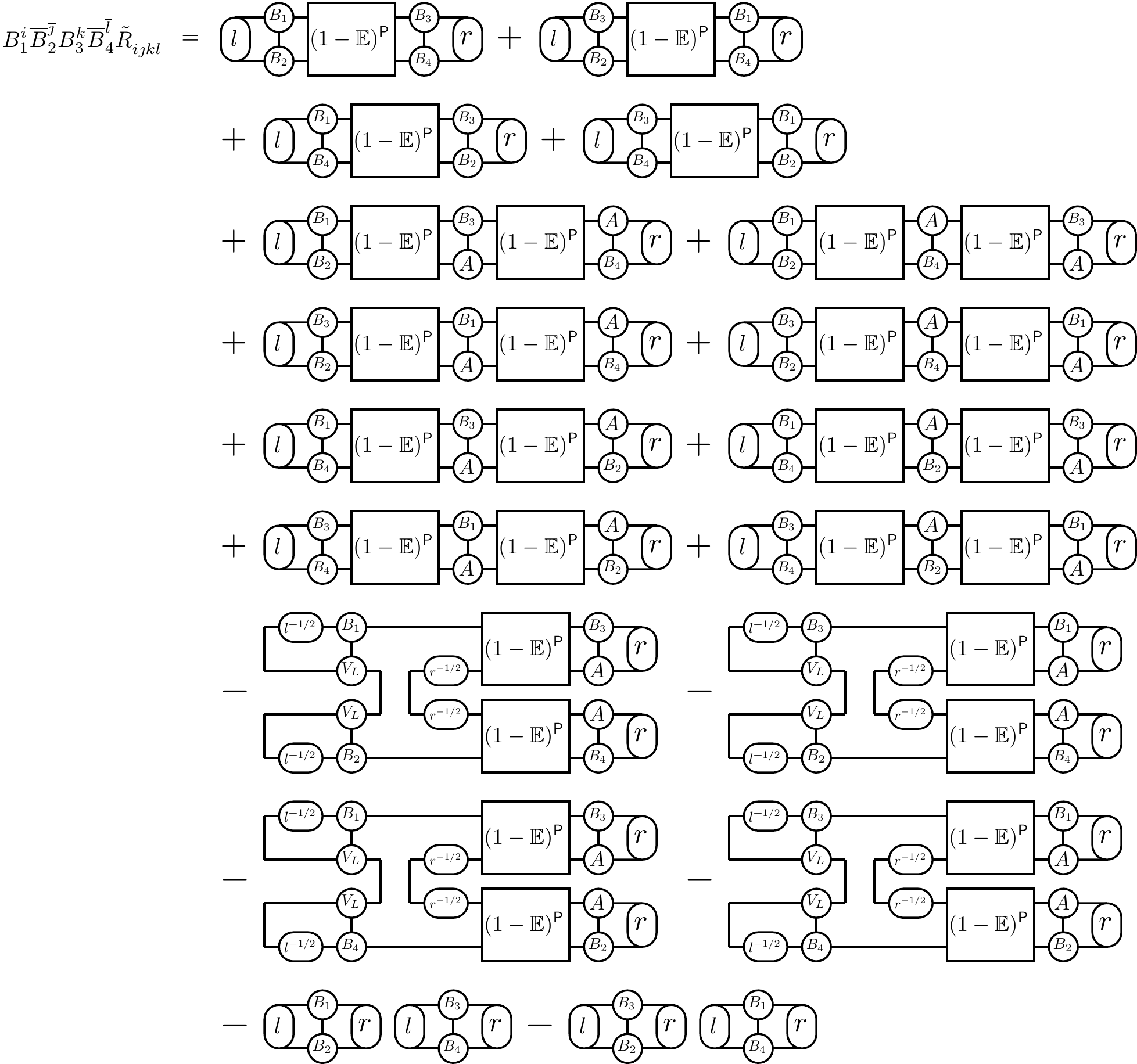}
\caption{Graphical representation of the Riemann curvature tensor $\tilde{R}_{i\overline{\jmath}k\overline{l}}$ when contracted with vectors $B\in\tilde{\mathbb{B}}_{\mathrm{uMPS}}$ satisfying the left gauge fixing condition of Eq.~\eqref{eq:umps:leftgaugeuB}.}
\label{fig:graph:riemann}
\end{center}
\end{figure}

\bibliography{paperslibrary,manuallibrary,books}
\bibliographystyle{aipauth4-1}

\end{document}